\begin{document}

\newtheorem{lem}{Lemma}
\newtheorem{prop}{Proposition}
\newtheorem{cor}{Corollary}
\newtheorem{remark}{Remark}
\newtheorem{defin}{Definition}
\newtheorem{thm}{Theorem}

\newcounter{MYtempeqncnt}

\title{Multi-user Linear Precoding for Multi-polarized Massive MIMO System under Imperfect CSIT}

\author{Jaehyun Park,~\IEEEmembership{Member,~IEEE,}
     Bruno Clerckx,~\IEEEmembership{Member,~IEEE,}
\thanks{A part of this work was published in the 22nd European Signal Processing Conference EUSIPCO 2014.}
\thanks{J. Park is with the Department of Electronic Engineering, Pukyong National University, Republic of Korea. B. Clerckx is with the Department of Electrical and Electronic Engineering, Imperial College London, United Kingdom and also with School
of Electrical Engineering, Korea University (e-mail:jaehyun@pknu.ac.kr, b.clerckx@imperial.ac.uk). B. Clerckx is the corresponding author.}\thanks{This work is partially supported by Huawei Technologies Co., Ltd.}
}

%\author{\authorblockN{Jaehyun Park\authorrefmark{1}, Byung Jang Jeong\authorrefmark{1}, and Yunju Park\authorrefmark{2}}
%\authorblockA{\authorrefmark{1}Broadcasting and
%Telecommunications Convergence Research Laboratory\\
%Electronics and Telecommunications Research Institute (ETRI),
%Daejeon, Korea\\
%Email: \{jhpark00, shwang, bjjeong\}@etri.re.kr}
%\authorblockA{\authorrefmark{2}Department of Mathematics and Computer Science\\
%Korea Science Academy of Korea Advanced Institute of Science and Technology (KAIST), Busan, Korea\\
%Email: yunjupark@kaist.ac.kr}}
% make the title area
\maketitle

\begin{abstract}
The space limitation and the channel acquisition prevent Massive MIMO from being easily deployed in a practical setup. Motivated by current deployments of LTE-Advanced, the use of multi-polarized antenna elements can be an efficient solution to address the space constraint. Furthermore, the dual-structured precoding, in which a preprocessing based on the spatial correlation and a subsequent linear precoding based on the short-term channel state information at the transmitter (CSIT) are concatenated, can reduce the feedback overhead efficiently. By grouping and preprocessing spatially correlated mobile stations (MSs), the dimension of the precoding signal space is reduced and the corresponding short-term CSIT dimension is reduced. In this paper, to reduce the feedback overhead further, we propose a dual-structured multi-user linear precoding, in which the subgrouping method based on co-polarization is additionally applied to the spatially grouped MSs in the preprocessing stage. Furthermore, under imperfect CSIT, the proposed scheme is asymptotically analyzed based on random matrix theory. By investigating the behavior of the asymptotic performance, we also propose a new dual-structured precoding in which the precoding mode is switched between two dual-structured precoding strategies with i) the preprocessing based only on the spatial correlation and ii) the preprocessing based on both the spatial correlation and polarization. Finally, we extend it to 3D dual-structured precoding.
\end{abstract}

\begin{keywords}
Multi-polarized Massive MIMO, Dual structured precoding with long-term/short-term CSIT
\end{keywords}

\section{Introduction}
\label{sec:intro}
By deploying a large number of antenna elements at the base station (BS), high spectral efficiencies can be achieved. Furthermore, because low-complexity linear precoding schemes can be efficiently exploited to serve multiple mobile stations (MSs) simultaneously, Massive MIMO plays a key role in beyond 4G cellular networks (\cite{Marzetta, Marzetta2, Huh, HoydisDebbah} and references therein).

Assuming large scale arrays, several linear single-user and multi-user precoding schemes are asymptotically analyzed by using random matrix theory in \cite{MohLarsson, MohLarsson2, Wagner1, Caire, ZhangWen}. In \cite{MohLarsson}, single-user beamforming in MISO with a large number of transmit antenna elements has been analyzed under the per-antenna constant-envelope constraints and its extension to multi-user MIMO is treated in \cite{MohLarsson2}. In \cite{Wagner1}, by utilizing a Stieltjes transform of random positive semi-definite matrices, the asymptotic signal-to-interference-and-noise ratio (SINR) of linear precoding in a correlated Massive MISO broadcasting channel has been derived under imperfect channel state information at the transmitter (CSIT). In \cite{ZhangWen}, by considering multi-cell downlink with massive transmit antenna elements, the asymptotic SINR is analyzed by using random matrix theory which gives an insight about the cooperative transmission strategy of BSs.

However, in FDD, where channel reciprocity is not exploitable, the multi-antenna channel acquisition at the transmitter prevents Massive MIMO from being easily deployed. To resolve the channel acquisition burden at BS, a dual structured precoding, in which a preprocessing based on the long-term CSIT (mainly, spatial correlation) and a subsequent linear precoding based on the short-term CSIT (that generally has lower dimension than the number of transmit antenna elements) are concatenated, can be exploited to reduce the feedback overhead efficiently \cite{Clerckx_book, Caire}. Note that the long-term CSI is slowly-varying and can be obtained accurately with a low feedback overhead. Because of the low feedback overhead and the attractive performance, the dual structured precoding has been also considered in the 4G and beyond 4G wireless standards \cite{LimYooClerckx, Clerckx_book}. In \cite{Caire}, when MSs are clustered as several spatially correlated groups, joint spatial division and multiplexing scheme has been proposed in which the precoding matrix is composed of the {\it{prebeamforming}} matrix based on the spatial correlation and the {\it{classical precoder}} based on short-term CSIT. Furthermore, its performance is asymptotically analyzed for a large number of transmit antenna elements.

Another challenge of the Massive MIMO system is the antenna space limitation. An increasing number of antenna elements is difficult to be packed in a limited space and if it can be deployed, the high spatial correlation and the mutual coupling among the antennas elements may cause some system performance degradation, especially for a small numbers of active MSs \cite{BrunoKimCon,ClerckxCraeye}. The multi-polarized antenna elements can be one solution to alleviate the space constraint \cite{KimBruno,OestgesBruno}. The multi-polarized antenna systems have been investigated under various communication scenarios including the picocell/microcell \cite{Sulonen}, indoor/outdoor \cite{DongHeath}, and the line of sight (LOS)/non LOS (NLOS) \cite{OestgesBruno} environments. However, despite the importance of polarized antennas in practical deployments, the Massive MIMO system with multi-polarized antenna elements has not been addressed so far together with the multi-user linear precoding. Note that due to space constraints, closely spaced dual-polarized antennas is considered as the first priority deployment scenario for MIMO in LTE-A and is therefore likely to remain so as the number of antennas at the base station increases \cite{Clerckx_book, LimYooClerckx, Boccardi}.

In this paper, we first model the Massive multi-user MIMO system, where the BS is equipped with a large number of multi-polarized antenna elements. For simplicity and practical issue of MS, we consider that MSs are equipped with a single single-polarized antenna element. We then present the dual structured precoding based on long-term/short-term CSIT. As done in the Massive MIMO system with a single-polarized antenna element \cite{Caire}, by grouping the spatially correlated MSs and multiplying the channel matrix of the grouped MSs with the same preprocessing matrix based on spatial correlation, the dimension of the precoding signal space is reduced and the corresponding short-term CSIT dimension can also be reduced from the Karhunen-Loeve transform \cite{Rao_book}. Considering the multi-polarized Massive multi-user MIMO system for the first time, the contributions of this paper are listed below:
\begin{itemize}
\item To reduce the feedback overhead further, we first propose a dual structured linear precoding, in which the subgrouping method based on the polarization is additionally applied to the spatially grouped MSs in the preprocessing stage. That is, by subgrouping co-polarized MSs in each group, we let the MS report the CSI from the transmit antenna elements having the same polarization as its polarization. Then, the MS can further reduce the short-term CSI feedback overhead, compared to the case with the conventional preprocessing based only on the spatial correlation.

\item Under the imperfect CSIT, two different dual structured precodings with preprocessing of i) grouping based only on the spatial correlation (i.e., spatial grouping) and ii) subgrouping based on both the spatial correlation and polarization are asymptotically analyzed based on random matrix theory with a large dimension \cite{Wagner1}. Because, in this paper, the polarized Massive MU-MIMO channel is considered, the asymptotic inter/intra interferences are evaluated over the polarization domain as well as the spatial domain, which addresses a more general (polarized) channel environment compared to \cite{Wagner1, Caire}. Accordingly, the asymptotic performance can be further analyzed in terms of both the polarization and the spatial correlation and therefore, we can understand the performance behavior of dual structured precoding with respect to the long-term CSIT.
\item We then propose a new dual precoding method to switch the precoding mode between two dual structured precoding strategies relying on i) the spatial grouping only and ii) the subgrouping based on both the spatial correlation and polarization.
\item Motivated by 3D beamforming \cite{Caire, LiJi}, we extend the design to the 3D dual structured precoding in which the spatial correlation depends on both azimuth and elevation angles.
\item Finally, we also discuss how we can modify the proposed precoding mode switching scheme when the polarization at the BS and MS is mismatched (due to e.g. random MS orientation).

\end{itemize}
Note that, from the asymptotic results, we can find that even though the proposed dual precoding using the subgrouping can reduce the feedback overhead, its performance can be affected by the cross-polar discrimination (XPD) parameter. Here, XPD refers to the long-term statistics of the antenna elements and channel depolarization that measures the ability to distinguish the orthogonal polarization. That is, under the same feedback overhead, the dual precoding with subgrouping can utilize more accurate CSIT on half of the array, compared with precoding with spatial grouping, but exhibits performance more sensitive to the XPD. The precoding with spatial grouping can only utilize less accurate CSIT, but its performance is not affected by the XPD. Accordingly, we identify the region where the dual precoding with subgrouping outperforms that with spatial grouping. The region depends on the XPD, the spatial correlation, and the short-term CSIT quality, which motivated us to develop the new dual precoding method.

The rest of this paper is organized as follows. In Section
\ref{sec:systemmodel}, we introduce the Massive MIMO system model with multiple multi-polarized antenna elements at the BS and a (either vertically or horizontally) single polarized antenna element at the multiple MSs. In Section \ref{sec:dualprecoding}, we discuss the dual structured precoding based on the long-term/short-term CSIT. In Section \ref{sec:Asymptotic}, we investigate the asymptotic performance of the dual precoding schemes and their behavior over the XPD parameter. Based on the asymptotic results, in
Section \ref{sec:newprecoding}, we propose a new dual structured precoding/feedback. In Section
\ref{sec:simulation}, we provide several discussion and simulation
results, respectively, and in Section \ref{sec:conc} we give our
conclusions.

Throughout the paper, matrices and vectors are represented by bold
capital letters and bold lower-case letters, respectively. The
notations $({\bf A})^{T} $, $({\bf A})^{H} $, $({\bf A})_i$,
$[{\bf A}]_i$, $tr({\bf A})$, and $\det({\bf A})$ denote the
transpose, conjugate transpose, the $i$th row, the $i$th
column, the trace, and the determinant of a matrix ${\bf A}$,
respectively. In addition, $[{\bf A}]_{i:j}$ (resp., $({\bf A})_{i:j}$) denotes the submatrix from the $i$th column (resp., row) to the $j$th column (resp., row) of ${\bf A}$. The matrix norm $\|{\bf A}\|$ and the vector norm $\|{\bf a}\|$ denote the 2-norms of a matrix ${\bf A}$ and a vector ${\bf a}$, respectively. In addition, ${\bf A} \succeq 0$ means that a matrix ${\bf A}$ is positive
semi-definite, $\otimes$ denotes the Kronecker product, and $\odot$ denotes the Hadamard product. The operation $E_{g}[A_g]$ means the average of $A_g$ over index g. Finally, ${\bf I}_{M}$, ${\bf 1}_{M\times N}$, and ${\bf 0}_{M\times N}$ denote the $M \times M$ identity matrix, the $M$ by $N$ matrix with all $1$ entries, and the $M$ by $N$ matrix with all $0$ entries, respectively.

\section{System model}
\label{sec:systemmodel}

\begin{figure}
\begin{center}
\begin{tabular}{c}
\includegraphics[height=4.5cm]{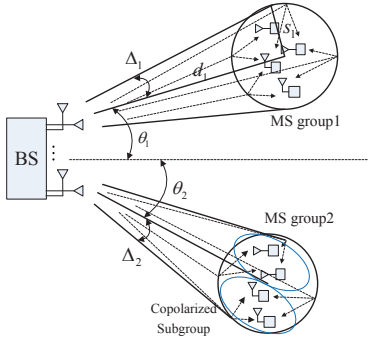}
\end{tabular}
\end{center}
\caption[Fig_system_block]
%>>>> use \label inside caption to get Fig. number with \ref{}
{ \label{Fig_system_block} Block diagram of a dual-polarized multi-user downlink system.}
\end{figure}

We consider a single-cell downlink system with one BS with $M$ polarized antenna elements and $N$ active MSs, each with a single polarized antenna element, where $M$ and $N$ are assumed to be even numbers. As in Fig \ref{Fig_system_block}, the BS has $\frac{M}{2}$ pairs of co-located vertically/horizontally polarized antenna elements and the MSs have a single antenna element with either vertical or horizontal polarization\footnote{Throughout the paper, we consider the dual-polarized antenna elements at the BS for ease of explanation, but our approach can be extended to the tri-polarized case without difficulty. In addition, all the antenna elements are perfectly aligned with either vertical or horizontal polarization. In Section \ref{ssec:polmismatch}, we discuss the polarization mismatch due to the random orientation of MSs.}. Furthermore, since human activity is usually confined in small clustered regions such as buildings, locations of MSs tend to be spatially clustered, e.g., $G$ groups. Then, the received signal ${\bf y}_g\in \mathbb{C}^{N_g}$ of the $g$th group with an assumption of flat-fading channel is given by
\begin{eqnarray}\label{Sys_1}
{\bf y}_g =\left[\begin{array}{c}{\bf y}_g^v\\{\bf y}_g^h\end{array}\right]= {\bf H}_g^H {\bf x} + {\bf n}_g,
\end{eqnarray}
where ${\bf y}_g^v$ and ${\bf y}_g^h$ are the received signal for MSs with vertical and horizontal polarization, respectively, and ${\bf n}_g=\left[\begin{array}{c}{\bf n}_g^v\\{\bf n}_g^h\end{array}\right]$ is a zero-mean complex Gaussian noise vector having a covariance matrix ${\bf I}_{N_g}$, denoted as ${\bf n}_g \sim CN({\bf 0}, {\bf I}_{N_g})$. Here, $N_g$ denotes the number of MSs in the $g$th group. For simplicity, it is assumed that $N_1=...=N_G = \bar N$, where $\bar N$ is even, and both ${\bf y}_g^v$ and ${\bf y}_g^h$ are $\frac{\bar N}{2} \times 1$ vectors. The channel of the $k$th MS in the $g$th group is then given as ${\bf h}_{gk} = [{\bf H}_g]_k$. The $M\times 1$ vector, ${\bf x}$, is the linearly precoded transmit signal expressed as
\begin{eqnarray}\label{Sys_2}
{\bf x} =\sum_{g=1}^G {\bf V}_g{\bf d}_g, \quad {\bf V}_g \in \mathbb{C}^{M\times \bar N},
\end{eqnarray}
where ${\bf V}_g$ and ${\bf d}_g=\left[\begin{array}{c}{\bf d}_g^v\\{\bf d}_g^h\end{array}\right]$ are the linear precoding matrix and the data symbol vector for the MSs in the $g$th group, respectively. The precoded signal ${\bf x}$ should satisfy the power constraint $E[\|{\bf x}\|^2]\leq P$.

\subsection{Channel model}

By using the Karhunen-Loeve transform \cite{Rao_book} and the polarized MIMO channel modeling with infinitesimally small antenna elements described in \cite{KimBruno, Clerckx_book}, the downlink channel to the $g$th group, ${\bf H}_g$, can be represented as
\begin{eqnarray}\label{Sys_3}
\!{\bf H}_g \!=\! \left(\left[\begin{array}{cc}1 & r_{xp}\\r_{xp} & 1  \end{array}\right]\!\otimes \!({\bf U}_g{\bf \Lambda}_g^{\frac{1}{2}}) \right) \left({\bf G}_{g} \odot ({\bf X}\otimes {\bf 1}_{r\times \frac{\bar N}{2}} )\right)
\end{eqnarray}
where $r_{xp}$ is the correlation coefficient between vertically and horizontally polarized antenna elements, ${\bf \Lambda}_g$ is an $r_g\times r_g$ diagonal matrix with the non-zero eigenvalues of the spatial covariance matrix ${\bf R}_g^s$ for the $g$th group\footnote{For simplicity, it is assumed that the spatial covariance matrix is the same for both polarizations.}, where the eigenvalues are assumed to be ordered in decreasing order of magnitude. Note that generally, $r_g \ll M$, and ${\bf U}_g\in \mathbb{C}^{\frac{M}{2}\times r_g} $ is composed of the eigenvectors of ${\bf R}_g^s$ as its columns. The matrix ${\bf G}_{g}$ is defined as
\begin{eqnarray}\label{Sys_3_1}
{\bf G}_{g}  =\left[\begin{array}{cc}{\bf G}_{g}^{vv} & {\bf G}_{g}^{hv}\\{\bf G}_{g}^{vh} & {\bf G}_{g}^{hh}  \end{array}\right],
\end{eqnarray}
and the elements of ${\bf G}_{g}^{pq} \in \mathbb{C}^{r_g\times \frac{\bar N}{2}}$, $p,q \in \{h, v\}$ are i.i.d. complex Gaussian distributed with zero mean and unit variance. The matrix ${\bf X}$ describes the power imbalance between the orthogonal polarizations and is given as
\begin{eqnarray}\label{Sys_4}
{\bf X} = \left[\begin{array}{cc}1 & \sqrt{\chi}\\\sqrt{\chi} & 1  \end{array}\right],
\end{eqnarray}
where the parameter $0\leq {\chi} \leq 1$ is the inverse of the XPD, where $1\leq {\text{XPD}}\leq \infty$.
Note that, based on the reported measurement in \cite{Coldrey,Asplund}, $r_{xp}\approx 0$. Accordingly, (\ref{Sys_3}) can be rewritten as
\begin{eqnarray}\label{Sys_5}\nonumber
{\bf H}_g &=& \left({\bf I}_2\otimes ({\bf U}_g{\bf \Lambda}_g^{\frac{1}{2}}) \right) \left[\begin{array}{cc}{\bf G}_{g}^{vv} & \sqrt{{\chi}}{\bf G}_{g}^{hv}\\\sqrt{{\chi}}{\bf G}_{g}^{vh} & {\bf G}_{g}^{hh}  \end{array}\right]\\
&=&\left[\begin{array}{cc}{\bf U}_g{\bf \Lambda}_g^{\frac{1}{2}}{\bf G}_{g}^{vv} & \sqrt{{\chi}}{\bf U}_g{\bf \Lambda}_g^{\frac{1}{2}}{\bf G}_{g}^{hv}\\\sqrt{{\chi}}{\bf U}_g{\bf \Lambda}_g^{\frac{1}{2}}{\bf G}_{g}^{vh} & {\bf U}_g{\bf \Lambda}_g^{\frac{1}{2}}{\bf G}_{g}^{hh}  \end{array}\right] \nonumber\\&=& \left[\begin{array}{cc}{\bf H}_{g}^{vv} & {\bf H}_{g}^{hv}\\{\bf H}_{g}^{vh} & {\bf H}_{g}^{hh}  \end{array}\right],
\end{eqnarray}
and its covariance matrix is given as
\begin{eqnarray}\label{Sys_5_1}
\!{\bf R}_g &\!\!=\!\!&\left[\begin{array}{cc}(1+ \chi){\bf R}_g^s & {\bf 0}\\{\bf 0} &(1+ \chi){\bf R}_g^s \end{array}\right]\!\\\!&\!\!=\!\!&\! \left[\begin{array}{cc}{\bf R}_g^s & {\bf 0}\\{\bf 0} & \chi{\bf R}_g^s \end{array}\right] + \left[\begin{array}{cc} \chi{\bf R}_g^s & {\bf 0}\\{\bf 0} &{\bf R}_g^s \end{array}\right] = {\bf R}_{gv} +{\bf R}_{gh},\nonumber
\end{eqnarray}
where ${\bf R}_{gv}$ and ${\bf R}_{gh}$ are the covariance matrices of the vertically and the horizontally co-polarized MS subgroups, respectively.

Note that the long-term parameters ${\bf R}_g^s$ and $\chi$ are slowly-varying and assumed to be obtained accurately with a low feedback overhead. However, the short-term CSI parameter ${\bf G}_{g}$ is varying independently over the short-term coherence time. The feedback to the BS of the CSI is imperfect (due to e.g. quantization) and incurs a significant overhead. Accordingly, the imperfect CSIT $\hat{\bf G}_{g}$ available at the transmitter is modeled as
\begin{eqnarray}\label{Sys_7}
\hat{\bf G}_{g} = \sqrt{1- \tau_g}{\bf G}_{g}+ \tau_g{\bf Z}_{g},
\end{eqnarray}
where the elements of ${\bf Z}_{g}$ are complex Gaussian distributed with zero mean and unit
variance and $\tau_g \in [0,1]$ indicates the accuracy of available CSIT for the $g$th group. That is, the case of $\tau_g =0$ implies the perfect CSIT. From (\ref{Sys_5}), $\hat{\bf H}_g$ and $\hat{\bf H}_g^{pp}$ can be defined as the imperfect CSI knowledge of ${\bf H}_g$ and ${\bf H}_g^{pp}$ at the transmitter, respectively, by using $\hat{\bf G}_{g}$. Throughout the paper, it is assumed that $\tau_1=...=\tau_G = \tau$, but it can be easily extended to the scenario that $\tau_i\neq \tau_j$ for $i\neq j$.

\begin{remark}
The imperfect channel model (\ref{Sys_7}) comes from the scenario that when both BS and MS know the long-term statistics perfectly (i.e., ${\bf R}_g^s$), the $k$th MS in the $g$th group quantizes ${\bf g}_{gk} = [{\bf G}_{g}]_k$ by using the random codebook \cite{N_Jindal2} and feeds the codeword index back to the BS. Note that the $k$th MS in the $g$th group can have a much smaller feedback overhead by sending the essential channel information of ${\bf g}_{gk} \in \mathbb{C}^{2r_g \times 1}$ rather than ${\bf h}_{gk}$.
\end{remark}

\section{Dual structured precoding based on long-term/short-term CSIT}\label{sec:dualprecoding}

Thanks to the computational complexity reduction and the feedback overhead reduction (i.e., the dimension reduction using long-term statistics), the dual precoding scheme based on long-term/short-term CSIT has been widely utilized  \cite{Caire, LimYooClerckx, Clerckx_book}. That is, the precoding matrix for the $g$th group is given as
\begin{eqnarray}\label{dual_1}
{\bf V}_g = {\bf B}_g {\bf P}_g,
\end{eqnarray}
where ${\bf B}_g \in \mathbb{C}^{M\times \bar B}$ is the preprocessing matrix based on the long-term channel statistics with $\bar N \leq \bar B \leq 2 r_g \ll M$ and ${\bf P}_g \in \mathbb{C}^{\bar B\times \bar N}$ is the precoding matrix for the effective (instantaneous) channel ${\bf H}_g^H {\bf B}_g$. Here, $\bar B$ is a design parameter that determines the dimension of the transformed channel using the long-term CSIT.
The system equation (\ref{Sys_1}) can then be rewritten as
\begin{eqnarray}\label{dual_2}
{\bf y}_g =  {\bf H}_g^H  {\bf B}_g {\bf P}_g{\bf d}_g + \sum_{l=1, l\neq g}^G  {\bf H}_g^H{\bf B}_l {\bf P}_l{\bf d}_l + {\bf n}_g.
\end{eqnarray}
In what follows, we introduce the conventional dual structured precoding scheme with the preprocessing using block diagonalization (BD) based on spatial correlation and the regularized ZF precoding for each decoupled group. Then, we propose the dual precoding scheme with BD and subgrouping (BDS) exploiting both the spatial correlation and the polarization (another long-term channel statistics parameter).

\subsection{Preprocessing using block diagonalization based on spatial correlation}\label{ssec:preprocess_spatial}
To null out the leakage to other groups, it is desirable that the preprocessing matrix ${\bf B}_g$ based on the spatial correlation is designed as
\begin{eqnarray}\label{dual_3}
{\bf H}_l^H{\bf B}_g \approx {\bf 0}, \text{ for } l \neq g.
\end{eqnarray}
Then, ${\bf P}_g$ in (\ref{dual_2}) can be computed based on the decoupled system model ${\bf y}_g \approx  {\bf H}_g^H  {\bf B}_g {\bf P}_g{\bf d}_g  + {\bf n}_g$ where the inter-group interferences have been eliminated. \footnote{Note that if the BS has a large number of antenna elements and the number of antenna elements at the BS is larger than the number of MSs, we can find the ``approximated'' null-space satisfying (\ref{dual_3}), in general. Furthermore, from \cite{Caire,YinGesbert} considering the one-ring channel model (see also Section VI), if the angle-of-departures (AoDs) of the multipaths from different groups are disjointed, the spatial covariance matrices of different groups become asymptotically orthogonal to each other as the number of antenna elements increases.}

To obtain ${\bf B}_g$ satisfying the condition (\ref{dual_3}), the BD can be utilized. That is, due to the block diagonal structure in (\ref{Sys_5_1}), we first define
\begin{eqnarray}\label{dual_4}
{\bf U}_{-g} = [{\bf U}_{1}^a, ..., {\bf U}_{g-1}^a, {\bf U}_{g+1}^a,...,{\bf U}_{G}^a ] \in \mathbb{C}^{\frac{M}{2}\times \sum_{l\neq g}{r_l^a}},
\end{eqnarray}
where ${\bf U}_{g}^a = [ {\bf U}_{g}]_{1:r_g^a}$ and $r_g^a (\leq r_g)$ is a design parameter reflecting the number of {\it{dominant}} eigenvalues of ${\bf R}_g^s$. That is, if we increase $r_g^a$ close to $r_g$, the BD can find the subspace more orthogonal to the signal subspace spanned by other groups' channel (the perfect orthogonality is guaranteed when $r_g^a=r_g$), while the dimension of corresponding orthogonal subspace decreases as $\frac{M}{2} - \sum_{l\neq g}{r_l^a}$.

The matrix ${\bf U}_{-g}$ in (\ref{dual_4}) then has a singular value decomposition (SVD) as
\begin{eqnarray}\label{dual_5}
\!{\bf U}_{-\!g} \!=\! [{\bf E}_{-\!g}^{(1)}, {\bf E}_{-\!g}^{(0)}]\! \left[\!\begin{array}{cc}\!{\bf \Lambda}_{-\!g}^{(1)} \!&\!\!\\\!\!&\!{\bf \Lambda}_{-\!g}^{(0)}\! \end{array}\!\right]\!{\bf V}_{-\!g}^H, {\bf E}_{-\!g}^{(0)}\! \in \!\mathbb{C}^{\frac{M}{2}\!\times \! \frac{M}{2}\!-\!\sum_{l\!\neq \! g}{r_l^a}},\!
\end{eqnarray}
where ${\bf E}_{-g}^{(1)}$ (respectively, ${\bf E}_{-g}^{(0)}$) is the left singular vectors associated with the $\sum_{l\neq g}{r_l^a}$ dominant (respectively, $\frac{M}{2}-\sum_{l\neq g}{r_l^a}$ non-dominant) singular values ${\bf \Lambda}_{-g}^{(1)}$ (respectively, ${\bf \Lambda}_{-g}^{(0)}$). Then, because $({\bf E}_{-g}^{(0)})^H {\bf U}_{-g}=0$, by defining $\tilde{\bf H}_g= ({\bf I}_2 \otimes {\bf E}_{-g}^{(0)})^H {\bf H}_g$, $\tilde{\bf H}_g$ is orthogonal to the dominant eigen-space spanned by other groups' channel. Note that the covariance matrix of $\tilde{\bf H}_g$ is then given by
\begin{eqnarray}\label{dual_6}
\tilde{\bf R}_g = ({\bf I}_2 \otimes {\bf E}_{-g}^{(0)})^H{\bf R}_g ({\bf I}_2 \otimes {\bf E}_{-g}^{(0)}),
\end{eqnarray}
and by defining $\tilde{\bf R}_g^s = ({\bf E}_{-g}^{(0)})^H{\bf R}_g^s {\bf E}_{-g}^{(0)}$, we have its eigenvalue decomposition (EVD) as
\begin{eqnarray}\label{dual_7}
\tilde{\bf R}_g^s = {\bf F}_{g}\tilde{\bf \Lambda}_{g}{\bf F}_{g}^H,
\end{eqnarray}
where ${\bf F}_{g}$ is the eigenvectors of $\tilde{\bf R}_g^s$. Then by letting ${\bf F}_{g}^{(1)} =  [{\bf F}_{g}]_{1:\frac{\bar{B}}{2}}$, the preprocessing matrix can be given as
\begin{eqnarray}\label{dual_8}
{\bf B}_g = {\bf I}_2 \otimes {\bf B}_{g}^{s}, \quad  {\bf B}_{g}^{s} ={\bf E}_{-g}^{(0)}  {\bf F}_{g}^{(1)}.
\end{eqnarray}
Accordingly, through the preprocessing matrix ${\bf B}_g$, we can transform the transmit signal for the $g$th group into the $\bar{B}$ dimensional dominant eigen-space that is orthogonal to the subspace spanned by other groups' channel. Note that, from (\ref{dual_1}) and (\ref{dual_4}), $\bar B$ and $r_g^a$ should be chosen properly to satisfy the conditions of ${\bar N} \leq {\bar B} \leq 2(\frac{M}{2} - \sum_{l\neq g}{r_l^a})$ and ${\bar B} \leq 2 r_g$. Without loss of generality, we assume that $r_1^a =... = r_G^a= r$ with a fixed $r$ satisfying the above two constraints.

\subsection{Multi-user precoding for each decoupled group}\label{ssec:precoding_decoupled}
Because, from (\ref{dual_2}), the effective channel for the $g$th group is $\bar{\bf H}_g = {\bf B}_g^H {\bf H}_g $, the corresponding covariance matrix is given by
\begin{eqnarray}\label{dual_9}
\bar{\bf R}_g&=&  {\bf B}_g^H{\bf R}_g{\bf B}_g ={\bf B}_g^H({\bf R}_{gv} +{\bf R}_{gh}){\bf B}_g =\bar{\bf R}_{gv}+\bar{\bf R}_{gh}\nonumber\\
&=&(1+ \chi)\left[\begin{array}{cc}({\bf B}_{g}^{s})^H{\bf R}_g^s {\bf B}_{g}^{s}& {\bf 0}\\{\bf 0} &({\bf B}_{g}^{s})^H{\bf R}_g^s {\bf B}_{g}^{s} \end{array}\right].
\end{eqnarray}
Furthermore, due to the preprocessing, the interferences from other groups in (\ref{dual_2}) are almost nulled out. Accordingly, the precoding matrix ${\bf P}_g$ is designed such that the intra-group interferences are nulled out based on the short-term CSIT of the $g$th group. That is, assuming the equal power allocation, the regularized ZF precoding matrix \cite{Caire, HwangClerckx} with imperfect CSIT can be computed as
\begin{eqnarray}\label{dual_10}
{\bf P}_g = {\xi}_g \hat{\bar{{\bf K}}}_g \hat{\bar{{\bf H}}}_g,
\end{eqnarray}
where
\begin{eqnarray}\label{dual_10_1}
 \hat{\bar{{\bf K}}}_g  = \left(\hat{\bar{{\bf H}}}_g\hat{\bar{{\bf H}}}_g^H + \bar B \alpha {\bf I}_{\bar B}  \right)^{-1},\quad \hat{\bar{{\bf H}}}_g = {\bf B}_g^H \hat{{\bf H}}_g.
\end{eqnarray}
Here, $\hat{\bar{{\bf H}}}_g$ is the effective channel estimate that is available at the BS and $ \alpha $ is a regularization parameter. Throughout the paper, it is set as $\alpha= \frac{\bar N}{\bar B P}$, which is equivalent with the MMSE linear filter \cite{Tse_book}. The normalization factor ${\xi}_g$ is then given as
\begin{eqnarray}\label{dual_11}
{\xi}_g^2\! =\! \frac{\bar N \frac{P}{N}}{\frac{P}{N} tr(\hat{\bar{{\bf H}}}_g^H \hat{\bar{{\bf K}}}_g ^H{\bf B}_g^H {\bf B}_g \hat{\bar{{\bf K}}}_g \hat{\bar{{\bf H}}}_g)} \!=\! \frac{\bar N}{tr(\hat{\bar{{\bf H}}}_g^H \hat{\bar{{\bf K}}}_g ^H\hat{\bar{{\bf K}}}_g \hat{\bar{{\bf H}}}_g)},
\end{eqnarray}
where the second equality is due to the fact that ${\bf B}_g^H {\bf B}_g = {\bf I}_{\bar B}$ from (\ref{dual_8}).
Denoting $ \hat{\bar{{\bf h}}}_{gk} = [\hat{\bar{{\bf H}}}_g]_k$ as the effective channel estimate of the $k$th MS in the $g$th group, the SINR of the $k$th MS in the $g$th group with $p$ polarization is then given as shown at the top of the next page.
\begin{figure*}[!t]
\scriptsize
\begin{eqnarray}\label{dual_12}\nonumber
\gamma_{gpk}^{BD} = \frac{\frac{P}{N}{\xi}_g^2 |{\bf h}_{gk}^H{\bf B}_g  \hat{\bar{{\bf K}}}_g \hat{\bar{{\bf h}}}_{gk} |^2 }{\frac{P}{N}\sum_{j\neq k}{\xi}_g^2 |{\bf h}_{gk}^H{\bf B}_g  \hat{\bar{{\bf K}}}_g \hat{\bar{{\bf h}}}_{gj} |^2 + \frac{P}{N}\sum_{l\neq g}\sum_{j}{\xi}_l^2 |{\bf h}_{gk}^H{\bf B}_l  \hat{\bar{{\bf K}}}_l \hat{\bar{{\bf h}}}_{lj} |^2+1} ,\\= \frac{\frac{P}{N}{\xi}_g^2 |{\bf h}_{gk}^H{\bf B}_g  \hat{\bar{{\bf K}}}_g {\bf B}_g^H \hat{\bf h}_{gk} |^2 }{\frac{P}{N}\sum_{j\neq k}{\xi}_g^2 |{\bf h}_{gk}^H{\bf B}_g  \hat{\bar{{\bf K}}}_g {\bf B}_g^H \hat{{\bf h}}_{gj} |^2 + \frac{P}{N}\sum_{l\neq g}\sum_{j}{\xi}_l^2 |{\bf h}_{gk}^H{\bf B}_l  \hat{\bar{{\bf K}}}_l {\bf B}_l^H \hat{{\bf h}}_{lj} |^2+1}.
\end{eqnarray}
\hrulefill \vspace*{2pt}
\end{figure*}
Accordingly, the sum rate is given by
\begin{eqnarray}\label{dual_13}
R_{BD}= \sum_{g=1}^G \sum_{p\in\{v,h\}}\sum_{k=1}^{\bar N/2}\log_2(1+ \gamma_{gpk}^{BD}),
\end{eqnarray}
where the subscript and superscript $BD$ indicate the dual precoding with Block Diagonalization based on spatial correlation.

\subsection{Dual precoding using block diagonalization and subgrouping based on both spatial correlation and polarization}\label{ssec:dual_spatial_XPD}
In Section \ref{ssec:preprocess_spatial}, the preprocessing matrix is computed based only on spatial correlation.
However, when $\chi$ becomes small (i.e., the antenna elements can favorably discriminate the orthogonally polarized signals), the interference signals through the cross-polarized channels can be naturally nulled out. This suggests that we can make the subgroups of co-polarized MSs in each group (see the second MS group in Fig. \ref{Fig_system_block}.) and let the BS precode the signal for the co-polarized subgroup by using the short-term CSIT of the transmit antenna elements having the same polarization with the associated subgroup. That is, from (\ref{Sys_1}) and (\ref{dual_2}), the received signal for the co-polarized subgroup with $p$ polarization, for $p \in\{h, v\}$, in the $g$th group can be written as
\begin{eqnarray}\label{dual_14}
{\bf y}_g^p &\!=\!&  {\bf H}_{gp}^H {\bf B}_{gp} {\bf P}_{gp}{\bf d}_{g}^p + \sum_{\substack{q \in\{h, v\}\\ q\neq p}}{\bf H}_{gp}^H  {\bf B}_{gq} {\bf P}_{gp}{\bf d}_{g}^q \nonumber\\&\!\!&+ \sum_{l=1, l\neq g}^G \sum_{q \in\{h, v\}} {\bf H}_{gp}^H {\bf B}_{lq} {\bf P}_{lq}{\bf d}_{l}^q + {\bf n}_{g}^p,
\end{eqnarray}
where ${\bf H}_{gv}= \left[\begin{array}{c}{\bf H}_{g}^{vv} \\{\bf H}_{g}^{vh} \end{array}\right]$ and ${\bf H}_{gh}= \left[\begin{array}{c} {\bf H}_{g}^{hv}\\{\bf H}_{g}^{hh}  \end{array}\right]$ from (\ref{Sys_5}). Here, ${\bf B}_{gp}$ for $p \in\{h, v\}$ are given as
\begin{eqnarray}\label{dual_15}
{\bf B}_{gv}= \left[\begin{array}{c}{\bf B}_{g}^{s} \\{\bf 0}\end{array}\right], \quad  {\bf B}_{gh}= \left[\begin{array}{c} {\bf 0}\\{\bf B}_{g}^{s} \end{array}\right],
\end{eqnarray}
where ${\bf B}_{g}^{s}$ is given in (\ref{dual_8}). Note that, when $\chi \approx 0$, we can easily find that
\begin{eqnarray}\label{dual_16}
{\bf H}_{lp}^H{\bf B}_{gq} \approx {\bf 0}, \text{ for } p \neq q.
\end{eqnarray}
Furthermore, because ${\bf H}_{gq}$, $q\neq p$ has no influence on ${\bf P}_{gp}$, the MSs do not need to feed back the instantaneous CSI from cross polarized transmit antenna elements at BS. That is, the $k$th MS having vertical (horizontal) polarization in the $g$th group can quantize the first (last) $r$ entries of ${\bf g}_{gk}$ (see also Remark 1) and feed them back to the BS with the feedback amount reduced in half.

The precoding matrix ${\bf P}_{gp}$ is then designed such that the intra-subgroup interferences are nulled out using the co-polarized short-term CSIT. That is, letting $\hat{\bf H}_{gp}$ denote the imperfect CSI knowledge at the transmitter of ${\bf H}_{gp}$, $p \in\{h, v\}$, the regularized ZF precoding matrix with imperfect CSIT can be computed as
\begin{eqnarray}\label{dual_17}
{\bf P}_{gp} = {\xi}_{gp} \hat{\bar{{\bf K}}}_{gp} \hat{\bar{{\bf H}}}_{gp},
\end{eqnarray}
where $ \hat{\bar{{\bf K}}}_{gp}  = \left(\hat{\bar{{\bf H}}}_{gp}\hat{\bar{{\bf H}}}_{gp}^H + \frac{\bar B}{2} \alpha {\bf I}_{\frac{\bar B}{2}}  \right)^{-1}$ and $\hat{\bar{{\bf H}}}_{gp} = {\bf B}_{gp}^H \hat{{\bf H}}_{gp}=({\bf B}_{g}^s)^H \hat{\bf H}_{g}^{pp}$, the effective channel estimate that is available at the BS. The normalization factor ${\xi}_{gp}$ is then given as
\begin{eqnarray}\label{dual_18}
{\xi}_{gp}^2 = \frac{\bar N /2 }{ tr(\hat{\bar{{\bf H}}}_{gp}^H \hat{\bar{{\bf K}}}_{gp}^H  \hat{\bar{{\bf K}}}_{gp} \hat{\bar{{\bf H}}}_{gp})}.
\end{eqnarray}
Assuming equal power allocation, the SINR of the $k$th MS in the subgroup with $p$ polarization of the $g$th group is then given by
\begin{eqnarray}\label{dual_19}
\gamma_{gpk}^{BDS} = \frac{\frac{P}{N}{\xi}_{gp}^2 |{\bf h}_{gpk}^H{\bf B}_{gp}  \hat{\bar{{\bf K}}}_{gp} {\bf B}_{gp}^H \hat{\bf h}_{gpk} |^2 }{IN_{gpk}},
\end{eqnarray}
where
\begin{eqnarray}\label{dual_20}\nonumber
& IN_{pgk} = \frac{P}{N}\sum_{j\neq k}{\xi}_{gp}^2 |{\bf h}_{gpk}^H{\bf B}_{gp}  \hat{\bar{{\bf K}}}_{gp} {\bf B}_{gp}^H \hat{\bf h}_{gpj} |^2 &\\\nonumber&+\frac{P}{N}\sum_{q\neq p}\sum_j {\xi}_{gq}^2 |{\bf h}_{gpk}^H{\bf B}_{gq}  \hat{\bar{{\bf K}}}_{gq} {\bf B}_{gq}^H \hat{\bf h}_{gqj} |^2 &\\&+ \frac{P}{N}\sum_{l\neq g}\sum_q \sum_{j}{\xi}_{lq}^2 |{\bf h}_{gpk}^H{\bf B}_{lq}  \hat{\bar{{\bf K}}}_{lq} {\bf B}_{lq}^H \hat{\bf h}_{lqj} |^2+1&
\end{eqnarray}
and ${\bf h}_{gpk} = [{\bf H}_{gp}]_k$ and $\hat{\bf h}_{gpk} = [\hat{\bf H}_{gp}]_k$, respectively.
Accordingly, the sum rate is given by
\begin{eqnarray}\label{dual_21}
R_{BDS}= \sum_{g=1}^G \sum_{p\in\{v,h\}}\sum_{k=1}^{\frac{\bar N}{2}}\log_2(1+ \gamma_{gpk}^{BDS}),
\end{eqnarray}
where the subscript and superscript $BDS$ indicate the dual precoding with Block Diagonalization and Subgrouping based on both spatial correlation and polarization. Note that because, when $\chi =0$, the interference from cross-polarized groups are perfectly nulled out, we can easily find that
\begin{eqnarray}\label{dual_21_1}
\gamma_{gpk}^{BDS} = \gamma_{gpk}^{BD}.
\end{eqnarray}

\section{Asymptotic performance analysis for Dual precoding methods}\label{sec:Asymptotic}
In \cite{Wagner1}, when the number of transmit antenna elements ($M$) is large, the asymptotic SINR of the regularized ZF precoding has been analyzed in spatially correlated MISO broadcasting systems with uni-polarized antennas under the imperfect CSIT and, in \cite{Caire}, the asymptotic SINR of the dual precoding with BD has been analyzed under the perfect CSIT and the uni-polarized antenna system. In this section, based on random matrix theory results \cite{Wagner1,Hachem,HoydisDebbah}, we first derive the asymptotic SINR for two different dual precoding schemes -- dual precoding with i) BD and ii) BDS under the imperfect CSIT and dual-polarized antenna system. Note that the asymptotic inter/intra interferences are evaluated over the polarization domain as well as the spatial domain, which can encounter a more generalized channel environment with a polarization. Based on the asymptotic results, we analyze the performance as a function of the XPD parameter $\chi$, and propose a new dual precoding/feedback scheme in the next section.

%based on random matrix theory results \cite{Wagner1,Hachem,HoydisDebbah}, we first derive the asymptotic SINR for two different dual precoding schemes -- dual precoding with i) BD and ii) BDS under the imperfect CSIT and dual-polarized antenna system. Note that the asymptotic inter/intra interferences are evaluated over the polarization domain as well as the spatial domain, which can encounter a more generalized channel environment with a polarization. That is, the asymptotic performance can be analyzed in terms of both the polarization and the spatial correlation Thanks to the structure of the dual polarized antenna elements, we can have a simple asymptotic SINR in Theorem \ref{thm_BD} compared to that for the case of the general antenna covariance matrices \cite{Wagner1}. This gives a useful insight into the behavior of the asymptotic SINR as a function of the polarization parameter in Section \ref{ssec:dualprecoding}

\subsection{Dual precoding with block diagonalization based on spatial correlation}\label{ssec:Asymp_BD}
Before we proceed with the derivation of the asymptotic SINR for the dual precoding with BD, we introduce an important theorem about the asymptotic behavior of a random matrix with a large dimension developed in \cite{Wagner1}.
\begin{thm} (\cite{Wagner1}, Theorem 1)\label{thm_BC} Let ${\bf H}$ be the $M\times N$ matrix, in which each column is a zero-mean complex Gaussian random vector having a covariance matrix ${\bf R}_i$ for $i=1,...,N$. In addition, let ${\bf S}, {\bf Q} \in \mathbb{C}^{M\times M}$ be Hermitian nonnegative definite. Assume $\lim \sup_{M \rightarrow \infty } \sup_{1\leq i\leq N} \| {\bf R }_i\| <\infty$ and ${\bf Q}$ has uniformly bounded spectrum norm. Then, for $z< 0$,
\begin{eqnarray}\label{Asym_1}
\frac{1}{M}tr({\bf Q}({\bf H}{\bf H}^H +{\bf S} - z {\bf I}_M)^{-1} ) - \frac{1}{M}tr({\bf Q} {\bf T}(z)) \overset{M \to \infty}{{\longrightarrow}} 0,
\end{eqnarray}
where
\begin{eqnarray}\label{Asym_2}
{\bf T}(z) = \left(\frac{1}{M}\sum_{j=1}^N\frac{{\bf R}_j}{1+e_{j}(z)} +{\bf S} -z{\bf I}_M  \right)^{-1}.
\end{eqnarray}
Here, $e_{i}(z)$ for $i=1,...,N$ are the unique solution of
\begin{eqnarray}\label{Asym_3}
e_{i}(z) =\frac{1}{M} tr\left({\bf R}_i \left(\frac{1}{M}\sum_{j=1}^N\frac{{\bf R}_j}{1+e_{j}(z)} +{\bf S} -z{\bf I}_M  \right)^{-1} \right),
\end{eqnarray}
which can be solved by the fixed-point algorithm and its convergence is also proved in \cite{Wagner1}.
\end{thm}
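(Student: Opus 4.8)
The plan is to prove the statement by the \emph{deterministic equivalent} method of large random matrix theory, following \cite{Wagner1} (see also \cite{Hachem,HoydisDebbah}), whose two workhorses are: (i) a \emph{trace lemma} --- for $\mathbf{x}$ with i.i.d.\ zero-mean unit-variance entries and a matrix $\mathbf{A}$ independent of $\mathbf{x}$ with uniformly bounded norm, $\mathbf{x}^{H}\mathbf{A}\mathbf{x}$ concentrates around $tr(\mathbf{A})$ with a quantitative moment bound (Bai--Silverstein type); and (ii) a \emph{rank-one perturbation lemma} --- if $\mathbf{C}$ and $\mathbf{C}_{j}$ are resolvents of two Hermitian matrices differing by one rank-one term, then $|tr(\mathbf{A}(\mathbf{C}-\mathbf{C}_{j}))|\le\|\mathbf{A}\|/|z|$. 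Everything is done at a fixed $z<0$, which is exactly what makes $\mathbf{H}\mathbf{H}^{H}+\mathbf{S}-z\mathbf{I}_{M}\succeq|z|\mathbf{I}_{M}\succ\mathbf{0}$, so that the resolvent $\mathbf{C}(z):=(\mathbf{H}\mathbf{H}^{H}+\mathbf{S}-z\mathbf{I}_{M})^{-1}$ and every leave-one-column-out version $\mathbf{C}_{j}(z):=(\mathbf{H}\mathbf{H}^{H}-\mathbf{h}_{j}\mathbf{h}_{j}^{H}+\mathbf{S}-z\mathbf{I}_{M})^{-1}$ have spectral norm at most $1/|z|$ uniformly in $M$; together with $\limsup_{M}\sup_{i}\|\mathbf{R}_{i}\|<\infty$ and the bounded spectra of $\mathbf{Q}$ and $\mathbf{S}$, this supplies all the uniform a priori bounds needed below.

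First I would settle the well-posedness of (\ref{Asym_3}): for $z<0$ the coordinate map sending $(e_{1},\dots,e_{N})$ to $\big(\frac{1}{M}tr(\mathbf{R}_{i}(\frac{1}{M}\sum_{j}\frac{\mathbf{R}_{j}}{1+e_{j}}+\mathbf{S}-z\mathbf{I}_{M})^{-1})\big)_{i}$ is positive, monotone and scalable, hence a standard interference function, so it admits a unique positive fixed point to which the fixed-point iteration converges (as noted in \cite{Wagner1}); equivalently one argues from monotonicity of the underlying Stieltjes transforms. This makes $\mathbf{T}(z)$ in (\ref{Asym_2}) well defined, with $\|\mathbf{T}(z)\|\le 1/|z|$.

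The core of the argument starts from the resolvent identity
\begin{eqnarray}
\mathbf{C}-\mathbf{T} &=& \mathbf{C}\left(\mathbf{T}^{-1}-\mathbf{C}^{-1}\right)\mathbf{T} \nonumber \\
&=& \mathbf{C}\left(\frac{1}{M}\sum_{j}\frac{\mathbf{R}_{j}}{1+e_{j}}-\sum_{j}\mathbf{h}_{j}\mathbf{h}_{j}^{H}\right)\mathbf{T}. \label{proofsketch_1}
\end{eqnarray}
Applying $\frac{1}{M}tr(\mathbf{Q}\cdot)$ and expanding the $\sum_{j}\mathbf{h}_{j}\mathbf{h}_{j}^{H}$, each term $\frac{1}{M}tr(\mathbf{Q}\mathbf{C}\mathbf{h}_{j}\mathbf{h}_{j}^{H}\mathbf{T})=\frac{1}{M}\mathbf{h}_{j}^{H}\mathbf{T}\mathbf{Q}\mathbf{C}\mathbf{h}_{j}$ is rewritten, via Sherman--Morrison $\mathbf{C}\mathbf{h}_{j}=\mathbf{C}_{j}\mathbf{h}_{j}/(1+\mathbf{h}_{j}^{H}\mathbf{C}_{j}\mathbf{h}_{j})$, to expose the resolvent $\mathbf{C}_{j}$ independent of $\mathbf{h}_{j}$; the trace lemma then replaces $\mathbf{h}_{j}^{H}(\cdot)\mathbf{h}_{j}$ by $\frac{1}{M}tr(\mathbf{R}_{j}(\cdot))$, and the rank-one lemma trades $\mathbf{C}_{j}$ for $\mathbf{C}$ at an $O(1/M)$ cost. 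In particular $\mathbf{h}_{j}^{H}\mathbf{C}_{j}\mathbf{h}_{j}$ becomes $\frac{1}{M}tr(\mathbf{R}_{j}\mathbf{C})$, whose deviation from $e_{j}=\frac{1}{M}tr(\mathbf{R}_{j}\mathbf{T})$ is precisely the residual to be bootstrapped away. After these substitutions the $\frac{1}{M}\sum_{j}$ over the $\mathbf{h}_{j}\mathbf{h}_{j}^{H}$ part cancels the $\frac{1}{M}\sum_{j}\frac{\mathbf{R}_{j}}{1+e_{j}}$ part of (\ref{proofsketch_1}) to leading order, leaving $\frac{1}{M}tr(\mathbf{Q}(\mathbf{C}-\mathbf{T}))$ equal to a vanishing error plus a term controlled by $\max_{j}\big|\frac{1}{M}tr(\mathbf{R}_{j}(\mathbf{C}-\mathbf{T}))\big|$ with a coefficient bounded away from $1$ uniformly in $M$ by the stability of the fixed point. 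Taking $\mathbf{Q}=\mathbf{R}_{i}$ then closes an $N\times N$ contraction system for $\frac{1}{M}tr(\mathbf{R}_{i}(\mathbf{C}-\mathbf{T}))$, forcing these to $0$, after which the bound for general $\mathbf{Q}$ follows; this gives convergence in probability. Almost-sure convergence follows from a fourth-moment bound on $\frac{1}{M}tr(\mathbf{Q}\mathbf{C}(z))$ --- obtained from the trace lemma with higher moments, or from a bounded-martingale-difference (Azuma/McDiarmid) estimate along the columns of $\mathbf{H}$ --- together with Borel--Cantelli, and is propagated to all $z<0$ by analyticity of $z\mapsto\frac{1}{M}tr(\mathbf{Q}\mathbf{C}(z))$ and a Vitali/normal-families argument.

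The main obstacle is this self-referential closing step: the quadratic forms $\mathbf{h}_{j}^{H}\mathbf{C}_{j}\mathbf{h}_{j}$ depend on the \emph{full} resolvent through all the other columns, and after decoupling them one must show that the induced coupled linear system for the residuals $\frac{1}{M}tr(\mathbf{R}_{i}(\mathbf{C}-\mathbf{T}))$ is a genuine contraction uniformly in the growing dimension, so that the perturbation of the fixed-point equations (\ref{Asym_3}) stays bounded rather than being amplified --- this is where the quantitative uniqueness/stability of (\ref{Asym_3}) is indispensable. A secondary technicality is that $z$ is real and negative, so no factor $1/|\Im z|$ is available and every estimate must be routed through $\mathbf{H}\mathbf{H}^{H}+\mathbf{S}-z\mathbf{I}_{M}\succeq|z|\mathbf{I}_{M}$, or else the result proved first for $z$ with large positive imaginary part and then continued analytically.
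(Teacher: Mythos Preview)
Your proposal is a correct and detailed sketch of the deterministic-equivalent argument underlying this result, and it follows the same methodology as in \cite{Wagner1}. However, there is nothing to compare against: the paper does not prove Theorem~\ref{thm_BC} at all --- it is stated purely as a quoted result from \cite{Wagner1} (``an important theorem about the asymptotic behavior of a random matrix with a large dimension developed in \cite{Wagner1}'') and then used as a black box in the proofs of Theorems~\ref{thm_BD} and~\ref{thm_BDS}. So your write-up is not so much an alternative to the paper's proof as a reconstruction of the proof the paper explicitly defers to the cited reference.
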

Then, by using Theorem \ref{thm_BC}, the asymptotic SINR for the dual precoding with BD can be derived.

\begin{thm}\label{thm_BD} When $M$, $N$, $\bar B$ goes to infinity and $\frac{N}{\bar B}$ is fixed, the SINR, $\gamma_{gpk}^{BD}$ in (\ref{dual_12}) asymptotically converges as
\begin{eqnarray}\label{Asym_4}
\gamma_{gpk}^{BD}- \gamma_{gpk}^{BD,o}~ \overset{M \to \infty}{{\longrightarrow}} 0,
\end{eqnarray}
where $\gamma_{gpk}^{BD,o}$ is the asymptotic SINR, given as shown at the top of the page.
\begin{figure*}[!t]
\scriptsize
\begin{eqnarray}\label{Asym_5}
\gamma_{gpk}^{BD,o}= \frac{\frac{P}{N}({\xi}_g^o)^2 (1-\tau^2) (m_{gp}^o)^2}{ ({\xi}_g^o)^2 \Upsilon_{ggp}^o (1-\tau^2(1-(1+ m_{gp}^o)^2))+(1+\sum_{l \neq g}({\xi}_l^o)^2\Upsilon_{glp}^o )(1+ m_{gp}^o)^2},
\end{eqnarray}
with $({\xi}_g^o)^2 = \frac{P}{G \Psi_g^o}$.

\hrulefill \vspace*{2pt}
\end{figure*}
Here, $m_{gp}^o$, $\Upsilon_{glq}^o$, and $\Psi_g^o$ are the unique solutions of
\begin{eqnarray}\label{Asym_6_1}
\!\!&\!\!m_{gp}^o\! =\! \frac{1}{\bar B}tr(\bar{\bf R}_{gp} {\bf T}_g),{\bf T}_g\! = \!\left(\frac{\bar N}{2\bar B}\!\sum_{q\in\{h,v\}}\!\frac{\bar{\bf R}_{gq}}{1+m_{gq}^o} +\alpha{\bf I}_{\bar B}\! \right)^{-\!1}\!,\!&\!\nonumber\\\!&\! \Psi_g^o = \frac{1}{2\bar B}\frac{P}{G} \sum_{q\in\{h,v\}} \frac{m'_{gq}}{(1+ m_{gq}^o)^2},\!&\\\label{Asym_6_2}
\!&\Upsilon_{ggp}^o= \frac{\bar N/2 -1}{\bar B}\frac{P}{N}\frac{m'_{ggpp}}{(1+m_{gp}^o)^2} +\frac{\bar N}{2\bar B}\frac{P}{N}\frac{m'_{ggpq}}{(1+m_{gq}^o)^2}, \!&\!\nonumber\\ \!&\!\Upsilon_{glp}^o= \frac{P}{2N}\frac{\bar N}{\bar B}\sum_{q\in\{h,v\}} \frac{m'_{glpq}}{(1+m_{lq}^o)^2}.\!&\!
\end{eqnarray}
In addition, ${\bf m}'_g = [m'_{gv}, m'_{gh}]^T$, and ${\bf m}'_{ggp} = [m'_{ggpv}, m'_{ggph}]^T$ given by
\begin{eqnarray}\label{Asym_6_3}
{\bf m}'_{g} = ({\bf I}_2 - {\bf J})^{-1}{\bf v}_g, \quad
{\bf m}'_{ggp} = ({\bf I}_2 - {\bf J})^{-1}{\bf v}_{ggp},
\end{eqnarray}
where
\begin{eqnarray}\label{Asym_6_4}
&\!\!{\bf J}= \frac{\bar N}{2\bar B}\left[\begin{array}{cc} \frac{ tr(\bar{\bf R}_{gv}{\bf T}_g \bar{\bf R}_{gv} {\bf T}_g) }{\bar B(1+m_{gv}^o)^2} & \frac{ tr(\bar{\bf R}_{gv}{\bf T}_g \bar{\bf R}_{gh} {\bf T}_g) }{\bar B(1+m_{gh}^o)^2}\\  \frac{ tr(\bar{\bf R}_{gh}{\bf T}_g \bar{\bf R}_{gv} {\bf T}_g) }{\bar B(1+m_{gv}^o)^2} & \frac{ tr(\bar{\bf R}_{gh}{\bf T}_g \bar{\bf R}_{gh} {\bf T}_g) }{\bar B(1+m_{gh}^o)^2}\end{array}\right],\!&\!\\\nonumber\!&\!{\bf v}_g\!=\! \frac{1}{\bar B}\left[\!\begin{array}{c} tr(\bar{\bf R}_{gv}{\bf T}_g^2)\\ tr(\bar{\bf R}_{gh}{\bf T}_g^2)\end{array}\!\right],~{\bf v}_{ggp}\!= \!\frac{1}{\bar B}\left[\!\begin{array}{c} tr(\bar{\bf R}_{gv}{\bf T}_g\bar{\bf R}_{gp}{\bf T}_g)\\ tr(\bar{\bf R}_{gh}{\bf T}_g\bar{\bf R}_{gp}{\bf T}_g) \end{array}\!\right]\!&\!
\end{eqnarray}
with $\bar{\bf R}_{gp}$ defined in (\ref{dual_9}). In addition,
${\bf m}'_{glp} = [m'_{glpv}, m'_{glph}]^T$ given by
\begin{eqnarray}\label{Asym_6_5}
{\bf m}'_{glp} = ({\bf I}_2 - {\bf J})^{-1}{\bf v}_{glp},
\end{eqnarray}
where
\begin{eqnarray}\label{Asym_6_6}
{\bf J}= \frac{\bar N}{2\bar B}\left[\begin{array}{cc} \frac{ tr(\bar{\bf R}_{lv}{\bf T}_l \bar{\bf R}_{lv} {\bf T}_l) }{\bar B(1+m_{lv}^o)^2} & \frac{ tr(\bar{\bf R}_{lv}{\bf T}_l \bar{\bf R}_{lh} {\bf T}_l) }{\bar B(1+m_{lh}^o)^2}\\  \frac{ tr(\bar{\bf R}_{lh}{\bf T}_l \bar{\bf R}_{lv} {\bf T}_l) }{\bar B(1+m_{lv}^o)^2} & \frac{ tr(\bar{\bf R}_{lh}{\bf T}_l \bar{\bf R}_{lh} {\bf T}_l) }{\bar B(1+m_{lh}^o)^2}\end{array}\right],\nonumber\\~{\bf v}_{glq}= \frac{1}{\bar B}\left[\begin{array}{c} tr(\bar{\bf R}_{lv}{\bf T}_l {\bf B}_l^H{\bf R}_{gp}{\bf B}_l{\bf T}_l)\\ tr(\bar{\bf R}_{lh}{\bf T}_l{\bf B}_l^H{\bf R}_{gp}{\bf B}_l{\bf T}_l) \end{array}\right].
\end{eqnarray}
\end{thm}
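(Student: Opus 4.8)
The plan is to follow the deterministic-equivalent program of \cite{Wagner1}, carrying the two-block (vertical/horizontal) covariance structure through each step. First I would reduce everything to the decoupled per-group quantities. Since $\hat{\bar{\bf H}}_g={\bf B}_g^H\hat{\bf H}_g$ and ${\bf B}_g^H{\bf B}_g={\bf I}_{\bar B}$, every inner product in (\ref{dual_12}) is a function of the $\bar B\times\bar N$ effective matrices $\hat{\bar{\bf H}}_g,\hat{\bar{\bf H}}_l$, whose $k$th column is a zero-mean Gaussian vector with covariance $\bar{\bf R}_{gv}$ if MS $k$ is vertically polarized and $\bar{\bf R}_{gh}$ if horizontally polarized (the projected blocks appearing in (\ref{dual_9})). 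By (\ref{Sys_7}), $\bar{\bf h}_{gk}={\bf B}_g^H{\bf h}_{gk}$ and $\hat{\bar{\bf h}}_{gk}$ are jointly Gaussian with $\hat{\bar{\bf h}}_{gk}$ having the same covariance $\bar{\bf R}_{gp}$ as $\bar{\bf h}_{gk}$ ($p$ being the polarization of MS $k$) and cross-covariance a scalar multiple of $\bar{\bf R}_{gp}$; moreover the resolvent $\hat{\bar{\bf K}}_{g[k]}$ obtained from $\hat{\bar{\bf K}}_g$ by deleting the $k$th column is independent of $(\bar{\bf h}_{gk},\hat{\bar{\bf h}}_{gk})$, and the whole of group $l$ is independent of ${\bf h}_{gk}$ when $l\neq g$. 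These facts enable the standard random-matrix lemmas --- Sherman--Morrison, the rank-one perturbation lemma and the trace (quadratic-form) lemma --- together with Theorem \ref{thm_BC} and its differentiated (second-order) counterpart from \cite{Wagner1}.

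Next I would process the signal, intra-group interference and inter-group interference of (\ref{dual_12}) in turn. For the signal power, Sherman--Morrison gives $\bar{\bf h}_{gk}^H\hat{\bar{\bf K}}_g\hat{\bar{\bf h}}_{gk}=\bar{\bf h}_{gk}^H\hat{\bar{\bf K}}_{g[k]}\hat{\bar{\bf h}}_{gk}/(1+\hat{\bar{\bf h}}_{gk}^H\hat{\bar{\bf K}}_{g[k]}\hat{\bar{\bf h}}_{gk})$; the trace lemma collapses the quadratic forms to $\frac{1}{\bar B}tr(\bar{\bf R}_{gp}\hat{\bar{\bf K}}_{g[k]})$ (the estimate--error cross term vanishes by independence), and the rank-one perturbation lemma together with Theorem \ref{thm_BC} --- applied to $\hat{\bar{\bf K}}_g$ with large dimension $\bar B$, $\bar N$ columns of covariance $\bar{\bf R}_{gv}$ or $\bar{\bf R}_{gh}$ and ${\bf S}={\bf 0}$, whose fixed-point quantities $e_i$ collapse to the two values $m_{gv}^o,m_{gh}^o$ of (\ref{Asym_6_1}) --- identifies the limit $m_{gp}^o$, yielding the $(1-\tau^2)(m_{gp}^o)^2$ in the numerator and $(1+m_{gp}^o)^2$ in the denominator of (\ref{Asym_5}). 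For $\sum_{j\neq k}|\bar{\bf h}_{gk}^H\hat{\bar{\bf K}}_g\hat{\bar{\bf h}}_{gj}|^2$, I would peel columns $k$ and $j$ off $\hat{\bar{\bf K}}_g$; what remains are traces of the form $\frac{1}{\bar B}tr(\bar{\bf R}_{gp'}{\bf T}_g\bar{\bf R}_{gp}{\bf T}_g)$, a two-sided-weighted \emph{squared} deterministic-equivalent resolvent, and these are precisely the entries of ${\bf v}_{ggp}$ in (\ref{Asym_6_4}); the self-consistency of the $e_i$'s under this perturbation turns the scalar fixed point into the $2\times2$ linear system ${\bf m}'_{ggp}=({\bf I}_2-{\bf J})^{-1}{\bf v}_{ggp}$. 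Summing the $\tfrac{\bar N}{2}-1$ co-polarized and $\tfrac{\bar N}{2}$ cross-polarized interferers and carrying the $\tau$-weights generated by the column-$k$ peeling (including the cross term between $\hat{\bar{\bf h}}_{gk}$ and its estimation error, whose expectation is proportional to $\tau\bar{\bf R}_{gp}$) produces $\Upsilon_{ggp}^o$ and the factor $(1-\tau^2(1-(1+m_{gp}^o)^2))$.

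For the inter-group term I would use that ${\bf h}_{gk}$ is independent of group $l$: conditioning on group $l$ and applying the trace lemma over ${\bf B}_l^H{\bf h}_{gk}$ (covariance ${\bf B}_l^H{\bf R}_{gp}{\bf B}_l$, with ${\bf R}_{gp}\in\{{\bf R}_{gv},{\bf R}_{gh}\}$ from (\ref{Sys_5_1})) gives $\sum_j|{\bf h}_{gk}^H{\bf B}_l\hat{\bar{\bf K}}_l{\bf B}_l^H\hat{\bf h}_{lj}|^2\approx\frac1{\bar B}tr\!\big({\bf W}(\hat{\bar{\bf K}}_l-\bar B\alpha\hat{\bar{\bf K}}_l^2)\big)$ with ${\bf W}={\bf B}_l^H{\bf R}_{gp}{\bf B}_l$, after using $\hat{\bar{\bf K}}_l\hat{\bar{\bf H}}_l\hat{\bar{\bf H}}_l^H\hat{\bar{\bf K}}_l=\hat{\bar{\bf K}}_l-\bar B\alpha\hat{\bar{\bf K}}_l^2$; the first-order and squared-resolvent deterministic equivalents for group $l$ then deliver $\Upsilon_{glp}^o$ through ${\bf m}'_{glp}=({\bf I}_2-{\bf J})^{-1}{\bf v}_{glp}$ with ${\bf v}_{glp}$ as in (\ref{Asym_6_6}). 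The normalization is handled the same way: writing $\xi_g^2=\bar N/tr(\hat{\bar{\bf H}}_g^H\hat{\bar{\bf K}}_g^2\hat{\bar{\bf H}}_g)=\bar N/\sum_j\hat{\bar{\bf h}}_{gj}^H\hat{\bar{\bf K}}_g^2\hat{\bar{\bf h}}_{gj}$, peeling each column, and using the squared-resolvent equivalent with the Sherman--Morrison denominators $(1+m_{gq}^o)^2$ gives the deterministic limit $\Psi_g^o$ of (\ref{Asym_6_1}) and hence $\xi_g^2\to(\xi_g^o)^2=P/(G\Psi_g^o)$. Assembling these pieces into (\ref{dual_12}) and using that the SINR is a continuous function of the finitely many traces involved yields, almost surely, (\ref{Asym_4})--(\ref{Asym_5}) in the regime $M,N,\bar B\to\infty$ with $N/\bar B$ fixed.

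The main obstacle is the middle steps --- obtaining the \emph{second-order} deterministic equivalents for the squared, two-sided-weighted resolvent traces and verifying that the dual polarization makes their fixed-point equation exactly the $2\times2$ system with the matrix ${\bf J}$. This amounts to differentiating the fundamental equation (\ref{Asym_3}) (equivalently invoking the companion theorem of \cite{Wagner1}) while carefully exploiting that the $\bar N$ columns of each $\hat{\bar{\bf H}}_g$ carry only the two covariances $\bar{\bf R}_{gv},\bar{\bf R}_{gh}$, and then reconciling all the combinatorial weights ($\tfrac{\bar N}{2}-1$ versus $\tfrac{\bar N}{2}$) and imperfect-CSIT factors so that the stated form of $\gamma_{gpk}^{BD,o}$ emerges. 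One also has to check the uniform spectral-norm hypotheses of Theorem \ref{thm_BC} for $\bar{\bf R}_{gv},\bar{\bf R}_{gh}$ and ${\bf B}_l^H{\bf R}_{gp}{\bf B}_l$, which follows from the uniform boundedness of $\|{\bf R}_g^s\|$ together with $0\le\chi\le1$.
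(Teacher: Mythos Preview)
Your proposal is correct and follows essentially the same approach as the paper's Appendix~A: both apply the deterministic-equivalent machinery of \cite{Wagner1} (Sherman--Morrison, the trace/quadratic-form lemma, rank-one perturbation, Theorem~\ref{thm_BC}, and the differentiation trick for squared resolvents), and both exploit that the columns of $\hat{\bar{\bf H}}_g$ carry only the two covariances $\bar{\bf R}_{gv},\bar{\bf R}_{gh}$ so that the fixed-point equations collapse to the $2\times2$ system $({\bf I}_2-{\bf J}){\bf m}'={\bf v}$. The only cosmetic differences are the order of the steps (the paper treats the normalization $\Psi_g^o$ first) and your use of the resolvent identity $\hat{\bar{\bf K}}_l\hat{\bar{\bf H}}_l\hat{\bar{\bf H}}_l^H\hat{\bar{\bf K}}_l=\hat{\bar{\bf K}}_l-\bar B\alpha\hat{\bar{\bf K}}_l^2$ for the inter-group term, whereas the paper peels the columns $\hat{\bar{\bf h}}_{lj}$ directly; both routes land on the same traces $tr(\bar{\bf R}_{lq}{\bf T}_l{\bf B}_l^H{\bf R}_{gp}{\bf B}_l{\bf T}_l)$ that define ${\bf v}_{glp}$.
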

\begin{proof}
See Appendix A.
\end{proof}
Thanks to the structure of the dual polarized antenna elements, we can have a simple asymptotic SINR in Theorem \ref{thm_BD} compared to that for the case of the general antenna covariance matrices \cite{Wagner1}. This gives a useful insight into the behavior of the asymptotic SINR as a function of the polarization parameter in Section \ref{ssec:dualprecoding}. Furthermore, when the spatial covariance matrix is the same for both polarizations, we can further simplify the asymptotic SINR in Theorem \ref{thm_BD}.
\begin{cor}\label{cor_BD}
When the spatial covariance matrix is the same for both polarization, i.e., infinitesimally small dual-polarized antenna elements are co-located (see footnote 2), the asymptotic SINR $\gamma_{gpk}^{BD,o}$ in (\ref{Asym_4}) can be written as in a simpler form as shown at the top of the page.
\begin{figure*}[!t]
\scriptsize
\begin{eqnarray}\label{Asym_8}
\gamma_{gk}'^{BD,o}= \frac{\frac{P}{N}({\xi}_g^o)^2 (1-\tau^2) (m_g^o)^2}{ ({\xi}_g^o)^2 \Upsilon_{gg}^o (1-\tau^2(1-(1+ m_g^o)^2))+(1+\sum_{l \neq g} ({\xi}_l^o)^2 \Upsilon_{gl}^o )(1+ m_g^o)^2},
\end{eqnarray}
with $({\xi}_g^o)^2 = \frac{P}{G \Psi_g^o}$.

\hrulefill \vspace*{2pt}
\end{figure*}
Here, $m_g^o$, $\Upsilon_{gl}^o$, and $\Psi_g^o$ are the unique solutions of
\begin{eqnarray}\label{Asym_9_1}
m_g^o = \frac{1}{\bar B}tr(\bar{\bf R}_g' {\bf T}_g),\quad {\bf T}_g = \left(\frac{\bar N}{\bar B}\frac{\bar{\bf R}_g'}{1+m_g^o} +\alpha{\bf I}_{\bar B} \right)^{-1},
\\\label{Asym_9_2}
\Psi_g^o = \frac{1}{\bar B}\frac{P}{G}\frac{m'_g}{(1+ m_g^o)^2},\quad m'_g= \frac{\frac{1}{\bar B}tr(\bar{\bf R}_g'{\bf T}_g^2) }{1-\frac{\frac{\bar N}{\bar B} tr(\bar{\bf R}_g'{\bf T}_g \bar{\bf R}_g' {\bf T}_g) }{\bar B(1+m_g^o)^2} },\\\label{Asym_9_3}
\Upsilon_{gg}^o= \frac{\bar N -1}{\bar B}\frac{P}{N}\frac{m'_{gg}}{(1+m_g^o)^2}, \quad \Upsilon_{gl}^o= \frac{P}{N}\frac{\bar N}{\bar B}\frac{m'_{gl}}{(1+m_l^o)^2},\\\label{Asym_9_4}
\!\!\!m'_{gg} \!=\! \frac{\frac{1}{\bar B}tr(\bar{\bf R}_g'{\bf T}_g\bar{\bf R}_g'{\bf T}_g) }{\!1-\frac{\frac{\bar N}{\bar B} tr(\bar{\bf R}_g'{\bf T}_g \bar{\bf R}_g' {\bf T}_g) }{\bar B(1+m_g^o)^2}\! },~\! m'_{gl}\! = \!\frac{\!\frac{1}{\bar B}tr(\bar{\bf R}_l'{\bf T}_l{\bf B}_l^H{\bf R}_g'{\bf B}_l{\bf T}_l) \!}{1-\frac{\frac{\bar N}{\bar B} tr(\bar{\bf R}_l'{\bf T}_l \bar{\bf R}_l' {\bf T}_l) }{\bar B(1+m_l^o)^2} },\!\!
\end{eqnarray}
where ${\bf R}'_{g} = \frac{1}{2}{\bf R}_g$ and $\bar{\bf R}_g' = \frac{1}{2}\bar{\bf R}_g$. Here, ${\bf R}_g$ and $\bar{\bf R}_g$ are defined in (\ref{Sys_5_1}) and (\ref{dual_9}).
\end{cor}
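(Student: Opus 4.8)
The plan is to show that, under the shared spatial covariance assumption, every quantity in Theorem~\ref{thm_BD} acquires a Kronecker/block structure in the polarization index, so the $2\times 2$ deterministic-equivalent system collapses onto its $[1,1]^{T}$ eigenline, on which it reduces to the scalar relations (\ref{Asym_9_1})--(\ref{Asym_9_4}). Set ${\bf C}_g=({\bf B}_g^{s})^{H}{\bf R}_g^{s}{\bf B}_g^{s}$. By (\ref{dual_9}), $\bar{\bf R}_g=(1+\chi)({\bf I}_2\otimes{\bf C}_g)$, while by (\ref{Sys_5_1}) the subgroup covariances $\bar{\bf R}_{gv}$, $\bar{\bf R}_{gh}$ are block diagonal with blocks $({\bf C}_g,\chi{\bf C}_g)$ and $(\chi{\bf C}_g,{\bf C}_g)$, so they are interchanged by swapping the polarization blocks. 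Plugging the symmetric ansatz $m_{gv}^o=m_{gh}^o=:m_g^o$ into (\ref{Asym_6_1}) collapses $\sum_{q}\bar{\bf R}_{gq}/(1+m_{gq}^o)$ to $\bar{\bf R}_g/(1+m_g^o)$, hence ${\bf T}_g={\bf I}_2\otimes{\bf T}_g^{s}$ is block diagonal and $\frac1{\bar B}tr(\bar{\bf R}_{gp}{\bf T}_g)=\frac1{\bar B}tr(\bar{\bf R}_g'{\bf T}_g)$ is independent of $p$, with $\bar{\bf R}_g'=\frac12\bar{\bf R}_g$. Thus the ansatz is consistent with (\ref{Asym_6_1}) and, by uniqueness of the fixed-point solution (Theorem~\ref{thm_BC}), it is the solution; this gives (\ref{Asym_9_1}), and the same collapse reduces $\Psi_g^o$ and hence $({\xi}_g^o)^2$ to the stated forms.

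Next I would treat the auxiliary $2\times 2$ systems (\ref{Asym_6_3}), (\ref{Asym_6_5}). With $m_{gv}^o=m_{gh}^o$ the coupling matrix ${\bf J}$ of (\ref{Asym_6_4}) becomes a scalar multiple of the symmetric matrix with diagonal entries $1+\chi^{2}$ and off-diagonal entries $2\chi$, whose eigenvectors are $[1,1]^{T}$ (eigenvalue $\propto(1+\chi)^{2}$) and $[1,-1]^{T}$ (eigenvalue $\propto(1-\chi)^{2}$); $({\bf I}_2-{\bf J})^{-1}$ is diagonal in this basis, and on the $[1,1]^{T}$ line it reproduces exactly the scalar factor $(1-\frac{(\bar N/\bar B)\,tr(\bar{\bf R}_g'{\bf T}_g\bar{\bf R}_g'{\bf T}_g)}{\bar B(1+m_g^o)^{2}})^{-1}$ of (\ref{Asym_9_2})--(\ref{Asym_9_4}). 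Since the two block traces of $\bar{\bf R}_{gp}{\bf T}_g^{2}$ coincide, ${\bf v}_g\propto[1,1]^{T}$, so ${\bf m}'_g\propto[1,1]^{T}$ and its common entry is the scalar $m'_g$ of (\ref{Asym_9_2}).

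For the interference terms I would observe that ${\bf v}_{ggp}$ and ${\bf v}_{glp}$ come out proportional to $[1+\chi^{2},\,2\chi]^{T}$-type vectors, carrying both a $[1,1]^{T}$ component (weight $\propto(1+\chi)^{2}$) and a $[1,-1]^{T}$ component (weight $\propto(1-\chi)^{2}$); but $\Upsilon_{glp}^o$ and $\Upsilon_{ggp}^o$ in (\ref{Asym_6_2}) are built from the \emph{sum} of the two polarization entries of ${\bf m}'_{glp}$, ${\bf m}'_{ggp}$. For $\Upsilon_{glp}^o$ the two entries enter with equal coefficients, so the $[1,-1]^{T}$ part cancels exactly, and after rewriting ${\bf R}_{gp}$ via ${\bf R}_g'=\frac12{\bf R}_g$ one obtains $\Upsilon_{glp}^o=\Upsilon_{gl}^o$ of (\ref{Asym_9_3})--(\ref{Asym_9_4}). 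For $\Upsilon_{ggp}^o$ the coefficients $\frac{\bar N/2-1}{\bar B}$ and $\frac{\bar N}{2\bar B}$ differ by $\frac1{\bar B}$, so the sum keeps the $[1,1]^{T}$ contribution with weight $\frac{\bar N-1}{\bar B}$ (this is the origin of the $\bar N-1$ in (\ref{Asym_9_3})) plus a residual $O(1/\bar B)$ term proportional to $(1-\chi)^{2}\,tr(({\bf C}_g{\bf T}_g^{s})^{2})/\bar B^{2}$; since $tr(({\bf C}_g{\bf T}_g^{s})^{2})=O(\bar B)$ by the uniformly bounded spectrum of ${\bf R}_g^{s}$, this residual vanishes in the regime of Theorem~\ref{thm_BD}, so $\Upsilon_{ggp}^o\to\Upsilon_{gg}^o$.

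Finally I would substitute $m_{gp}^o=m_g^o$, $\Upsilon_{ggp}^o=\Upsilon_{gg}^o$, $\Upsilon_{glp}^o=\Upsilon_{gl}^o$ and the reduced $({\xi}_g^o)^2$ into (\ref{Asym_5}) to recover (\ref{Asym_8}) (and likewise $\gamma_{gpk}^{BD}-\gamma_{gk}'^{BD,o}\to 0$ through Theorem~\ref{thm_BD}). The main obstacle is the bookkeeping in the third step: one must carry the $[1,-1]^{T}$ component of $({\bf I}_2-{\bf J})^{-1}{\bf v}_{ggp}$ through the unequal-coefficient sum and check that exactly the $[1,1]^{T}$ piece kept in (\ref{Asym_9_3}) survives while the rest is $o(1)$; the inter-group cancellation and the $\Psi_g^o$ reduction are routine by comparison once the block structure is in place.
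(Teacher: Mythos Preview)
Your approach is correct and follows the same symmetry reduction as the paper: observe that the block-diagonal structure of $\bar{\bf R}_{gv},\bar{\bf R}_{gh}$ forces $m_{gv}^o=m_{gh}^o$ and ${\bf T}_g={\bf I}_2\otimes{\bf T}_g^s$, then push this through the $2\times 2$ systems for ${\bf m}'_g,{\bf m}'_{ggp},{\bf m}'_{glp}$. The paper's proof does this briefly via the trace identities $tr(\bar{\bf R}_{gv}{\bf T}_g^2)=tr(\bar{\bf R}_{gh}{\bf T}_g^2)=\frac12 tr(\bar{\bf R}_g{\bf T}_g^2)$ and $tr(\bar{\bf R}_{gv}{\bf T}_g\bar{\bf R}_{gv}{\bf T}_g)+tr(\bar{\bf R}_{gv}{\bf T}_g\bar{\bf R}_{gh}{\bf T}_g)=\frac14 tr(\bar{\bf R}_g{\bf T}_g\bar{\bf R}_g{\bf T}_g)$, and then claims the rest follows ``similarly'' by direct substitution.

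Your eigenbasis decomposition of ${\bf J}$ into the $[1,1]^T$ and $[1,-1]^T$ directions is a genuinely different (and more transparent) way to organize the same computation, and it actually exposes a point the paper glosses over. The paper's substitution argument implicitly treats $m'_{ggpp}$ and $m'_{ggpq}$ as if they both equal $m'_{gg}$, but as you note ${\bf v}_{ggp}\propto[1+\chi^2,\,2\chi]^T$ has a nonzero $[1,-1]^T$ component, so in fact $m'_{ggpp}\neq m'_{ggpq}$ for $\chi\neq 1$; only their \emph{sum} equals $2m'_{gg}$. Because the two entries enter $\Upsilon_{ggp}^o$ with the unequal weights $\frac{\bar N/2-1}{\bar B}$ and $\frac{\bar N}{2\bar B}$, an $O(1/\bar B)$ residual survives, exactly the term you isolate. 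Your observation that this residual vanishes in the regime of Theorem~\ref{thm_BD} is what actually justifies the claimed simplification; the paper's proof does not make this explicit. So your route is not only valid but slightly more rigorous on this step.
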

\begin{proof}
From (\ref{Sys_5_1}) and (\ref{dual_9}), we can see that $m_{gh}^o  = m_{gv}^o = \frac{1}{\bar B}tr(\frac{1}{2}\bar{\bf R}_{g} {\bf T}_g)$ in (\ref{Asym_6_1}). Accordingly, by letting $m_{g}^o = \frac{1}{\bar B}tr(\frac{1}{2}\bar{\bf R}_{g} {\bf T}_g)$, ${\bf T}_g$ in (\ref{Asym_6_1}) can be rewritten as that in (\ref{Asym_9_1}). Furthermore, because, in (\ref{Asym_6_4}),
\begin{eqnarray}\label{Asym_10}
\!\!&\!\!tr(\bar{\bf R}_{gv}{\bf T}_g^2)= tr(\bar{\bf R}_{gh}{\bf T}_g^2)= \frac{1}{2}tr(\bar{\bf R}_{g}{\bf T}_g^2),\!&\!\!\\\nonumber
\!\!&\!tr(\bar{\bf R}_{gv}{\bf T}_g \bar{\bf R}_{gv} {\bf T}_g) \!+\! tr(\bar{\bf R}_{gv}{\bf T}_g \bar{\bf R}_{gh} {\bf T}_g) \!= \! \frac{1}{4}tr(\bar{\bf R}_{g}{\bf T}_g \bar{\bf R}_{g} {\bf T}_g),\!&\!\!
\end{eqnarray}
$m'_{gv}= m'_{gh} = m'_{g}$ as in (\ref{Asym_9_2}). Similarly, we can prove that the parameters $m'_{glpq}$ are given as (\ref{Asym_9_4}). By substituting $m'_{g}$, $m'_{gg}$, $m'_{gl}$ into $\Psi_g^o$, $\Upsilon_{ggp}^o$, and $\Upsilon_{glp}^o$ of (\ref{Asym_6_1}) and (\ref{Asym_6_2}), we can prove that $\Psi_g^o$, $\Upsilon_{ggp}^o$, and $\Upsilon_{glp}^o$ can be written as in (\ref{Asym_9_2}) and (\ref{Asym_9_3}).
\end{proof}
From Corollary \ref{cor_BD}, the asymptotic SINR is independent of MS index $k$ and polarization index $p$. Accordingly, by letting $\gamma_{gk}'^{BD,o}\triangleq \gamma_{g}'^{BD,o}$ for $k=1,...,\bar N$, the asymptotic sum rate can be approximated as
\begin{eqnarray}\label{Asym_10_1}
R_{BD}^o &\approx& \sum_{g=1}^G \sum_{p\in\{v,h\}}\sum_{k=1}^{\bar N/2}\log_2(1+ \gamma_{gk}'^{BD,o})\nonumber\\&=& \sum_{g=1}^G \bar N \log_2(1+ \gamma_{g}'^{BD,o}).
\end{eqnarray}

\begin{remark}\label{remark2}
We note that, when $\tau = 0$, $\gamma_{gk}'^{BD,o}$ in Corollary \ref{cor_BD} is analogous to the asymptotic SINR of the dual precoding with BD derived in \cite{Caire} under the perfect CSIT and uni-polarized system. That is, when the infinitesimally small dual-polarized antenna elements are co-located, the asymptotic SINRs of the MSs in the same group are the same irrespective of their antenna deployment, i.e., vertical or horizontal polarization. In addition, the effective covariance matrix is given by ${\bf R}'_{g} = \frac{1}{2}{\bf R}_g$. That is, it can be described as if the BS and MSs are co-polarized and the correlation matrices for the MSs in the $g$th group are the same as ${\bf I}_2\otimes {\bf R}_g^s$ and the effective transmit power of BS is reduced from ${P}$ to $\frac{1+\chi}{2}{P}$. Note that this is valid only when the spatial covariance matrix is the same for both polarizations. That is, Theorem \ref{thm_BD} is extended to more general covariance matrices addressing the polarization of antenna elements.
\end{remark}

\subsection{Dual precoding with  block diagonalization and subgrouping based on both spatial correlation and polarization}\label{ssec:Asymp_BDS}
By using Theorem \ref{thm_BC} and an approach similar as that used for the dual precoding with BD, the asymptotic SINR for the dual precoding with BDS can be derived.

\begin{thm}\label{thm_BDS} When $M$, $N$, $\bar B$ go to infinity and $\frac{N}{\bar B}$ is fixed, the SINR, $\gamma_{gpk}^{BDS}$ in (\ref{dual_19}) asymptotically converges as
\begin{eqnarray}\label{Asym_11}
\gamma_{gpk}^{BDS}- \gamma_{gpk}^{BDS,o}~ \overset{M \to \infty}{{\longrightarrow}} 0,
\end{eqnarray}
where $\gamma_{gpk}^{BDS,o}$ is given by
\begin{eqnarray}\label{Asym_12}
\gamma_{gpk}^{BDS,o} = \frac{\frac{P}{N}({\xi}_{gp}^o)^2 (1-\tau^2) (m_{gp}^o)^2}{IN_{gpk}^{o}},
\end{eqnarray}
where
\begin{eqnarray}\label{Asym_12_1}
IN_{gpk}^{o} =  ({\xi}_{gp}^o)^2 \Upsilon_{ggpp}^o (1-\tau^2(1-(1+ m_{gp}^o)^2))+\nonumber\\(1+\sum_{q \neq p}({\xi}_{gq}^o)^2\Upsilon_{ggpq}^o  +\sum_{l \neq g}\sum_{q}({\xi}_{lq}^o)^2\Upsilon_{glpq}^o )(1+ m_{gp}^o)^2,
\end{eqnarray}
with $({\xi}_{gp}^o)^2 = \frac{P}{G \Psi_{gp}^o}$, where $m_{gp}^o$, $\Upsilon_{glpq}^o$, and $\Psi_{gp}^o$ are the unique solutions of
\begin{eqnarray}\label{Asym_13_1}
\!&\!m_{gp}^o\! =\! \frac{2}{\bar B}tr(\bar{\bf R}_{gp} {\bf T}_{gp}),~ {\bf T}_{gp} \!=\! \left(\frac{\bar N}{\bar B}\frac{\bar{\bf R}_{gp}}{1+m_{gp}^o} +\alpha{\bf I}_{\bar B/2} \right)^{-1},\nonumber\!&\!\\ \!&\!\Psi_{gp}^o = \frac{1}{\bar B}\frac{P}{G} \frac{m'_{gp}}{(1+ m_{gp}^o)^2},\!&\!\\\label{Asym_13_2}
\!\!&\!\!\Upsilon_{ggpp}^o\!=\! \frac{\bar N/2 -1}{\bar B/2}\frac{P}{N}\frac{m'_{ggpp}}{(1+m_{gp}^o)^2}, ~ \Upsilon_{glpq}^o\!=\! \frac{P}{N}\frac{\bar N}{\bar B} \frac{m'_{glpq}}{(1+m_{lq}^o)^2},\!\!&\!
\end{eqnarray}
with $m'_{gp} \!= \!\frac{\frac{2}{\bar B}tr(\bar{\bf R}_{gp}{\bf T}_{gp}^2) }{1-\frac{\frac{\bar N}{\bar B} tr(\bar{\bf R}_{gp}{\bf T}_{gp} \bar{\bf R}_{gp} {\bf T}_{gp}) }{\bar B/2(1+m_{gp}^o)^2} }$,
\begin{eqnarray}\label{Asym_14_1}
\! m'_{ggpp} \!=\! \frac{\frac{2}{\bar B}tr(\bar{\bf R}_{gp}{\bf T}_{gp}\bar{\bf R}_{gp}{\bf T}_{gp}) }{1-\frac{\frac{\bar N}{\bar B} tr(\bar{\bf R}_{gp}{\bf T}_{gp} \bar{\bf R}_{gp} {\bf T}_{gp}) }{\bar B/2(1+m_{gp}^o)^2} }\!\\\!
m'_{glpq} = \frac{\frac{2}{\bar B}tr(\bar{\bf R}_{lq}{\bf T}_{lq}{\bf B}_{lq}^H {\bf R}_{gp}{\bf B}_{lq} {\bf T}_{lq}) }{1-\frac{\frac{\bar N}{\bar B} tr(\bar{\bf R}_{lq}{\bf T}_{lq} \bar{\bf R}_{lq}{\bf T}_{lq}) }{\bar B/2(1+m_{lq}^o)^2} },
\end{eqnarray}
where $\bar{\bf R}_{gp}$ is defined in (\ref{dual_9}).
\end{thm}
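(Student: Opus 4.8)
The plan is to mirror the proof of Theorem~\ref{thm_BD} in Appendix~A almost verbatim, the only genuine change being a structural simplification that should be isolated at the outset. Recall from (\ref{dual_15}) that ${\bf B}_{gv}$ is supported on the vertically polarized antennas only, so that $\hat{\bar{\bf H}}_{gp}={\bf B}_{gp}^H\hat{\bf H}_{gp}=({\bf B}_g^{s})^H\hat{\bf H}_g^{pp}$ and hence the resolvent $\hat{\bar{\bf K}}_{gp}$ in (\ref{dual_17})--(\ref{dual_18}) is a function of the co-polar short-term estimate $\hat{\bf H}_g^{pp}$ alone. Consequently (i) the $\bar N/2$ columns of $\hat{\bar{\bf H}}_{gp}$ are i.i.d.\ with the \emph{single} covariance $\bar{\bf R}_{gp}$ (the $\tfrac{\bar B}{2}\times\tfrac{\bar B}{2}$ nonzero block in (\ref{dual_9})), and (ii) the pairs $(\hat{\bf H}_g^{vv},\hat{\bf H}_g^{hh})$ and $(\hat{\bf H}_g^{pp},\hat{\bf H}_l^{qq})$, $l\neq g$, are mutually independent. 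Point (i) is what collapses the $2\times2$ fixed-point system of the BD analysis to the scalar recursion for $m_{gp}^o$ in (\ref{Asym_13_1}) with the $\tfrac{\bar B}{2}$-dimensional ${\bf T}_{gp}$; point (ii) lets us dispatch each interference family by a plain independence argument.

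Concretely I would carry out the following steps. \emph{Step 1:} use the CSIT model (\ref{Sys_7}) to write each true channel column as a component parallel to its estimate plus an independent residual; the parallel component carries power proportional to $1-\tau^2$ and the residual to $\tau^2$, which is where the $1-\tau^2$ prefactor and the $\tau^2\bigl(1-(1+m_{gp}^o)^2\bigr)$ correction in (\ref{Asym_12})--(\ref{Asym_12_1}) originate. \emph{Step 2:} for the signal term and the intra-subgroup terms, peel the $k$th user out of $\hat{\bar{\bf K}}_{gp}$ with the matrix-inversion lemma, $\hat{\bar{\bf K}}_{gp}=\hat{\bar{\bf K}}_{gp,[k]}-\dfrac{\hat{\bar{\bf K}}_{gp,[k]}\hat{\bar{\bf h}}_{gpk}\hat{\bar{\bf h}}_{gpk}^H\hat{\bar{\bf K}}_{gp,[k]}}{1+\hat{\bar{\bf h}}_{gpk}^H\hat{\bar{\bf K}}_{gp,[k]}\hat{\bar{\bf h}}_{gpk}}$, so that $\hat{\bar{\bf h}}_{gpk}$ becomes independent of the deflated resolvent $\hat{\bar{\bf K}}_{gp,[k]}$. \emph{Step 3:} apply the trace/quadratic-form concentration lemmas of \cite{Wagner1} together with Theorem~\ref{thm_BC} (with ${\bf S}=\alpha{\bf I}$ and ${\bf Q}$ taken as $\bar{\bf R}_{gp}$ or an appropriate product) to replace $\hat{\bar{\bf h}}_{gpk}^H\hat{\bar{\bf K}}_{gp,[k]}\hat{\bar{\bf h}}_{gpk}\to m_{gp}^o$ and the remaining traces by their deterministic equivalents in terms of ${\bf T}_{gp}$. \emph{Step 4:} treat the normalization $\xi_{gp}^2$ of (\ref{dual_18}) the same way; the trace $\mathrm{tr}(\hat{\bar{\bf H}}_{gp}^H\hat{\bar{\bf K}}_{gp}^2\hat{\bar{\bf H}}_{gp})$ is a ``second-order'' quantity requiring the derivative of the fixed-point map (equivalently the extended deterministic equivalent of \cite{Wagner1}), which, because there is only one covariance per subgroup, evaluates to the explicit fraction defining $m'_{gp}$ in (\ref{Asym_13_1}); this yields $\Psi_{gp}^o$ and $(\xi_{gp}^o)^2=P/(G\Psi_{gp}^o)$.

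The three interference contributions in $IN_{gpk}$ are then argued separately. The intra-subgroup term $\sum_{j\neq k}$ is handled exactly as the signal term but retaining the full rank-one correction, giving $\Upsilon_{ggpp}^o$ with the $m'_{ggpp}$ fraction of (\ref{Asym_14_1}). For the co-located cross-polar term ($q\neq p$), the precoder ${\bf P}_{gq}$ is built from $\hat{\bf H}_g^{qq}$ and is therefore independent of the block ${\bf H}_g^{pq}$ through which that interference reaches the victim MS; the quadratic form concentrates onto a trace against the effective covariance ${\bf B}_{gq}^H{\bf R}_{gp}{\bf B}_{gq}$, and since the $q$-antenna block of ${\bf R}_{gp}$ equals $\chi{\bf R}_g^s$, this is precisely where the XPD parameter $\chi$ enters, producing $\Upsilon_{ggpq}^o$. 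The inter-group term ($l\neq g$) is identical, now invoking the approximate orthogonality (\ref{dual_16}) and the covariance mismatch ${\bf B}_{lq}^H{\bf R}_{gp}{\bf B}_{lq}$, and gives $\Upsilon_{glpq}^o$ with $m'_{glpq}$. Assembling the numerator with the three denominator pieces and $(\xi_{\cdot}^o)^2=P/(G\Psi_{\cdot}^o)$ reproduces (\ref{Asym_12})--(\ref{Asym_14_1}).

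The main obstacle is the standard one in this line of argument: making all the deterministic-equivalent substitutions uniform over the growing index set $(g,p,k)$ and, above all, rigorously justifying the second-order replacements that create the $m'$-quantities. There one must differentiate the fixed-point equation and check that the scalar denominator $1-\dfrac{(\bar N/\bar B)\,\mathrm{tr}(\bar{\bf R}_{gp}{\bf T}_{gp}\bar{\bf R}_{gp}{\bf T}_{gp})}{(\bar B/2)(1+m_{gp}^o)^2}$ stays bounded away from zero (the scalar analogue of $({\bf I}_2-{\bf J})$ being invertible in Theorem~\ref{thm_BD}), so that $m'_{gp}$, $m'_{ggpp}$ and $m'_{glpq}$ are well defined. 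A secondary but essential point is verifying the independence/homogeneity claimed above --- that $\hat{\bar{\bf K}}_{gp}$ really depends only on $\hat{\bf H}_g^{pp}$ --- since that is exactly what legitimizes replacing the $2\times2$ recursion of the BD proof by the scalar recursions stated here; once this is in place the rest is a mechanical repetition of Appendix~A with $\bar B\mapsto\bar B/2$ and $\bar N\mapsto\bar N/2$ in the per-subgroup quantities.
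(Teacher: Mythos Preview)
Your proposal is correct and is exactly the approach the paper takes: the paper's own proof of Theorem~\ref{thm_BDS} consists solely of the sentence ``Because the proof is similar to that of Theorem~\ref{thm_BD}, it is omitted,'' and what you have written is precisely the adaptation of Appendix~A to the BDS setting, with the correct observation that the single-covariance-per-subgroup structure collapses the $2\times2$ fixed-point system of (\ref{Asym_6_3})--(\ref{Asym_6_4}) into the scalar recursions of (\ref{Asym_13_1})--(\ref{Asym_14_1}). If anything, your sketch is more detailed than what the paper supplies.
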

\begin{proof}
Because the proof is similar to that of Theorem \ref{thm_BD}, it is omitted.
\end{proof}
From Theorem \ref{thm_BDS}, the asymptotic SINR is independent of MS index $k$. Furthermore, because ${\bf X}$ in (\ref{Sys_4}) is symmetric, $\gamma_{gvk}^{BDS,o}= \gamma_{ghk}^{BDS,o}$. Accordingly, by letting $\gamma_{gvk}^{BDS,o}= \gamma_{ghk}^{BDS,o} \triangleq \gamma_{g}^{BDS,o}$, the asymptotic sum rate can be approximated as
\begin{eqnarray}\label{Asym_15}
R_{BDS}^o &\approx& \sum_{g=1}^G \sum_{p\in\{v,h\}}\frac{\bar N}{2}\log_2(1+ \gamma_{gpk}^{BDS,o})\nonumber\\&=&\sum_{g=1}^G {\bar N}\log_2(1+ \gamma_{g}^{BDS,o}).
\end{eqnarray}

\subsection{Asymptotic performance analysis as a function of the XPD parameter $\chi$}\label{ssec:dualprecoding}
In this section, we investigate the effect of the XPD parameter $\chi$ on the asymptotic SINRs of the dual precoding schemes.
%The following lemma is helpful to derive its effect analytically.
%\begin{lem} Let a symmetric nonnegative matrix ${\bf R}$ have the EVD as ${\bf R}= {\bf U}{\bf \Lambda}{\bf U}^H$, where ${\bf U} = [{\bf U}_1 {\bf U}_0]$ and ${\bf \Lambda} = diag\{{\bf \Lambda}_1, {\bf 0} \}$ with a diagonal matrix ${\bf \Lambda}_1$ having non-zero eigenvalues of ${\bf R}$. By letting ${\bf T} = \left(\beta {\bf R} + \alpha {\bf I} \right)^{-1}$, for small $\alpha$, ${\bf R}{\bf T}$ can be approximated as
%\begin{eqnarray}\label{Asym_15_1}
%{\bf R}{\bf T} &\approx& \frac{1}{\beta}{\bf R}\left({\bf R} + \alpha {\bf I} \right)^{-1}.
%\end{eqnarray}
%\end{lem}
Based on the asymptotic results in Theorems \ref{thm_BD} and \ref{thm_BDS} and Corollary \ref{cor_BD}, we can have the following propositions.

\begin{prop}\label{prop_BD_Xpd} For a large $M$, the asymptotic SINR of the dual precoding with BD in (\ref{Asym_5}) is approximately independent of the XPD parameter $\chi$. That is, if we define the asymptotic SINR as the function $\gamma_{gpk}^{BD,o}(\chi)$ of $\chi$, then $\gamma_{gpk}^{BD,o}(\chi)\approx \gamma_{gpk}^{BD,o}(0)$.
\end{prop}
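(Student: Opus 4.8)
The plan is to track how the XPD parameter $\chi$ propagates through the fixed-point system of Corollary~\ref{cor_BD}, which applies here because the spatial covariance is assumed to be common to both polarizations (footnote~2), and to show that it cancels everywhere in $\gamma_{gpk}^{BD,o}$ of (\ref{Asym_5}) --- equivalently in its specialized form $\gamma_{g}'^{BD,o}$ of (\ref{Asym_8}) --- except in one term that becomes negligible when $M$ is large. The structural starting point is that, by (\ref{Sys_5_1}) and (\ref{dual_9}), $\chi$ enters the analysis only as an overall scaling of the relevant covariances: setting ${\bf C}_g := ({\bf B}_g^s)^H {\bf R}_g^s {\bf B}_g^s$, we have ${\bf R}_g' = \frac{1+\chi}{2}\,\mathrm{diag}({\bf R}_g^s,{\bf R}_g^s)$ and $\bar{\bf R}_g' = \frac{1+\chi}{2}\,\mathrm{diag}({\bf C}_g,{\bf C}_g)$, so that the underlying matrices ${\bf C}_g$ and ${\bf R}_g^s$ are $\chi$-free and $\chi$ appears only through the scalar factor $(1+\chi)$. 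This is the analytical counterpart of Remark~\ref{remark2}: the XPD-$\chi$ system behaves like the co-polarized one at effective power $\frac{1+\chi}{2}P$, so what must be shown is that this bounded power rescaling is asymptotically inconsequential.

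The first step is to verify that $m_g^o$ is asymptotically $\chi$-free. The matrix $\bar{\bf R}_g'$ is $\bar B\times\bar B$ and, because $\bar B\le 2r_g$ and the standard per-antenna normalization makes the dominant eigenvalues of ${\bf R}_g^s$ of order $M$, its eigenvalues are $\Theta(M)$; hence in (\ref{Asym_9_1}) the regularization $\alpha{\bf I}_{\bar B}$ with $\alpha=\frac{\bar N}{\bar B P}$ fixed is negligible against $\frac{\bar N}{\bar B}\,\bar{\bf R}_g'/(1+m_g^o)$, so that ${\bf T}_g\approx \frac{\bar B(1+m_g^o)}{\bar N}(\bar{\bf R}_g')^{-1}$. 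In this approximation the $(1+\chi)$ carried by $\bar{\bf R}_g'$ cancels against the one in $(\bar{\bf R}_g')^{-1}$, and (\ref{Asym_9_1}) reduces to a scalar relation that determines $m_g^o$ in terms of $\bar N/\bar B$ alone. Feeding ${\bf T}_g\propto(\bar{\bf R}_g')^{-1}$ into (\ref{Asym_9_2})--(\ref{Asym_9_4}) and counting the powers of $(1+\chi)$ carried by $\bar{\bf R}_g'$, ${\bf T}_g$, and ${\bf R}_g'$, one finds that $m_{gg}'$ and $m_{gl}'$ are $\chi$-free, that $m_g'\propto(1+\chi)^{-1}$ so that $\Psi_g^o\propto(1+\chi)^{-1}$ and $({\xi}_g^o)^2=\frac{P}{G\Psi_g^o}\propto(1+\chi)$, and that $\Upsilon_{gg}^o$ and $\Upsilon_{gl}^o$ remain $\chi$-free; for the cross term one invokes the near-orthogonality (\ref{dual_3}) of the preprocessors so that the bookkeeping involving ${\bf B}_l^H{\bf R}_g'{\bf B}_l$ stays consistent.

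Substituting these scalings into (\ref{Asym_8}), the useful-signal numerator $\frac{P}{N}({\xi}_g^o)^2(1-\tau^2)(m_g^o)^2$ and both interference terms $({\xi}_g^o)^2\Upsilon_{gg}^o(\cdot)$ and $({\xi}_l^o)^2\Upsilon_{gl}^o(\cdot)$ are all proportional to $(1+\chi)$, whereas the additive noise contribution $(1+m_g^o)^2$ is $\chi$-free. Hence $\gamma_{g}'^{BD,o}(\chi) = \frac{(1+\chi)\,c_1}{(1+\chi)\,c_2 + (1+m_g^o)^2}$ with $c_1$ and $c_2$ independent of $\chi$, so the only surviving $\chi$-dependence enters through the relative weight $(1+m_g^o)^2/[(1+\chi)c_2]$. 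It then remains to argue that this weight vanishes: since the eigenvalues of $\bar{\bf R}_g'$ grow like $M$, one gets $m_g'=\Theta(1/M)$, hence $\Psi_g^o=\Theta(1/M)$ and $({\xi}_g^o)^2=\Theta(M)$, while $\Upsilon_{gg}^o=\Theta(1)$; thus $c_1$ and $c_2$ are $\Theta(M)$ and $(1+m_g^o)^2=\Theta(1)$, so that $\gamma_{g}'^{BD,o}(\chi) = c_1/c_2 + O(1/M)$ uniformly over $\chi\in[0,1]$, which gives $\gamma_{gpk}^{BD,o}(\chi)\approx\gamma_{gpk}^{BD,o}(0)$.

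The main obstacle I expect is making this last step precise: establishing that the unit-variance noise term is asymptotically dominated by the power-proportional signal and interference terms, i.e.\ that the BD system is signal- and interference-limited rather than noise-limited for large $M$. This is exactly where the ``large $M$'' hypothesis is used --- it converts the exact effective-power dependence $\frac{1+\chi}{2}P$ of Remark~\ref{remark2} into approximate $\chi$-invariance, because an at-most-twofold change of power hardly moves an interference-limited SINR. A secondary technical point is reconciling the $\Theta(M)$ growth of the effective-covariance eigenvalues with the uniformly bounded spectrum hypothesis under which Theorem~\ref{thm_BC} (applied to the block-diagonalized channel) was invoked, and making the $O(1/M)$ remainder uniform in $\chi$.
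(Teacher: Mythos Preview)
Your scaling analysis --- tracking the single factor $(1+\chi)$ through $m_g^o$, $m_g'$, $m_{gg}'$, $m_{gl}'$, $\Psi_g^o$, $(\xi_g^o)^2$, $\Upsilon_{gg}^o$, $\Upsilon_{gl}^o$ --- is correct and coincides exactly with what the paper derives in its high-SNR branch. The difference lies in how the two arguments justify the approximation ${\bf T}_g\propto(\bar{\bf R}_g')^{-1}$ and how they dispose of the residual noise term. You invoke large $M$ twice: first to argue that the eigenvalues of $\bar{\bf R}_g'$ are $\Theta(M)$ so that $\alpha{\bf I}_{\bar B}$ is negligible in (\ref{Asym_9_1}), and second to argue that $(\xi_g^o)^2=\Theta(M)$ so that the noise contribution $(1+m_g^o)^2$ is $O(1/M)$ relative to $(1+\chi)c_2$. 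The paper instead splits by SNR regime. For small $\alpha$ (high SNR) it uses the approximation $tr({\bf R}(\beta{\bf R}+\alpha{\bf I})^{-1})\approx\beta^{-1}tr({\bf R}({\bf R}+\alpha{\bf I})^{-1})$ to reach the same scalings you obtained, and the noise is negligible simply because $\alpha$ is small. For large $\alpha$ (low SNR) it uses ${\bf T}_g(\chi)\approx\alpha^{-1}{\bf I}_{\bar B}$, which produces a \emph{different} set of $(1+\chi)$-scalings --- e.g.\ $m_g^o(\chi)\approx(1+\chi)m_g^o(0)$, $m_{gg}'(\chi)\approx(1+\chi)^2 m_{gg}'(0)$, $(\xi_g^o(\chi))^2\approx(1+\chi)^{-1}(\xi_g^o(0))^2$ --- that still leave the SINR approximately $\chi$-free.

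What each buys: the paper's two-regime split avoids the tension you correctly identified between $\Theta(M)$ eigenvalue growth and the uniformly bounded spectral norm hypothesis of Theorem~\ref{thm_BC}, since its approximations are driven by $\alpha$ rather than by covariance scaling. Your route is more unified and is more faithful to the ``large $M$'' wording of the proposition, but it rests on the eigenvalue-growth premise that sits uneasily inside the random-matrix framework; the paper's low-SNR branch is precisely what covers the case where your interference-limited assumption fails.
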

\begin{proof}
See Appendix B.
\end{proof}

\begin{prop}\label{prop_BDS_Xpd} For a large $M$, the asymptotic SINR of the dual precoding with BDS in (\ref{Asym_12}) can be approximately written as the function $\gamma_{gpk}^{BDS,o}(\chi)$ of $\chi$ given by
\begin{eqnarray}\label{Asym_22}
\gamma_{gpk}^{BDS,o}(\chi)\approx \frac{\gamma_{gpk}^{BDS,o}(0)}{ 1+ c_0 \chi},
\end{eqnarray}
where
\begin{eqnarray}\label{Asym_22_1}
\!\!c_0\! =\! E_{g, p}\left[\!\frac{({\xi}_{gp}^o(0))^2\Upsilon_{ggpp}^o(0)}{\frac{({\xi}_{gp}^o(0))^2\Upsilon_{ggpp}^o(0)}{(1+ m_{gp}^o(0))^2}(\tau^2((1\!+\! m_{gp}^o(0))^2\!-\!1)\! +\!1)  \!+\!1}\!\right].\!\!
\end{eqnarray}
\end{prop}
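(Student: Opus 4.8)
The plan is to trace exactly how the XPD parameter $\chi$ enters the fixed-point quantities of Theorem \ref{thm_BDS} and to show it survives only as a multiplicative factor on the intra-group cross-polar interference. First I would note that in the BDS scheme the short-term channel actually driving the precoder of subgroup $(g,p)$ has per-column covariance ${\bf B}_{gp}^H{\bf R}_{gp}{\bf B}_{gp}=({\bf B}_g^s)^H{\bf R}_g^s{\bf B}_g^s$ by (\ref{Sys_5_1}) and (\ref{dual_15}), which contains no $\chi$ and is the same for $p=v$ and $p=h$. Hence ${\bf T}_{gp}$, $m_{gp}^o$, $m'_{gp}$, $m'_{ggpp}$, $\Psi_{gp}^o$ (so $({\xi}_{gp}^o)^2$) and $\Upsilon_{ggpp}^o$ in (\ref{Asym_13_1})--(\ref{Asym_14_1}), together with the additive noise term, all equal their values at $\chi=0$ and are polarization-independent.

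Second, I would evaluate the cross-polar coupling matrices. For $q\neq p$, (\ref{Sys_5_1}) and (\ref{dual_15}) give ${\bf B}_{gq}^H{\bf R}_{gp}{\bf B}_{gq}=\chi({\bf B}_g^s)^H{\bf R}_g^s{\bf B}_g^s=\chi\,\bar{\bf R}_{gq}$, so that from (\ref{Asym_14_1}), $m'_{ggpq}=\chi\,m'_{ggqq}$; combining this with the equality of the self-quantities for $v$ and $h$ (also implied by the ${\bf X}$-symmetry in (\ref{Sys_4})) and with $\frac{\bar N}{\bar B}\approx\frac{\bar N/2-1}{\bar B/2}$ for large $\bar N$, one gets $\sum_{q\neq p}({\xi}_{gq}^o)^2\Upsilon_{ggpq}^o\approx\chi\,({\xi}_{gp}^o)^2\Upsilon_{ggpp}^o$. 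For the inter-group terms, the BD construction (\ref{dual_4})--(\ref{dual_8}) makes ${\bf B}_{lq}^H{\bf R}_{gp}{\bf B}_{lq}\approx{\bf 0}$ for $l\neq g$ as $M\to\infty$ (the columns of ${\bf B}_l^s$ lie in the approximate null space of ${\bf R}_g^s$), exactly the estimate already used for Proposition \ref{prop_BD_Xpd}, so $\Upsilon_{glpq}^o\approx 0$ and those terms drop out regardless of $\chi$.

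Third, I would substitute these observations into (\ref{Asym_12})--(\ref{Asym_12_1}). The numerator and the term $({\xi}_{gp}^o)^2\Upsilon_{ggpp}^o(1-\tau^2(1-(1+m_{gp}^o)^2))$ stay frozen at their $\chi=0$ values, while the bracket multiplying $(1+m_{gp}^o)^2$ reduces to $1+\chi({\xi}_{gp}^o)^2\Upsilon_{ggpp}^o$. Thus $IN_{gpk}^o(\chi)\approx IN_{gpk}^o(0)+\chi({\xi}_{gp}^o(0))^2\Upsilon_{ggpp}^o(0)(1+m_{gp}^o(0))^2$, and since the numerator equals $\gamma_{gpk}^{BDS,o}(0)\,IN_{gpk}^o(0)$, dividing yields $\gamma_{gpk}^{BDS,o}(\chi)\approx\gamma_{gpk}^{BDS,o}(0)/(1+\chi\,c_{gp})$ with $c_{gp}=({\xi}_{gp}^o(0))^2\Upsilon_{ggpp}^o(0)(1+m_{gp}^o(0))^2/IN_{gpk}^o(0)$. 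Dividing top and bottom of $c_{gp}$ by $(1+m_{gp}^o(0))^2$ and rewriting $1-\tau^2(1-(1+m_{gp}^o(0))^2)=\tau^2((1+m_{gp}^o(0))^2-1)+1$ casts $c_{gp}$ exactly in the form of the bracketed expression in (\ref{Asym_22_1}); setting $c_0=E_{g,p}[c_{gp}]$ then gives (\ref{Asym_22}).

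The bookkeeping in the first two steps is routine; the main obstacle is the two approximations behind the third. One is controlling $\|{\bf B}_{lq}^H{\bf R}_{gp}{\bf B}_{lq}\|\to 0$ for $l\neq g$ as $M\to\infty$, which relies on the angular-separation/one-ring assumptions invoked after (\ref{dual_3}) and is identical to the technical point behind Proposition \ref{prop_BD_Xpd}. The other is replacing the per-subgroup coefficient $c_{gp}$ by the single scalar $c_0$: by the ${\bf X}$-symmetry $c_{gp}$ is already polarization-independent, so only a group average is being taken, and this should be presented as a first-order approximation adopted so that the $\chi$-dependence of the BDS rate is summarized by one interpretable quantity, as needed for the mode-switching rule in Section \ref{sec:newprecoding}.
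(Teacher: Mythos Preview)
Your proposal is correct and follows essentially the same route as the paper's Appendix~C: identify that $\bar{\bf R}_{gp}=({\bf B}_g^s)^H{\bf R}_g^s{\bf B}_g^s$ is $\chi$-free, show the cross-polar coupling ${\bf B}_{gq}^H{\bf R}_{gp}{\bf B}_{gq}=\chi\bar{\bf R}_{gq}$ contributes a single additive $\chi$-term to $IN_{gpk}^o$, neglect the inter-group contributions via the BD null-space property, and then factor out $IN_{gpk}^o(0)$ and average over $(g,p)$. The only cosmetic difference is that the paper first carries the inter-group terms through to obtain $IN_{gpk}^o(\chi)=IN_{gpk}^o(0)+\chi\bigl(({\xi}_{gp}^o)^2\Upsilon_{ggpp}^o+\sum_{l\neq g}({\xi}_{lp}^o)^2\Upsilon_{glpp}^o\bigr)(1+m_{gp}^o)^2$ and then invokes $\Upsilon_{ggpp}^o\gg\Upsilon_{glpp}^o$, whereas you drop them one step earlier at the level of ${\bf B}_{lq}^H{\bf R}_{gp}{\bf B}_{lq}\approx{\bf 0}$; both are the same approximation, and your explicit mention of $\frac{\bar N}{\bar B}\approx\frac{\bar N/2-1}{\bar B/2}$ is a detail the paper leaves implicit.
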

\begin{proof}
See Appendix C.
\end{proof}

\begin{remark}\label{remark3}
Note that, from Proposition 2, the asymptotic SINR of the dual precoding with BDS decreases when $\chi$ increases. This is because the subgroups are formed with the assumption that the interferences through the cross-polarized channels are perfectly nulled out in Section \ref{ssec:dual_spatial_XPD} and ${\bf P}_{gp}$ in (\ref{dual_17}) is determined based only on co-polarized CSIT. Therefore, the interference power increases proportionally to $\chi$. In contrast, because the dual precoding with BD nulls out the intra-group interferences based on both co/cross polarized CSIT, it exhibits performances somehow robust to the variation of the polarization parameter $\chi$. In addition, from Theorem \ref{thm_BD}, \ref{thm_BDS}, and Corollary \ref{cor_BD}, we can also see that $\gamma_{gpk}^{BDS,o}(0) = \gamma_{gpk}^{BD,o}(0)$.
\end{remark}
%\begin{remark}\label{remark4}
%Because $\frac{({\xi}_{gp}^o(0))^2\Upsilon_{ggpp}^o(0)}{ (1+ m_{gp}^o(0))^2}$ implies an asymptotic intra-subgroup interference in (\ref{Asym_12_1}), for low SNR (i.e., the noise-limited system, $\frac{({\xi}_{gp}^o(0))^2\Upsilon_{ggpp}^o(0)}{ (1+ m_{gp}^o(0))^2} \ll 1$ ),
%\begin{eqnarray}\label{Asym_22_2}
%c_{0,l} \approx  \frac{1}{2G}\sum_{g=1}^G \sum_{p\in\{v,h\}} {({\xi}_{gp}^o(0))^2\Upsilon_{ggpp}^o(0)},
%\end{eqnarray}
%which is independent of the channel accuracy parameter $\tau$. For high SNR (i.e., the interference-limited system, $\frac{({\xi}_{gp}^o(0))^2\Upsilon_{ggpp}^o(0)}{ (1+ m_{gp}^o(0))^2} \gg 1$ and $m_{gp}^o(0)\gg1$),
%\begin{eqnarray}\label{Asym_22_3}
%c_{0,h} \approx \frac{1}{2G}\sum_{g=1}^G \sum_{p\in\{v,h\}} \frac{(m_{gp}^o(0))^2}{\tau^2(m_{gp}^o(0))^2 +1},
%\end{eqnarray}
%which is inversely proportional to the channel accuracy parameter $\tau$. That is, at high SNR, as the channel is more accurate, the performance of BDS is more affected by $XPD$.
%%Accordingly, $c_0$ can be approximated as
%%\begin{eqnarray}\label{Asym_22_3}
%%c_{0} &=& E_{g, p}\left[\frac{1}{\frac{1}{(1+ m_{gp}^o(0))^2}(\tau^2((1+ m_{gp}^o(0))^2-1) +1)  +\frac{1}{({\xi}_{gp}^o(0))^2\Upsilon_{ggpp}^o(0)}}\right]\nonumber\\ &\approx &\frac{1}{ E_{g, p}\left[\frac{1}{(m_{gp}^o(0))^2}(\tau^2( m_{gp}^o(0))^2 +1)\right]  +E_{g, p}\left[\frac{1}{({\xi}_{gp}^o(0))^2\Upsilon_{ggpp}^o(0)}\right]}  \approx \frac{1}{1/c_{0,h} + 1/c_{0,l} }.
%%\end{eqnarray}
%\end{remark}

\section{Discussion}
\label{sec:newprecoding}
\subsection{A new dual structured precoding/feedback}\label{ssec:newprecoding}

Even though the sum-rate performance of the dual precoding with BDS decreases as $\chi$ increases, it can utilize more accurate short-term CSIT compared to the dual precoding with BD under the same number of feedback bits as stated in Section \ref{ssec:dual_spatial_XPD}.
Assuming the CSI is perfectly estimated at MSs, when random vector quantization (RVQ) with $N_B$ bits is utilized \cite{N_Jindal2}, the quantization error for the short-term CSIT in the dual precoding with BD (i.e., the columns of ${\bf G}_{g}$ in (\ref{Sys_3_1})) is upper bounded as\footnote{Here, the codebook is fixed given $r$ and there is no adaptive codebook that would adapt as a function of $r$ and XPD. Of course, if $r$ changes, the codebook can be adaptively designed with respect to $r$, but it is out of scope of this paper.}
\begin{eqnarray}\label{newprecoding_1}
\tau_{BD}^2 < 2^{-\frac{N_B}{2r-1}}.
\end{eqnarray}
For the dual precoding with BDS, the quantization error is upper bounded as
\begin{eqnarray}\label{newprecoding_2}
\tau_{BDS}^2 < 2^{-\frac{N_B}{r-1}}.
\end{eqnarray}
Because the bound is tight for a large $N_B$ \cite{N_Jindal2}, by assuming $\tau_{BD}^2 = 2^{-\frac{N_B}{2r-1}} (\approx \tau_{BDS}$), we have the following proposition.

\begin{prop}\label{prop_Swhitching} For a given $\chi$ and a large $M$, when
%\begin{eqnarray}\label{newprecoding_3}
%B  \leq ((1+ m_g^o(0))^2 + \frac{(1+ m_g^o(0))^2 -1}{ ({\xi}_g^o(0))^2 \Upsilon_{gg}^o(0)}) 2^{-\frac{B}{2r-1}},
%\end{eqnarray}
\begin{eqnarray}\label{newprecoding_3}
\!\!\!N_B \!\!&\!\!\lesssim\! \!&\!\!(2r\!-\!1)\!\Biggl(\!\log_2\!\!\left(\!1\!+\!E_{g,p}\!\!\left(\! \frac{(1\!+\! m_{gp}^o(0))^2 -1}{ ({\xi}_{gp}^o(0))^2\! \Upsilon_{ggpp}^o(0)(1\!+ \! m_{gp}^o(0))^2} \!\right)\!\!\right)\!\! \nonumber\\ \!\! &\!\!\!\!&\!- \log_2 \chi \!\Biggr),\!\!\!
\end{eqnarray}
the dual precoding with BDS outperforms that with BD.
\end{prop}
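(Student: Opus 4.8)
The plan is to reduce the sum-rate inequality $R_{BDS}^o>R_{BD}^o$ to a comparison of the two asymptotic per-user SINRs and then to combine Propositions~\ref{prop_BD_Xpd} and~\ref{prop_BDS_Xpd} with Remark~\ref{remark3}. Since, by (\ref{Asym_10_1}) and (\ref{Asym_15}), both asymptotic sum rates are of the common form $\sum_{g=1}^{G}\bar N\log_2(1+\gamma_{g}^{\cdot,o})$ and $\log_2$ is increasing, it suffices to compare $\gamma_{gpk}^{BDS,o}$ with $\gamma_{gpk}^{BD,o}$ user by user, the per-group/per-polarization variability being absorbed into the final $E_{g,p}[\cdot]$. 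Writing the dependence on the CSIT error explicitly as $\gamma_{gpk}^{BD,o}(\chi;\tau)$, Proposition~\ref{prop_BD_Xpd} gives $\gamma_{gpk}^{BD,o}(\chi;\tau)\approx\gamma_{gpk}^{BD,o}(0;\tau)$, Proposition~\ref{prop_BDS_Xpd} gives $\gamma_{gpk}^{BDS,o}(\chi;\tau)\approx\gamma_{gpk}^{BDS,o}(0;\tau)/(1+c_{0}\chi)$, and Remark~\ref{remark3} gives $\gamma_{gpk}^{BDS,o}(0;\tau)=\gamma_{gpk}^{BD,o}(0;\tau)$. Hence BDS beats BD precisely when
\[
\gamma_{gpk}^{BD,o}(0;\tau_{BDS})>(1+c_{0}\chi)\,\gamma_{gpk}^{BD,o}(0;\tau_{BD}).
\]

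Next I would insert the CSIT-quality gap. Under RVQ with a common budget of $N_B$ bits, BD quantizes a $2r$-dimensional vector and BDS an $r$-dimensional one, so by (\ref{newprecoding_1})--(\ref{newprecoding_2}) the tight large-$N_B$ approximations $\tau_{BD}^2\approx 2^{-N_B/(2r-1)}$ and $\tau_{BDS}^2\approx 2^{-N_B/(r-1)}\ll\tau_{BD}^2$ hold; in particular $\tau_{BDS}\approx 0$, so $\gamma_{gpk}^{BD,o}(0;\tau_{BDS})$ may be replaced by the perfect-CSIT value $\gamma_{gpk}^{BD,o}(0;0)$. The structural fact that makes the rest tractable is that, in the closed form (\ref{Asym_5}) evaluated at $\chi=0$, all the fixed-point quantities $m_{gp}^o$, ${\xi}_g^o$, $\Upsilon_{glp}^o$, $\Psi_g^o$ are independent of $\tau$, which enters only through the two explicit factors $(1-\tau^2)$ and $\bigl(1-\tau^2(1-(1+m_{gp}^o)^2)\bigr)$; moreover, in the asymptotic regime with disjoint AoDs (footnote~4), the inter-group terms $\Upsilon_{glp}^o$ are negligible, so the denominator reduces to the noise floor $(1+m_{gp}^o)^2$ plus the residual multiuser-interference term $({\xi}_{gp}^o)^2\Upsilon_{ggpp}^o$. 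Linearizing the displayed inequality in the small quantity $\tau_{BD}^2$, expanding $c_0$ of (\ref{Asym_22_1}) to lowest order in $\tau^2$, dropping cross terms of order $\chi\tau_{BD}^2$, and cancelling the common signal numerators, the inequality collapses to a linear one in $\chi$, namely
\[
\chi\ \lesssim\ \tau_{BD}^2\left(1+E_{g,p}\!\left(\frac{(1+m_{gp}^o(0))^2-1}{({\xi}_{gp}^o(0))^2\,\Upsilon_{ggpp}^o(0)\,(1+m_{gp}^o(0))^2}\right)\right),
\]
where by Remark~\ref{remark3} the $\chi=0$ quantities are taken to be those of Theorem~\ref{thm_BDS}.

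Finally I would substitute $\tau_{BD}^2\approx 2^{-N_B/(2r-1)}$, take $\log_2$ of both sides, and solve for $N_B$, which rearranges directly into (\ref{newprecoding_3}). The step I expect to be the main obstacle is the middle one: turning the rational-function inequality into the advertised closed form requires being careful about which terms to keep — one has to argue that in the interference-limited regime of interest the residual multiuser-interference contribution dominates the noise-plus-inter-group part, and that the $\tau_{BD}^4$, $\tau_{BD}^2\tau_{BDS}^2$ and $\chi\tau_{BD}^2$ corrections are negligible, so that the leading-order balance yields exactly the bracketed expectation. A secondary subtlety is that the user-by-user comparison is promoted to a single condition through $E_{g,p}[\cdot]$, so (\ref{newprecoding_3}) should be read as an average-SINR (sum-rate) criterion rather than a guarantee for every individual group.
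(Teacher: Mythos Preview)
Your proposal is correct and follows essentially the same route as the paper: compare the two asymptotic SINRs via Propositions~\ref{prop_BD_Xpd}--\ref{prop_BDS_Xpd} and Remark~\ref{remark3}, reduce the resulting rational-function inequality to a linear bound on $\chi$ in terms of $\tau_{BD}^2$, then invert the RVQ relation $\tau_{BD}^2\approx 2^{-N_B/(2r-1)}$ and average over $g,p$. The one technical refinement in the paper is that instead of setting $\tau_{BDS}\approx 0$ it keeps the sharper relation $\tau_{BDS}^2\approx\tau_{BD}^4$ (so $\tau_{BDS}\approx\tau_{BD}^2$), which lets the exact factorization $(1-\tau^4)=(1-\tau^2)(1+\tau^2)$ replace your first-order Taylor step and makes the passage from the rational inequality to the linear bound on $\chi$ purely algebraic, without having to separately justify dropping the $\chi\tau_{BD}^2$ and $\tau_{BD}^4$ corrections.
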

\begin{proof}
From (\ref{Asym_12}) and Proposition 2, the SINR of the dual precoding with BDS can be written as
\begin{eqnarray}\label{newprecoding_4}
\!\!\!\gamma_{gpk}^{BDS,o}(\chi)\!=\! \frac{A_0(1-\tau_{BDS}^2)}{\!( B_0 (1 \!+\!D_0\tau_{BDS}^2)\!+\! (1\!+\!E_0)(D_0\!+\!1)) (1\!+\!c_0\chi)\!},\!\!\!
\end{eqnarray}
where $A_0 = \frac{P}{N}({\xi}_{gp}^o(0))^2(m_{gp}^o(0))^2$, $B_0 = ({\xi}_{gp}^o(0))^2 \Upsilon_{ggpp}^o(0) $, $D_0 = (1+ m_{gp}^o(0))^2 -1 $, and $E_0= \sum_{l \neq g} ({\xi}_{lp}^o(0))^2 \Upsilon_{glpp}^o(0)$. Note that $\gamma_{gpk}^{BD,o}(0) = \gamma_{gpk}^{BDS,o}(0)$ in Remark \ref{remark3}, assuming the same channel accuracy (i.e., $\tau_{BD}^2 = \tau_{BDS}^2$). Therefore, by setting $\chi=0$ in (\ref{newprecoding_4}), from Proposition 1, the SINR of the dual precoding with BD can then be given as
\begin{eqnarray}\label{newprecoding_5}
\gamma_{gpk}^{BD,o}(\chi)= \frac{A_0(1-\tau_{BD}^2)}{ B_0 (1+D_0\tau_{BD}^2)+ (1+E_0)(D_0+1) }.
\end{eqnarray}
Because the dual precoding with BDS outperforms that with BD when $\gamma_{gpk}^{BDS,o}(\chi) \geq \gamma_{gpk}^{BD,o}(\chi) $, by letting $\tau_{BD}^2 \triangleq \tau^2 = 2^{-\frac{N_B}{2r-1}}(\approx \tau_{BDS})$, we have
\begin{eqnarray}\label{newprecoding_6}
\frac{(1-\tau^4)}{( B_0 (1+D_0\tau^4)+ (1+E_0)(D_0+1)  )(1+c_0\chi)} \nonumber\\\geq \frac{(1-\tau^2)}{ B_0 (1 +D_0\tau^2)+ (1+E_0)(D_0+1)  }.
\end{eqnarray}
After a simple calculation, we have
\begin{eqnarray}\label{newprecoding_7}
\chi \leq \frac{(B_0D_0 +B_0 +(1+E_0)(D_0+1) )\tau^2 }{c_0 (B_0D_0 \tau^4+B_0 +(1+E_0)(D_0+1) )}.
\end{eqnarray}
From (\ref{Asym_22_1}) and the fact that ${E}_0 \ll 1$ due to the BD, $c_0 \approx \frac{B_0(D_0+1)}{B_0 D_0\tau^4 +B_0 +D_0 +1}$ and we have
\begin{eqnarray}\label{newprecoding_8}
\!\!\chi &\!\leq\!& (1+ \frac{D_0}{B_0(D_0+1)})\tau^2 \nonumber\\\!&\!=\!&\!\left(1+ \frac{(1+ m_{gp}^o(0))^2 -1}{ ({\xi}_{gp}^o(0))^2 \Upsilon_{ggpp}^o(0)(1+ m_{gp}^o(0))^2} \right)\tau^2,\!
\end{eqnarray}
which induces (\ref{newprecoding_3}) by taking the expectation over $g$ and $p$ in (\ref{newprecoding_8}).
\end{proof}

\begin{remark}\label{remark4}
From Proposition \ref{prop_Swhitching}, when the feedback bits are not enough to describe the short-term CSIT accurately, the dual precoding with BDS exhibits a better performance than that with BD. That is, it is preferable that by forming the co-polarized subgroup, each MS feeds back the short-term CSI from the co-polarized transmit antenna elements.
In addition, from (\ref{newprecoding_3}), when $\chi \rightarrow 0 $, the dual precoding with BDS always exhibits better performance than that with BD. Note that for high SNR (i.e., $m_{gp}^o(0)\gg 1$), the expectation term in (\ref{newprecoding_3}) is approximated as $E_{g,p}\left( \frac{1}{ ({\xi}_{gp}^o(0))^2 \Upsilon_{ggpp}^o(0)} \right)$, which is inversely proportional to the intra-subgroup interference power from (\ref{Asym_12_1}). Hence, a smaller intra-subgroup interference widens the region where BDS outperforms BD. That is, if the transmit signals to the co-polarized MSs can be asymptotically well separated by the linear precoding (less intra-subgroup interference), the feedback of the short-term CSI of the co-polarized channel with a higher accuracy is preferable.
\end{remark}
Therefore, motivated by Proposition \ref{prop_Swhitching}, a new dual precoding/feedback scheme can be described in Fig \ref{Fig_proposed}. Note that, depending on the long-term CSI (spatial correlation, polarization) and the number of feedback bits (or, the short-term CSIT accuracy $\tau$), the dual precoding is switched between BD and BDS. In other words, by analyzing the asymptotic performance of BD and BDS, we can propose a new dual structured precoding which outperforms both BD and BDS by balancing the weakness of BD (low CSIT accuracy across the whole array) and BDS (performance degradation due to large polarization parameter $\chi$).
That is, given the same feedback overhead, for small $\chi$, the dual precoding with BDS will exhibit better performance, while, for large $\chi$, that with BD will show better performance. However, because the new dual precoding is switched between BD and BDS based on Proposition 3, the new dual precoding will show better performance than other two schemes.
\begin{figure}
\begin{center}
\begin{tabular}{c}
\includegraphics[height=4.2cm]{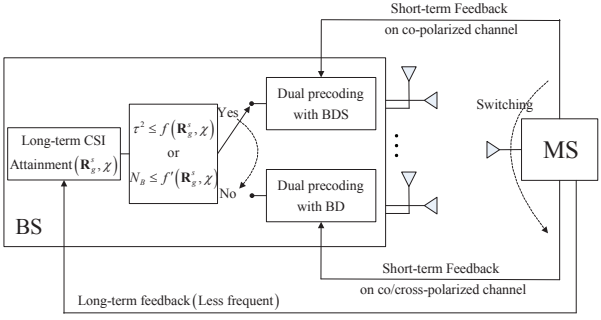}
\end{tabular}
\end{center}
\caption[Fig_proposed]
%>>>> use \label inside caption to get Fig. number with \ref{}
{ \label{Fig_proposed} Block diagram of a new dual precoding/feedback scheme.}
\end{figure}

\subsection{3D dual structured precoding}\label{ssec:3D_BF}
Motivated by 3D beamforming \cite{Caire,LiJi}, the proposed scheme can also be extended to the scenario of 3D dual structured precoding. Assuming that the ${M_E}\times{M_A}$ uniform planar array with dual-polarized antenna elements is exploited at BS and there are $L$ elevation regions. For simplicity, each elevation region has the same number of groups, $G$, as in Fig \ref{Fig_3D_BF}. Note that the elevation angular spread depends on the distance $d_{gl}$ between the BS and the $g$th group of the $l$th region and the radius of the ring of scatterer $s_{gl}$. By letting ${\bf h}_{g_kl}$ be the $2M_AM_E\times 1$ vectorized channel of the $k$th MS in the $g$th group of the $l$th region, it can be written as
\begin{eqnarray}\label{3DBF_1}
{\bf h}_{g_kl} = (({\bf I}_2\otimes{\bf U}_{glA})\otimes {\bf U}_{lE})({\bf \Lambda}_{glA}^{\frac{1}{2}}\otimes {\bf \Lambda}_{lE}^{\frac{1}{2}}){{\bf g}_{g_kl}},
\end{eqnarray}
where ${\bf \Lambda}_{glA}$ and ${\bf \Lambda}_{lE}$ are the $r_{glA}\times r_{glA}$ and $r_{lE}\times r_{lE}$ diagonal matrices with non-zero eigenvalues of the spatial correlation matrices
${\bf R}_{glA}^s$ and ${\bf R}_{lE}^s$ over the azimuth and elevation directions, respectively, and ${\bf U}_{glA}$ and ${\bf U}_{lE}$ are the matrices of the associated eigenvectors. Here, ${\bf g}_{g_kl} = [ {\bf g}_{g_klv}^T,~ \chi {\bf g}_{g_klh}^T]^T$ (resp. ${\bf g}_{g_kl} = [ \chi{\bf g}_{g_klv}^T,~ {\bf g}_{g_klh}^T]^T$) for vertically (resp. horizontally) polarized MSs and ${\bf g}_{g_klp}$ is a $r_{glA} r_{lE}\times 1$ vector whose elements are complex Gaussian distributed with zero mean and unit variance. Then, the 3D dual structured precoding signal can be given as
\begin{eqnarray}\label{3DBF_2}
{\bf x} = \sum_{l=1}^L(\sum_{g=1}^G{\bf V}_{gl} {\bf d}_{gl} )\otimes {\bf q}_l,
\end{eqnarray}
where ${\bf q}_l$ is the preprocessing vector based on ${\bf R}_{lE}^s$ that nulls out the interferences from the other elevation regions. Similarly to Section \ref{ssec:preprocess_spatial}, ${\bf q}_l$ can be computed such that ${\bf q}_l^H {\bf U}_{-lE}=0 $ with ${\bf U}_{-lE}= [{\bf U}_{1E},...,{\bf U}_{l-1E},{\bf U}_{l+1E},...,{\bf U}_{LE}]$. Then, after a simple manipulation, the received signal ${\bf y}_{gl}$ of the $g$th group in the $l$th region is given as
\begin{eqnarray}\label{3DBF_3}
\!\!\!{\bf y}_{gl}\!&\!\!\!\! =\!\!\!\!&\!  \sum_{g'\!=\!1}^{G}\!{\bf G}_{gl}^H (\!{\bf \Lambda}_{glA}^{\frac{1}{2}}\!\otimes \!{\bf \Lambda}_{lE}^{\frac{1}{2}}\!)\!(\!(\!({\bf I}_2\!\otimes\!{\bf U}_{glA} \!)^H\!{\bf V}_{g'l} {\bf d}_{g'l})\!\otimes\! (\!{\bf U}_{lE}^H{\bf q}_l\!)\!) \!+\!{\bf n}_{gl}\!\!\nonumber\\
\!\!& \!\!\!\!\!= \!\!\!\!&\! \sum_{g'=1}^G\sqrt{\tilde\lambda_{l}}{\bf G}_{gl}^H{\bf \Lambda}_{glA}^{\frac{1}{2}}({\bf I}_2\otimes{\bf U}_{glA} )^H{\bf V}_{g'l} {\bf d}_{g'l} +{\bf n}_{gl},\!\!
\end{eqnarray}
where $\tilde\lambda_{l} = {\bf q}_l^H{\bf R}_{lE}^s{\bf q}_l$. Note that after the vertical preprocessing based on long-term CSIT (elevation), (\ref{3DBF_3}) is the (2-D spatial domain) equivalent system in Section \ref{sec:systemmodel} and the new dual structured precoding in Section \ref{ssec:newprecoding} can be applied to (\ref{3DBF_3}) with a channel scaling constant $\tilde\lambda_{l}$ for the $l$th elevation region.
\begin{figure}
\begin{center}
\begin{tabular}{c}
\includegraphics[height=4.5cm]{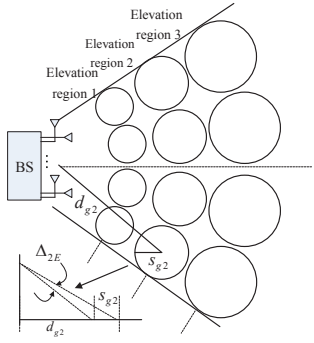}
\end{tabular}
\end{center}
\caption[Fig_3D_BF]
%>>>> use \label inside caption to get Fig. number with \ref{}
{ \label{Fig_3D_BF} Block diagram of a 3D dual polarized multi-user downlink system.}
\end{figure}

\subsection{Polarization mismatch}\label{ssec:polmismatch}
When the antenna polarization between the transmitter and the receiver is not perfectly aligned (i.e., polarization mismatch), the performance can be degraded for both single polarized antenna system and dual polarized antenna system \cite{HJoungYook,Coldrey}. In \cite{HJoungYook}, the polarization mismatch can be expressed by the rotation matrix with a mismatched angle $\theta^{ms}$ (i.e. $\left[\begin{smallmatrix}\cos\theta^{ms}& -\sin\theta^{ms}\\\sin\theta^{ms}&\cos\theta^{ms}\end{smallmatrix}\right]$). Accordingly, we let $\theta_{gk}^{ms}\sim U[-\theta_{\max}^{ms}, \theta_{\max}^{ms}]$ denote the mismatched polarization angle for the $k$th user in the $g$th group, where $U[a, b]$ indicates the uniform distribution between $a$ and $b$ and $\theta_{\max}^{ms}$ is the maximum value of the mismatched angle. Note that the case of $\theta_{\max}^{ms}=0$ indicates the MSs are perfectly aligned with either vertically or horizontally polarized antenna elements of the BS. Then, from (\ref{Sys_5}) and the above observation, the polarization mismatched channel ${\bf h}_{gk}^{ms}$ for the $k$th MS in the $g$th group can be given as
\begin{eqnarray}\label{Sys_5_1_rev}
{\bf h}_{gk}^{ms} =\left[\begin{array}{c}{\bf U}_g{\bf \Lambda}_g^{\frac{1}{2}} (\cos \theta_{gk}^{ms} - \sqrt{{\chi}} \sin\theta_{gk}^{ms} )({\bf g}_{gk})_{1:r_g}\\{\bf U}_g{\bf \Lambda}_g^{\frac{1}{2}}(\sin \theta_{gk}^{ms} + \sqrt{{\chi}} \cos\theta_{gk}^{ms} )({\bf g}_{gk})_{r_g+1:2r_g}\end{array}\right] ,
\end{eqnarray}
for vertical polarized MSs, and
\begin{eqnarray}
{\bf h}_{gk}^{ms} =\left[\begin{array}{c}{\bf U}_g{\bf \Lambda}_g^{\frac{1}{2}} (\sqrt{{\chi}}\cos \theta_{gk}^{ms} -  \sin\theta_{gk}^{ms} )({\bf g}_{gk})_{1:r_g}\\{\bf U}_g{\bf \Lambda}_g^{\frac{1}{2}}(\sqrt{{\chi}}\sin \theta_{gk}^{ms} +  \cos\theta_{gk}^{ms} )({\bf g}_{gk})_{r_g+1:2r_g}\end{array}\right],\label{Sys_5_2_rev}
\end{eqnarray}
for horizontal polarized MSs. From (\ref{Sys_5_1_rev}), the covariance matrix of the vertically co-polarized MS subgroup ${\bf R}_{gv}^{ms}$ can be given as
\begin{eqnarray}\label{Sys_5_3_rev}
{\bf R}_{gv}^{ms} =\left[\begin{array}{cc}{\bf U}_g{\bf \Lambda}_g^{\frac{1}{2}} {\bf D}_{vv}^{ms} {\bf \Lambda}_g^{\frac{H}{2}} {\bf U}_g^H&{\bf 0}\\ {\bf 0} & {\bf U}_g{\bf \Lambda}_g^{\frac{1}{2}} {\bf D}_{hv}^{ms} {\bf \Lambda}_g^{\frac{H}{2}} {\bf U}_g^H\end{array}\right],
\end{eqnarray}
where
\begin{eqnarray}\label{Sys_5_4_rev}
{\bf D}_{vv}^{ms} =E\left[ \frac{2}{\bar N}\sum_{k=1}^{\frac{\bar N}{2}}(\cos \theta_{gk}^{ms} - \sqrt{{\chi}} \sin\theta_{gk}^{ms} )^2\right] {\bf I}_{r_g} \nonumber\\=\left(\frac{1}{2} +\frac{\sin2\theta_{\max}^{ms}}{4\theta_{\max}^{ms}}+\chi \left(\frac{1}{2} -\frac{\sin2\theta_{\max}^{ms}}{4\theta_{\max}^{ms}}\right) \right) {\bf I}_{r_g},
\nonumber\\
{\bf D}_{hv}^{ms} = E\left[ \frac{2}{\bar N}\sum_{k=1}^{\frac{\bar N}{2}}(\sin \theta_{gk}^{ms} + \sqrt{{\chi}} \cos\theta_{gk}^{ms} )^2\right] {\bf I}_{r_g} \nonumber\\=\left(\frac{1}{2} -\frac{\sin2\theta_{\max}^{ms}}{4\theta_{\max}^{ms}}+\chi \left(\frac{1}{2} +\frac{\sin2\theta_{\max}^{ms}}{4\theta_{\max}^{ms}}\right) \right) {\bf I}_{r_g}.\nonumber
\end{eqnarray}
Therefore, we can have
\begin{eqnarray}\label{Sys_5_5_rev}
\!{\bf R}_{gv}^{ms} \!=\! c_{eff}\!\left[\!\begin{array}{cc}{\bf R}_g^s & {\bf 0}\\{\bf 0} & \chi_{eff}\!{\bf R}_g^s \end{array}\!\right], ~ {\bf R}_{gh}^{ms}\! =\! c_{eff}\!\left[\!\begin{array}{cc}\chi_{eff}\!{\bf R}_g^s & {\bf 0}\\{\bf 0} & {\bf R}_g^s \end{array}\!\right]\!
\end{eqnarray}
where $c_{eff} = \left(\frac{1}{2} +\frac{\sin2\theta_{\max}^{ms}}{4\theta_{\max}^{ms}}+\chi \left(\frac{1}{2} -\frac{\sin2\theta_{\max}^{ms}}{4\theta_{\max}^{ms}}\right) \right)  $ and $\chi_{eff}= \frac{1}{c_{eff}} \left(\frac{1}{2} -\frac{\sin2\theta_{\max}^{ms}}{4\theta_{\max}^{ms}}+\chi \left(\frac{1}{2} +\frac{\sin2\theta_{\max}^{ms}}{4\theta_{\max}^{ms}}\right) \right)$. Note that $c_{eff} \leq 1$ and the equality is satisfied when $\theta_{\max}^{ms} =0$, i.e., polarization is perfectly aligned. Interestingly, it is known that the effect of the polarization mismatch can be captured by the multiplication of the long-term channel with a scalar ($\leq 1$) %such as the channel path loss
\cite{Coldrey}, which is also consistent with $c_{eff}$ in (\ref{Sys_5_5_rev}). Furthermore, considering the polarization mismatch, we can use $\chi_{eff}$ instead of $\chi$ in the dual precoding with the mode switching described in Remark \ref{remark4} and Fig. \ref{Fig_proposed}.

\section{Simulation Results}
\label{sec:simulation}

Throughout the simulations, we consider the one-ring model for the spatially correlated channel \cite{Shiu_book, ShiuFoschini}. The correlation between the channel coefficients of antenna elements $1\leq m, n \leq M$ is given by
\begin{eqnarray}\label{Sys_6}
[{\bf R}_g^s ]_{m,n} = \frac{1}{2\Delta_g}\int_{-\Delta_g}^{\Delta_g}e^{-j \pi \lambda_{0}^{-1} {\bf \Omega}(\alpha +\theta_g)({\bf r}_m - {\bf r}_n)}d\alpha,
\end{eqnarray}
where $\theta_g$ and $\Delta_g$ are, respectively, the azimuth angle at which the $g$th group is located and the angular spread of the departure waves to the $g$th group which is determined as $\Delta_g \approx tan^{-1} (s_g/d_g)$. Here, $s_g$ and $d_g$ are, respectively, the radius of the ring of scatterers for the $g$th group and the distance between the BS and the $g$th group. (see Fig. \ref{Fig_system_block}). The parameter $ \lambda_{0}$ is the wavelength of signal, ${\bf r}_m =[x_m, y_m]^T$ is the position vector of the $m$th antenna element, and ${\bf \Omega}(\alpha)$ indicates the wave vector with the angle-of-departure (AoD), $\alpha$, given by ${\bf \Omega}(\alpha) = (\cos(\alpha), \sin(\alpha))$.

\subsubsection{Suitability of multi-polarized antenna elements in multi-user Massive MIMO system}
To see the suitability of the dual polarized antenna elements in multi-user Massive MIMO system, we have compared the performance of the linear array antennas with dual polarized antenna elements and single polarized antenna elements, when the dual structured precoding scheme with BD is applied. The number of antenna elements in both single polarized and dual polarized linear arrays are set as $M=120$. The dual polarized linear array is then composed of 60 vertically and 60 horizontally polarized antenna elements. Here, the XPD parameter is set as $\chi = 0.1$. Total 32 MSs with a single antenna element are clustered into 4 groups having the same number of MSs per group ($N=32$, $G=4$, $\bar N = 8$). Here, we assumed that when the single polarized linear array is deployed at the BS, the antenna elements of all MSs are co-polarized with those of BS (which is the optimal scenario for the single polarized linear array). For the dual polarized array case, in each group, $\frac{\bar N}{2}$ vertically polarized and $\frac{\bar N}{2}$ horizontally polarized MSs coexist.\footnote{Note that this environment can be made by choosing proper $\frac{\bar N}{2}$ vertically polarized and $\frac{\bar N}{2}$ horizontally polarized MSs when we have enough MSs in a cell. user selection (multi-user diversity) in this paper.} For preprocessing, we set $\bar B = 14$ for both the single/dual polarized cases such that $\bar N \leq \bar B \leq ({M} - (G-1)r')$ and $\bar B \leq r'$, where $r'$ is the minimum among the rank of ${\bf R}_g$, $g=1,...,G$.. In addition, $\Delta_1=...=\Delta_4 = \Delta = \frac{8}{180}\pi$ and $\theta_g = -\frac{\pi}{4} +\frac{\pi}{6}(g-1)$ for $g=1,...,4$.
Fig. \ref{Fig3_Dual_vs_Single} shows the sum-rate curves for the arrays with dual polarized antenna elements, $d_s = \frac{\lambda_0}{2}$ and with single polarized antenna elements, $d_s = \{\frac{\lambda_0}{2}, \frac{\lambda_0}{4}\}$, where $d_s$ is the inter-antenna distance.
We can see that the dual polarized array with the array size of $30\lambda_0$ exhibits better performance than the single polarized one with the size of $60\lambda_0$ (corresponding to $d_s = \frac{\lambda_0}{2}$). Furthermore, we can see that, if we let the single polarized array have the same size as the dual polarized one by reducing its inter-antenna space, its performance worsens significantly. Accordingly, the multi-polarized antenna can be one possible solution that partially alleviates the space limitation of Massive MIMO system.
\begin{figure}
\begin{center}
\begin{tabular}{c}
\includegraphics[height=5cm]{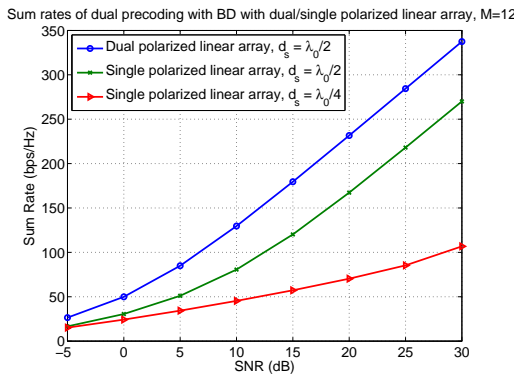}
\end{tabular}
\end{center}
\caption[Fig3_Dual_vs_Single]
%>>>> use \label inside caption to get Fig. number with \ref{}
{ \label{Fig3_Dual_vs_Single} Sum rates of the dual precoding with BD with dual/single polarized linear arrays.}
\end{figure}

\subsubsection{Performance comparison of dual precodings with BD and BDS}

To evaluate the performance of the dual precoding schemes with BD and BDS, we have also run Monte-Carlo (MC) simulations. Here, it is assumed that BS has a dual polarized linear array antenna with $M=120$. It is also assumed that $N=32$, $G=4$, $\bar N = 8$. For preprocessing, we set as $\bar B = \min ( 2 \bar N, 2r )$, where $r$ is the minimum among the rank of ${\bf R}_g^s$, $g=1,...,G$. In addition, ${\bf R}_g^s$ is generated by (\ref{Sys_6}) with $d_s = \frac{\lambda_0}{2}$, $\Delta = \frac{\pi}{12}$, and $\theta_g = -\frac{\pi}{4} +\frac{\pi}{6}(g-1)$ for $g=1,...,4$.

Fig. \ref{Fig4_perfectCSIT} shows the sum rate of the dual precoding with BD and BDS when the perfect CSIT is assumed (i.e., $\tau^2=0$). Note that when $\chi =0$, the dual precoding with BD and BDS exhibit the same sum rate performance, as mentioned in Remark \ref{remark3} and (\ref{dual_21_1}). However, when $\chi =0.1$, the performance of the dual precoding with BDS is degraded, while that of the dual precoding with BD does not change significantly compared to the case of $\chi =0$. Fig. \ref{Fig5_imperfectCSIT} shows the sum rates when $\chi=0$ for the imperfect CSIT with $\tau^2 =0.1$ (i.e., $\tau_{BD}^2= \tau^2$ and $\tau_{BDS}^2= \tau$ from (\ref{newprecoding_1}) and (\ref{newprecoding_2})). We can see that the BDS exhibits better performance than the BD because the BDS can exploit more accurate short-term CSIT due to the feedback of the smaller dimensional channel instance. Interestingly, the gap between the analytic results and MC simulation results becomes larger as SNR increases. This is because the analytical derivation is based on Theorem \ref{thm_BC} with $z = -\alpha$ and the approximation error bound is proportional to $\frac{1}{\alpha} =\frac{\bar B P}{\bar N}$, which is also addressed in [Proposition 12, \citen{Wagner1}]. In Fig. \ref{Fig6_XpdperfectCSIT}, we provide the sum rate curves of dual precoding schemes as a function of $\chi$ for $\tau^2=\{0, 0.5, 1.0, 1.5\}$ when $SNR = 15dB$. Here, the approximated sum-rate is obtained from the approximated SINR in Proposition \ref{prop_BD_Xpd} and \ref{prop_BDS_Xpd}. Note that the performance of the dual precoding with BD is not affected by the variation of $\chi$, while that with BDS decreases as $\chi$ increases. However, the dual precoding with BD is largely affected by $\tau^2$. As $\tau^2$ increases, the region that BDS outperforms BD becomes wider. Note that the crossing point in Fig. \ref{Fig6_XpdperfectCSIT} is located where $\chi \approx \tau^2$, which agrees with (\ref{newprecoding_8}) when $B_0 \gg 1$.

\begin{figure}
\begin{center}
\begin{tabular}{c}
\includegraphics[height=5cm]{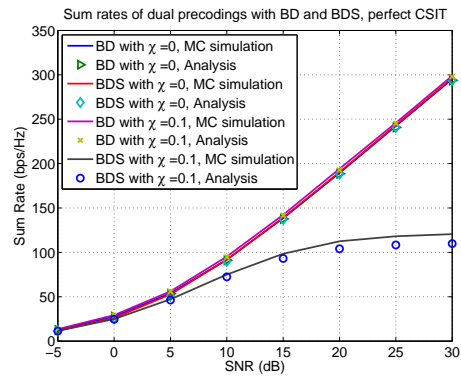}
\end{tabular}
\end{center}
\caption[Fig4_perfectCSIT]
%>>>> use \label inside caption to get Fig. number with \ref{}
{ \label{Fig4_perfectCSIT} Sum rates of the dual precoding with BD and BDS under the perfect CSIT.}
\end{figure}

\begin{figure}
\begin{center}
\begin{tabular}{c}
\includegraphics[height=5cm]{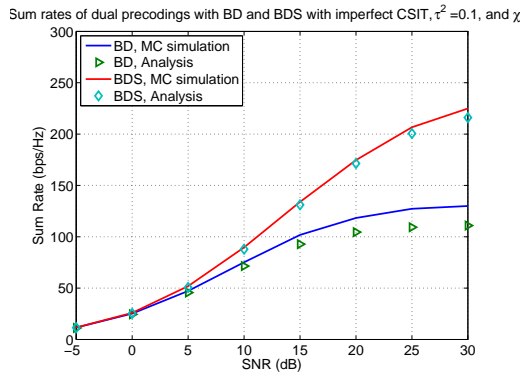}
\end{tabular}
\end{center}
\caption[Fig5_imperfectCSIT]
%>>>> use \label inside caption to get Fig. number with \ref{}
{ \label{Fig5_imperfectCSIT} Sum rates of the dual precoding with BD and BDS under the imperfect CSIT, $\tau^2 = 0.1$.}
\end{figure}

\begin{figure}%[htbp]
\centering %\hspace{-3em}
 \subfigure[]
  {\includegraphics[height=4.1cm]{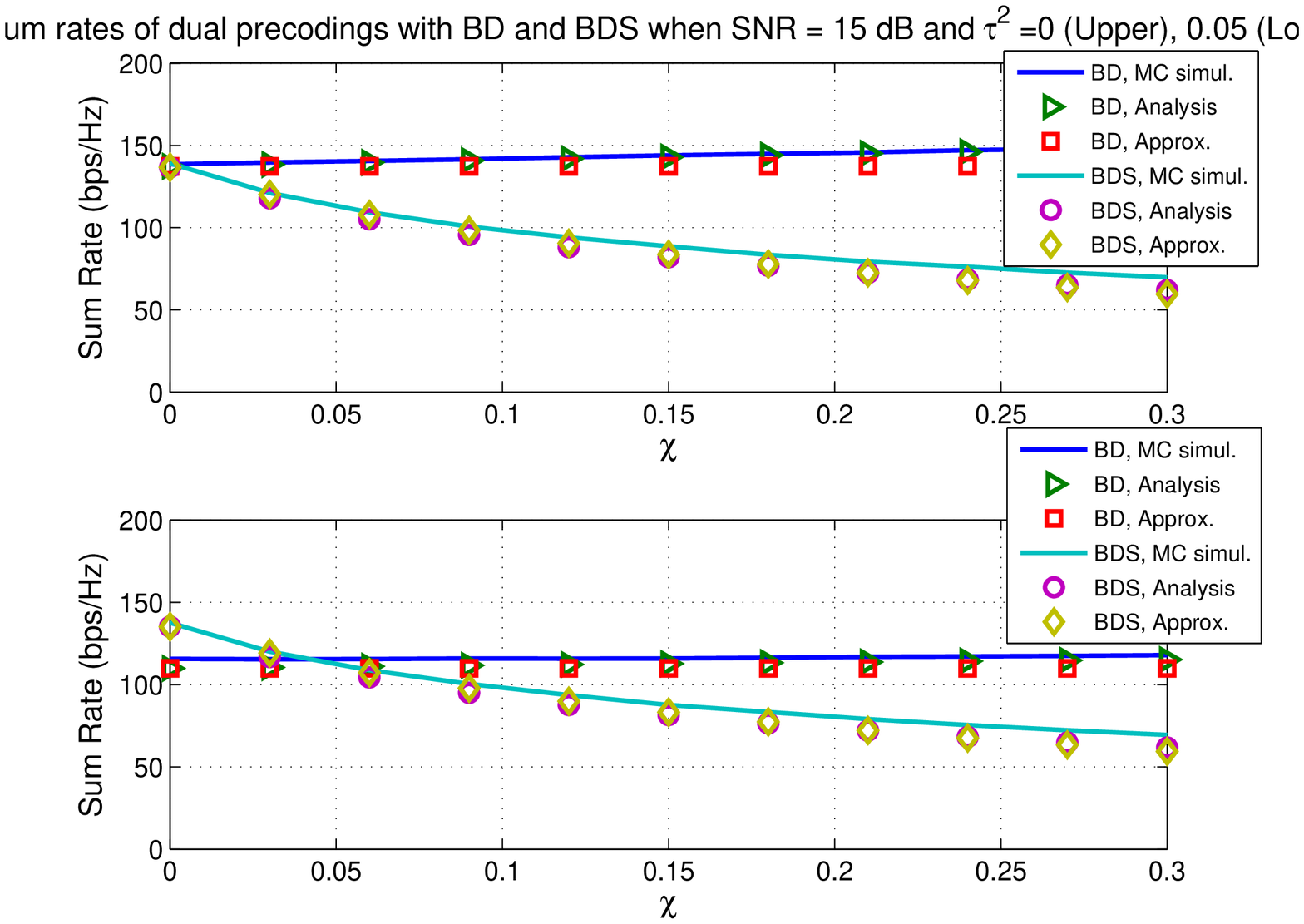}}
 \subfigure[]
  {\includegraphics[height=4.1cm]{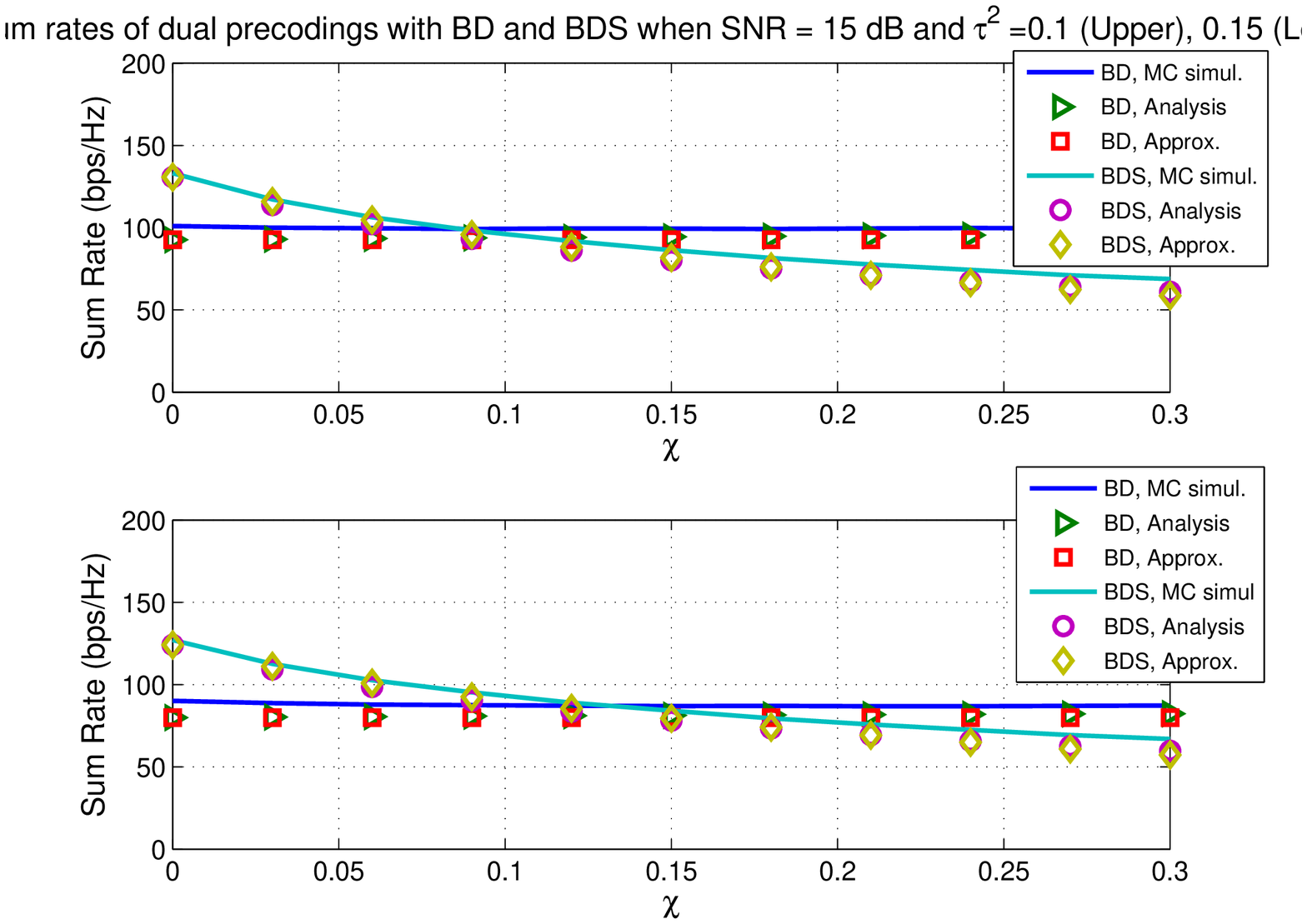}}
 \caption{\label{Fig6_XpdperfectCSIT} Sum rates of the dual precoding with BD and BDS over $\chi$ for (a) $\tau^2=\{0, 0.5\}$ (b) $\tau^2= \{ 1.0, 1.5\}$ when $SNR=15dB$.}
\end{figure}

In Fig. \ref{Fig8_N_bit}, we also compare the sum rates of BD and BDS versus the number of feedback bits per user when $\chi =\{0.1, 0.2\}$, $SNR=25$. Per Remark \ref{remark4}, when the feedback bits are not enough to describe the short-term CSIT accurately, the BDS exhibits a better performance than the BD. Here, the cross point corresponds to $N_B \approx (2r-1)\left(\log_2\left(1+E_{g,p}\left( \frac{(1+ m_{gp}^o(0))^2 -1}{ ({\xi}_{gp}^o(0))^2 \Upsilon_{ggpp}^o(0)(1+ m_{gp}^o(0))^2} \right)\right) - \log_2 \chi \right)$ in (62).

\begin{figure}
\begin{center}
\begin{tabular}{c}
\includegraphics[height=5cm]{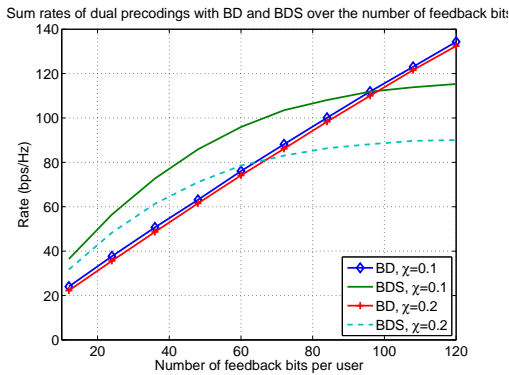}
\end{tabular}
\end{center}
\caption[Fig8_N_bit]
%>>>> use \label inside caption to get Fig. number with \ref{}
{ \label{Fig8_N_bit} Sum rates of the dual precoding with BD and BDS versus the number of feedback bits for $\chi =\{0.1, 0.2\}$.}
\end{figure}

\subsubsection{Performance comparison of the proposed dual precoding}

%To verify the performance of the proposed dual structured precoding in Section \ref{sec:newprecoding}, we have compared its sum rates with those of dual precodings with BD and BDS. In Fig. \ref{Fig9_proposed_SNR}, we set $N_B = \{50, 65\}$ for BD and BDS and $N_B = \{49, 64\}$ for the proposed dual precoding because the proposed dual precoding requires one additional bit for the information of the selected mode between BD and BDS. Here, $\chi$ is uniformly distributed on $[0,0.5]$. We can find that the sum rates for $N_B=65$ is higher than that for $N_B=50$ and the sum rates of all schemes are saturated at high SNR due to the imperfect CSIT. Note that the proposed scheme exhibits higher performance than the other two schemes, as mentioned in Section V.A.

To verify the performance of the proposed dual structured precoding in Section \ref{sec:newprecoding}, we have compared its sum rates with those of dual precodings with BD and BDS. In Fig. \ref{Fig9_proposed_SNR}, we set $N_B = \{50, 65\}$ and $\chi$ is uniformly distributed on $[0,0.5]$. We can find that the sum rates for $N_B=65$ is higher than that for $N_B=50$ and the sum rates of all schemes are saturated at high SNR due to the imperfect CSIT. Note that the proposed scheme exhibits higher performance than the other two schemes, as mentioned in Section V.A.

In Fig. \ref{Fig11_3Dproposed_XPD}, we have evaluated the 3D dual structured precoding in Section \ref{ssec:3D_BF} when $10\times 50$ uniform planar array is deployed at BS with a height of $60m$ and there are three elevation regions with $d_{gl}\in\{ 30, 60, 100\}m$, each with $4$ groups. We assume the equal power allocation over the elevation region, $SNR= 25dB$, and that $\tau^2$ is uniformly distributed on $[0,1]$. Each group has $8$ MSs and all groups have the same angular spread $\Delta = \frac{\pi}{12}$. The radius of ring of scatterer is then given as $s_{gl}=d_{gl}tan(\Delta)$ and the path loss is modeled as $P_{loss,l}= \frac{1}{1+\left(\frac{d_{gl}}{60}\right)^3} $. Then, the elevation angle spread can be computed as $tan^{-1}\left(\frac{60}{d_{gl}-s_{gl}}\right)-tan^{-1}\left(\frac{60}{d_{gl}}\right)$ (see Fig. \ref{Fig_3D_BF}). In addition, the mismatch angle for each user is randomly generated as $\theta_{gk}^{ms}\sim U[-\theta_{\max}^{ms}, \theta_{\max}^{ms}]$ with (a) $\theta_{\max}^{ms} = 0$  and (b) $\theta_{\max}^{ms} = 0.22\pi$.
We can see that the overall performances for $\theta_{\max}^{ms}=0.22{\pi}$ is outperformed by those for $\theta_{\max}^{ms}=0$ (i.e., the polarization is perfectly aligned). In addition, the performance of dual precoding with BDS is more sensitively degraded than that with the BD. For small $\chi$ the dual precoding with BDS exhibits better performance than that with BD regardless of polarization mismatch. Furthermore, the proposed dual precoding in Section \ref{sec:newprecoding} also outperforms the two other schemes. That is, by switching between BD and BDS based on the long-term CSIT parameters (spatial correlation and XPD) jointly with the number of short-term feedback bits (or, short-term CSIT quality), the performance of the dual structured precoding can be improved, regardless of the polarization mismatch. In addition, using the effective XPD parameter $\chi_{eff}$ shows better performance than the actual XPD parameter $\chi$.

\begin{figure}
\begin{center}
\begin{tabular}{c}
\includegraphics[height=4.8cm]{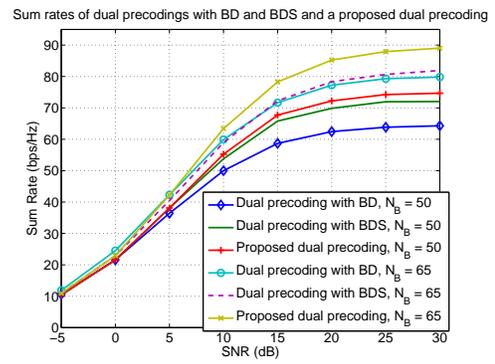}
\end{tabular}
\end{center}
\caption[Fig9_proposed_SNR]
%>>>> use \label inside caption to get Fig. number with \ref{}
{ \label{Fig9_proposed_SNR} Sum rates of dual precodings with BD and BDS and a proposed dual precoding with $N_B = \{50, 65\}$.}
\end{figure}

%\begin{figure}
%\begin{center}
%\begin{tabular}{c}
%\includegraphics[height=5cm]{Fig10_proposed_XPD}
%\end{tabular}
%\end{center}
%\caption[Fig10_proposed_XPD]
%%>>>> use \label inside caption to get Fig. number with \ref{}
%{ \label{Fig10_proposed_XPD} Sum rates of dual precodings with BD and BDS and a proposed dual precoding over $\chi$.}
%\end{figure}
%%
%
%\begin{figure}
%\begin{center}
%\begin{tabular}{c}
%\includegraphics[height=5cm]{Fig11_3Dproposed_XPD}
%\end{tabular}
%\end{center}
%\caption[Fig11_3Dproposed_XPD]
%%>>>> use \label inside caption to get Fig. number with \ref{}
%{ \label{Fig11_3Dproposed_XPD} Sum rates of the 3D dual precodings over $\chi$.}
%\end{figure}
%

\begin{figure}%[htbp]
\centering %\hspace{-3em}
 \subfigure[]
  {\includegraphics[height=4.1cm]{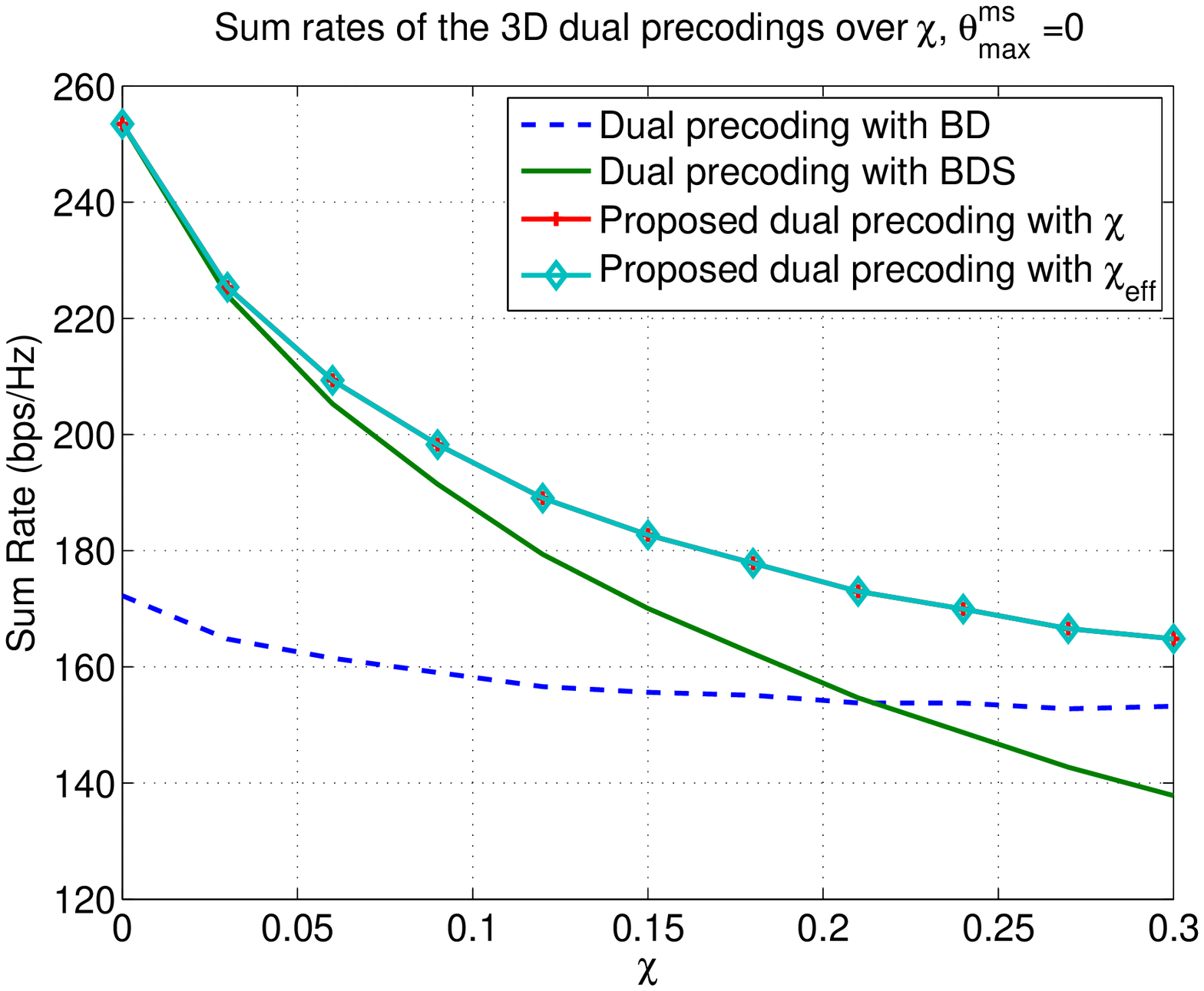}}
 \subfigure[]
  {\includegraphics[height=4.1cm]{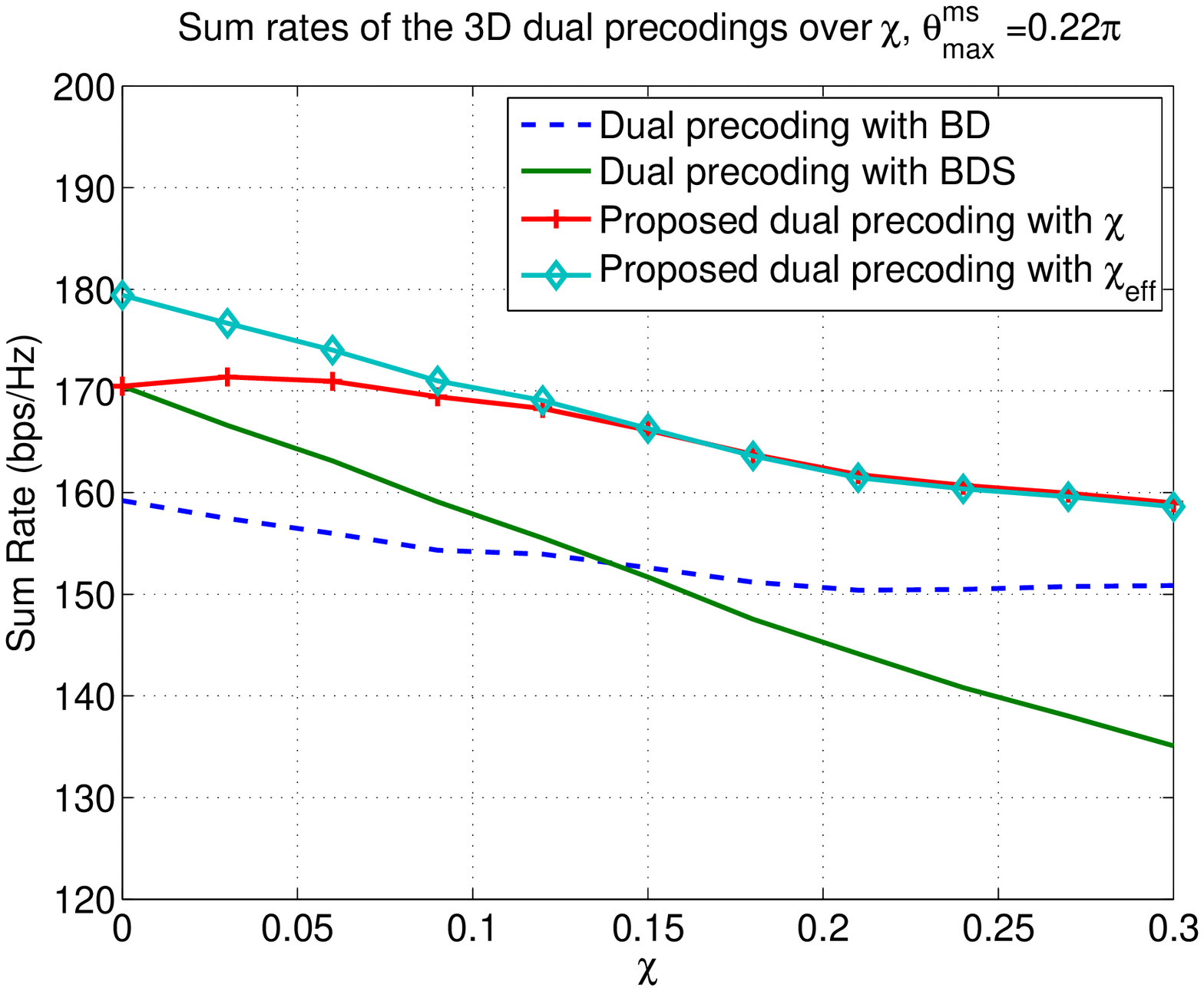}}
 \caption{\label{Fig11_3Dproposed_XPD}  Sum rates of the 3D dual precodings over $\chi$ for (a) $\theta_{\max}^{ms}=0$ (b) $\theta_{\max}^{ms}=0.22\pi$.}
\end{figure}

\section{Conclusion}
\label{sec:conc} In this paper, we have investigated the dual structured linear precoding in the multi-polarized MU Massive MIMO system. In the dual precoding with BD, MSs are grouped based only on the spatial correlation. However, in that with BDS, by subgrouping the co-polarized MSs in the spatially separated groups, we can further reduce the short-term CSI feedback overhead. Based on the random matrix theory, the system performances of dual structured precoding schemes are asymptotically analyzed. From the asymptotic results, we have found that the performance of the dual precoding with BD is insensitive to the XPD, while that of BDS is affected by the XPD parameter. Because the dual precoding with BDS can have more accurate CSIT (across half of the array) than that with BD under the same number of feedback bits, the region of the number of feedback bits where the BDS exhibits better performance than the BD is analytically derived. That is, assuming the CSI is perfectly estimated at MSs, when the number of feedback bits is not large enough to describe the short-term CSIT accurately and the XPD parameter ($\chi$) is small, the dual precoding with BDS exhibits a better performance. Finally, based on that observation, we have proposed a new dual structured precoding/feedback in which the precoding mode is switched between BD and BDS depending on the XPD, spatial correlation, and the number of short-term feedback bits (short-term CSIT quality) and extended it to 3D dual structured precoding.

\useRomanappendicesfalse
\appendices

%\renewcommand\thesection{Appendix \Alph{section}}
%
%\section{Derivation of the inequalities in (\ref{eqn5_prox}) }\label{appndix2}
%
%

\section{Proof of Theorem 2}\label{appndix1}
%based on random matrix theory results \cite{Wagner1,Hachem,HoydisDebbah}, we first derive the asymptotic SINR for two different dual precoding schemes -- dual precoding with i) BD and ii) BDS under the imperfect CSIT and dual-polarized antenna system. Note that the asymptotic inter/intra interferences are evaluated over the polarization domain as well as the spatial domain, which can encounter a more generalized channel environment with a polarization. That is, the asymptotic performance can be analyzed in terms of both the polarization and the spatial correlation Thanks to the structure of the dual polarized antenna elements, we can have a simple asymptotic SINR in Theorem \ref{thm_BD} compared to that for the case of the general antenna covariance matrices \cite{Wagner1}. This gives a useful insight into the behavior of the asymptotic SINR as a function of the polarization parameter in Section \ref{ssec:dualprecoding}

We note that our derivation is based on the derivation of the asymptotic SINR of regularized ZF precoding in spatially correlated MISO broadcasting under the imperfect CSIT \cite{Wagner1}.
 The main difference is that the asymptotic inter/intra interferences are evaluated over the polarization domain as well as the spatial domain. % and that we have the intergroup interferences due to the preprocessing and the co-polarized MSs in the same subgroup have the same spatial correlation as in (\ref{Sys_5_1}).
We first consider the normalization factor ${\xi}_g^2 $ in (\ref{dual_11}). By letting $\Psi \triangleq  \frac{P}{N} tr(\hat{\bar{{\bf H}}}_g^H \hat{\bar{{\bf K}}}_g ^H \hat{\bar{{\bf K}}}_g \hat{\bar{{\bf H}}}_g)$, ${\xi}_g^2 $ can be rewritten as
\begin{eqnarray}\label{AppI_1}
{\xi}_g^2 = \frac{P}{G}\frac{1}{ \Psi }.
\end{eqnarray}
By substituting $\hat{\bar{{\bf K}}}_g$ in (\ref{dual_10_1}), due to the matrix inversion lemma, $\Psi$ can be written as shown at the top of the page.
\begin{figure*}[!t]
\scriptsize
\begin{eqnarray}\label{AppI_2}
\Psi& =&  \frac{P}{N} \sum_{k=1}^{\bar N}\hat{{{\bf h}}}_{gk}^H {\bf B}_g  \left(\hat{\bar{{\bf H}}}_g\hat{\bar{{\bf H}}}_g^H + \bar B \alpha {\bf I}_{\bar B}  \right)^{-2} {\bf B}_g^H  \hat{{{\bf h}}}_{gk}~=~\frac{P}{N\bar B} \sum_{k=1}^{\bar N}\frac{\hat{{{\bf h}}}_{gk}^H {\bf B}_g  \left({\bf A}_{g}^{-k}  + \alpha {\bf I}_{\bar B}  \right)^{-2}{\bf B}_g^H  \hat{{{\bf h}}}_{gk} }{( 1+ \hat{{{\bf h}}}_{gk}^H {\bf B}_g  \left({\bf A}_{g}^{-k} + \alpha {\bf I}_{\bar B}  \right)^{-1} {\bf B}_g ^H \hat{{{\bf h}}}_{gk} )^2}\nonumber\\
&=&\frac{P}{N\bar B}\sum_{p\in\{v,h\}}\sum_{k=1}^{\bar N /2}\frac{\hat{{{\bf g}}}_{gk}^H{\bf R}_{gp}^{1/2} {\bf B}_g  \left({\bf A}_{g}^{-k}  + \bar B \alpha {\bf I}_{\bar B}  \right)^{-2}{\bf B}_g^H  {\bf R}_{gp}^{1/2}\hat{{{\bf g}}}_{gk} }{( 1+ \hat{{{\bf g}}}_{gk}^H {\bf R}_{gp}^{1/2}{\bf B}_g  \left({\bf A}_{g}^{-k} + \alpha {\bf I}_{\bar B}  \right)^{-1} {\bf B}_g ^H{\bf R}_{gp}^{1/2} \hat{{{\bf g}}}_{gk} )^2},
\end{eqnarray}
where ${\bf A}_{g}^{-k} =\frac{1}{\bar B} \hat{\bar{{\bf H}}}_{g}^{-k} (\hat{\bar{{\bf H}}}_{g}^{-k})^H$ and $\hat{\bar{{\bf H}}}_{g}^{-k}=[\hat{\bar{{\bf h}}}_{g1},...,\hat{\bar{{\bf h}}}_{gk-1},\hat{\bar{{\bf h}}}_{gk+1}\hat{\bar{{\bf h}}}_{g\bar N}]$.

\hrulefill \vspace*{2pt}
\end{figure*}
From Lemma 4 in \cite{Wagner1}, we have
\begin{eqnarray}\label{AppI_3}
\!\!\Psi - \frac{P}{N\bar B}\!\! \sum_{p\!\in\!\{v,h\}}\!\sum_{k=1}^{\bar N /2} \!\frac{\frac{1}{\bar B}tr({\bf B}_g ^H{\bf R}_{gp} {\bf B}_g  \left({\bf A}_{g}^{-k}  \!+\! \alpha {\bf I}_{\bar B}  \right)^{-2}  ) }{( 1\!+\! \frac{1}{\bar B}tr({\bf B}_g ^H{\bf R}_{gp}{\bf B}_g  \left({\bf A}_{g}^{-k} \!+\! \alpha {\bf I}_{\bar B}  \right)^{-1} ))^2}\!\! \nonumber\\\overset{M \to \infty}{\longrightarrow}0.\quad\quad\quad\!\!
\end{eqnarray}
Because the covariance matrix of users in the same co-polarized subgroup is equal, from Lemma 6 of \cite{Wagner1}, we have
\begin{eqnarray}\label{AppI_4}
\!\!\Psi - \frac{\bar N P}{2N\bar B}\sum_{p\in \{v,h\}}\frac{\frac{1}{\bar B}tr({\bf B}_g^H{\bf R}_{gp} {\bf B}_g  \left({\bf A}_{g}  + \alpha {\bf I}_{\bar B}  \right)^{-2}    ) }{( 1+ \frac{1}{\bar B}tr({\bf B}_g ^H{\bf R}_{gp}{\bf B}_g  \left({\bf A}_{g} + \alpha {\bf I}_{\bar B}  \right)^{-1} ))^2} \!\! \nonumber\\ \overset{M \to \infty}{\longrightarrow}0,\quad\quad\quad
\end{eqnarray}
where ${\bf A}_{g} =\frac{1}{\bar B} \hat{\bar{{\bf H}}}_{g} \hat{\bar{{\bf H}}}_{g}^H$.
Then, we define $m_{gp}(z) \triangleq \frac{1}{\bar B}tr({\bf B}_g ^H{\bf R}_{gp}{\bf B}_g \left({\bf A}_{g} - z {\bf I}_{\bar B}  \right)^{-1} )$, which is analogous to (\ref{Asym_1}) in Theorem 1 by setting ${\bf S}={\bf 0}$ and ${\bf Q} = {\bf B}_g ^H{\bf R}_{gp}{\bf B}_g $($= \bar{\bf R}_{gp}$ from (\ref{dual_9})). The trace in the denominator of (\ref{AppI_4}) is then equal to $m_{gp}(-\alpha)$ and, from Theorem 1,
\begin{eqnarray}\label{AppI_6}
\!\frac{1}{\bar B}tr({\bf B}_g ^H{\bf R}_{gp}{\bf B}_g  \left({\bf A}_{g}\! + \alpha {\bf I}_{\bar B}  \right)^{-1} ) - \frac{1}{\bar B}tr(\bar{\bf R}_{gp}{\bf T}_g) \overset{M \to \infty}{\longrightarrow}0,
\end{eqnarray}
where ${\bf T}_g$ is given by (\ref{Asym_6_1}) from (\ref{Asym_2}) and (\ref{Asym_3}). Furthermore, the trace in the numerator of (\ref{AppI_4}) is the derivative of $ m_{gp}(z)$ at $z= -\alpha$, i.e., $m_{gp}'(-\alpha)$. Therefore,
\begin{eqnarray}\label{AppI_7}
\frac{1}{\bar B}tr({\bf B}_g^H{\bf R}_{gp} {\bf B}_g  \left({\bf A}_{g}  + \alpha {\bf I}_{\bar B}  \right)^{-2} ) - \frac{1}{\bar B}tr(\bar{\bf R}_{gp}{\bf T}'_g) \overset{M \to \infty}{\longrightarrow}0,
\end{eqnarray}
where
\begin{eqnarray}\label{AppI_8}
{\bf T}'_g = {\bf T}_g\left( \frac{\bar N}{2\bar B}\sum_{p\in\{v,h\}}\frac{\bar{\bf R}_{gp} m'_{gp}}{(1+m_{gp}(-\alpha))^2}  +{\bf I}_{\bar B} \right){\bf T}_g,
\end{eqnarray}
\begin{eqnarray}\label{AppI_8_1}
m'_{gp} = \frac{1}{\bar B}tr(\bar{\bf R}_{gp} {\bf T}'_g) .
\end{eqnarray}
Then, by putting ${\bf T}'_g$ in (\ref{AppI_8}) into (\ref{AppI_8_1}), ${\bf m}'_g$ can be obtained as in (\ref{Asym_6_3}) and $\Psi$ is converged as
\begin{eqnarray}\label{AppI_9}
\Psi - \frac{P \bar N}{2N\bar B} \sum_{p\in\{v,h\}}\frac{ m'_{gp}}{( 1+ m_{gp}^o)^2} \overset{M \to \infty}{\longrightarrow}0.
\end{eqnarray}

For the signal power component $\frac{P}{N}|{\bf h}_{gk}^H{\bf B}_g  \hat{\bar{{\bf K}}}_g \hat{\bar{{\bf h}}}_{gk} |^2 $ and the intra-group interference component $\!\frac{P}{N}\sum_{\!j\!\neq\! k\!}|{\bf h}_{gk}^H{\bf B}_g  \hat{\bar{{\bf K}}}_g \hat{\bar{{\bf h}}}_{gj} |^2\!$ in (\ref{dual_12}), because ${\bf G}_g$ and ${\bf Z}_g$ in (\ref{Sys_7}) are independent, by taking a similar approach described in (\ref{AppI_2}) and (\ref{AppI_3}) (See also Appendix II.B and C in \cite{Wagner1}), we can have the following relations:
\begin{eqnarray}\label{AppI_10}
|{\bf h}_{gk}^H{\bf B}_g  \hat{\bar{{\bf K}}}_g \hat{\bar{{\bf h}}}_{gk} |^2 -  \frac{  (1-\tau^2)(m_{gp}^o)^2 }{( 1+ m_{gp}^o)^2} \overset{M \to \infty}{\longrightarrow}0,
\end{eqnarray}
\begin{eqnarray}\label{AppI_11}
\sum_{j\neq k}|{\bf h}_{gk}^H{\bf B}_g  \hat{\bar{{\bf K}}}_g \hat{\bar{{\bf h}}}_{gj} |^2\! - \!\frac{\Upsilon_{ggp}^{k}  (1\!- \! \tau^2(1\!-\!(1\!+\!m_{gp}^o)^2)) }{( 1\!+\! m_{gp}^o)^2} \overset{M \to \infty}{\longrightarrow}0,\!\!
\end{eqnarray}
where
\begin{eqnarray}\label{AppI_12}
\Upsilon_{ggp}^{k} = \frac{1}{\bar B}\sum_{j\neq k} \hat{\bf g}_{gj}^H{\bf R}_{gp}^{1/2}{\bf B}_g  \hat{\bar{{\bf K}}}_g {\bf B}_g^H {\bf R}_{gp}{\bf B}_g \hat{\bar{{\bf K}}}_g {\bf B}_g^H {\bf R}_{gp}^{1/2}\hat{\bf g}_{gj}\nonumber\\
+\frac{1}{\bar B}\sum_{j=1}^{\bar N/2} \hat{\bf g}_{gj}^H{\bf R}_{gq}^{1/2}{\bf B}_g  \hat{\bar{{\bf K}}}_g {\bf B}_g^H {\bf R}_{gp}{\bf B}_g \hat{\bar{{\bf K}}}_g {\bf B}_g^H {\bf R}_{gq}^{1/2}\hat{\bf g}_{gj}.
\end{eqnarray}
Note that by taking a similar approach described in (\ref{AppI_2}) and (\ref{AppI_3}),
\begin{eqnarray}\label{AppI_13}
\Upsilon_{ggp}^{k} - \frac{\bar N/2-1}{\bar B}\frac{\frac{1}{\bar B}tr (\bar{\bf R}_{gp} ({\bf A}_g +\alpha {\bf I}_{\bar B} )^{-1} \bar{\bf R}_{gp} ({\bf A}_g +\alpha {\bf I}_{\bar B} )^{-1}) }{(1+\frac{1}{\bar B} tr (\bar{\bf R}_{gp} ({\bf A}_g +\alpha {\bf I}_{\bar B} )^{-1}))^2 } \nonumber\\ -\frac{\bar N}{2\bar B}\frac{\frac{1}{\bar B}tr (\bar{\bf R}_{gq} ({\bf A}_g \!+\!\alpha {\bf I}_{\bar B} )^{-1} \bar{\bf R}_{gp} ({\bf A}_g +\alpha {\bf I}_{\bar B} )^{-1}) }{(1+\frac{1}{\bar B} tr (\bar{\bf R}_{gq} ({\bf A}_g \!+\!\alpha {\bf I}_{\bar B} )^{-1}))^2 }\overset{M \to \infty}{\longrightarrow}0.\!\!
\end{eqnarray}
By defining $m_{ggpq}(z) \triangleq \frac{1}{\bar B}tr(\bar{\bf R}_{gq} \left({\bf A}_{g} +\alpha {\bf I}_{\bar B}- z \bar{\bf R}_{gp}  \right)^{-1} )$, the trace in the numerator of (\ref{AppI_13}) is the derivative of $m_{ggpq}(z)$ at $z= 0$. Furthermore, by using Theorem 1,
\begin{eqnarray}\label{AppI_14}
m_{ggpq}(z) - \frac{1}{\bar B}tr(\bar{\bf R}_{gq}{\bf T}_{gp}(z)) \overset{M \to \infty}{\longrightarrow}0,
\end{eqnarray}
where ${\bf T}_{gp}(z) = (\frac{\bar N}{2\bar B}\sum_{q\in\{v,h\}} \frac{\bar{\bf R}_{gq}}{1+m_{ggqp}(z)} + \alpha{\bf I}_{\bar B} - z\bar{\bf R}_{gp})^{-1}$. Accordingly, we have
\begin{eqnarray}\label{AppI_16}
\!{\bf T}'_{gp}(z) \!= \!{\bf T}_{gp}(z)(\frac{\bar N}{2\bar B} \!\sum_{q\in\{v,h\}} \frac{\bar{\bf R}_{gq} m'_{ggqp}(z)}{(1\!+\!m_{ggqp}(z))^2}  \!+ \!\bar{\bf R}_{gp}){\bf T}_{gp}(z).\!
\end{eqnarray}
Note that ${\bf T}_{gp}(0) = {\bf T}_{g}$ in (\ref{Asym_6_1}) and from (\ref{AppI_14}), $ m_{ggpq}(0) -m_{gq}^o \overset{M \to \infty}{\longrightarrow}0$. Together with $m'_{ggqp}(z) - \frac{1}{\bar B}tr(\bar{\bf R}_{gq}{\bf T}'_{gp}(z)) \overset{M \to \infty}{\longrightarrow}0$, ${\bf m}'_{ggp}$ can be obtained as (\ref{Asym_6_3}). Accordingly,
\begin{eqnarray}\label{AppI_17}
\Upsilon_{ggp}^{k}  -\Upsilon_{ggp}^{o} \overset{M \to \infty}{\longrightarrow}0,
\end{eqnarray}
where $\Upsilon_{ggp}^{o}$ is defined in (\ref{Asym_6_2}).

Now let us consider the inter-group interference component $\Upsilon_{glp}^{k} \triangleq \sum_{j} |{\bf h}_{gk}^H{\bf B}_l  \hat{\bar{{\bf K}}}_l {\bf B}_l^H \hat{{\bf h}}_{lj} |^2$ in the denominator of (\ref{dual_12}) for $l\neq g$. Then, $\Upsilon_{glp}^{k}$ can be rewritten as
\begin{eqnarray}\label{AppI_18}
\!\!\!\Upsilon_{glp}^{k}\! =\!  \frac{1}{\bar B}\!\sum_{j\!=\!1}^{\bar N/2} \!\sum_{q\!\in\!\{v,h\}}\hat{\bf g}_{lj}^H{\bf R}_{lq}^{1/2}{\bf B}_l  \hat{\bar{{\bf K}}}_l {\bf B}_l^H {\bf R}_{gp}{\bf B}_l \hat{\bar{{\bf K}}}_l {\bf B}_l^H {\bf R}_{lq}^{1/2}\hat{\bf g}_{lj}.\!\!
\end{eqnarray}
Note that (\ref{AppI_18}) has a similar form with (\ref{AppI_12}). Therefore, similarly done in (\ref{AppI_13}), we have
\begin{eqnarray}\label{AppI_19}
\!\!\Upsilon_{glp}^{k} \!- \!\frac{\bar N}{2\bar B}\!\!\sum_{q\!\in\!\{v,h\}}\!\!\frac{\frac{1}{\bar B}\!tr (\bar{\bf R}_{lq} ({\bf A}_l \!+\!\alpha {\bf I}_{\bar B}\! )^{-1}\!  {\bf B}_l^H {\bf R}_{gp}{\bf B}_l \!({\bf A}_l\! +\!\alpha {\bf I}_{\bar B}\! )^{-1}\!) }{(1v+\!\frac{1}{\bar B} tr (\!\bar{\bf R}_{lq} ({\bf A}_l \!+\!\alpha {\bf I}_{\bar B} )^{-1}\!)\!)^2 }\!\!\!\nonumber\\\!\! \overset{M \to \infty}{\longrightarrow}0.\quad\quad\quad\!\!\!
\end{eqnarray}
Furthermore, by following the steps similarly done in (\ref{AppI_14})-(\ref{AppI_16}), we can obtain
\begin{eqnarray}\label{AppI_20}
\Upsilon_{gl}^{k}  -\Upsilon_{gl}^{o} \overset{M \to \infty}{\longrightarrow}0,
\end{eqnarray}
where $\Upsilon_{glp}^{o}$ is defined in (\ref{Asym_6_2}).
By substituting (\ref{AppI_9}), (\ref{AppI_10}), (\ref{AppI_11}), (\ref{AppI_17}), and (\ref{AppI_20}) into (\ref{dual_12}), we can have (\ref{Asym_4}).

\section{Proof of Proposition 1}\label{appndix2}

Let ${\bf R}_g (\chi) = (1+ \chi)\left[\begin{array}{cc}{\bf R}_g^s & {\bf 0}\\{\bf 0} &{\bf R}_g^s \end{array}\right]$ from (\ref{Sys_5_1}). Then, we have ${\bf R}_g (\chi) = (1+ \chi){\bf R}_g (0)$ (respectively, $\bar{\bf R}_g (\chi) = (1+ \chi)\bar{\bf R}_g (0)$) and, from Corollary \ref{cor_BD}, $\gamma_{gpk}^{BD,o}(\chi)$ can be evaluated by using (\ref{Asym_8}) with $\frac{1}{2}\bar{\bf R}_g (\chi)$.
From (\ref{Asym_9_1}), we define
\begin{eqnarray}\label{Asym_16}
{\bf T}_g(\chi) &=&  \left(\frac{\bar N}{2 \bar B}\frac{\bar{\bf R}_g(\chi)}{1+m_g^o(\chi)} +\alpha{\bf I}_{\bar B} \right)^{-1},\\ \label{Asym_16_2}
m_g^o(\chi) &=& \frac{1}{2\bar B}tr(\bar{\bf R}_g(\chi) {\bf T}_g(\chi)).
\end{eqnarray}
First, let us consider the high SNR regime (i.e., small $\alpha$). Because
\begin{eqnarray}\label{Asym_16_1}
tr({\bf R}(\beta {\bf R} + \alpha {\bf I})^{-1}) = \frac{1}{\beta}tr({\bf R}({\bf R} + \alpha/{\beta} {\bf I})^{-1})\nonumber\\
\approx \frac{1}{\beta}tr({\bf R}\left({\bf R} + \alpha {\bf I} \right)^{-1}),
\end{eqnarray}
for any Hermitian nonnegative definite matrix ${\bf R}$ and small $\alpha$, by substituting (\ref{Asym_16}) into (\ref{Asym_16_2}), we have
\begin{eqnarray}\label{Asym_17}
\!\!m_g^o(\chi)\! \!&\!\!\!=\!\!\!& \!\! \frac{1}{2\bar B}tr((1\!+\!\chi)\bar{\bf R}_g(0)\left(\frac{\bar N}{2 \bar B}\frac{(1\!+\!\chi)\bar{\bf R}_g(0)}{1+m_g^o(\chi)} \!+\!\alpha{\bf I}_{\bar B} \right)^{-1}),\!\!\nonumber\\\!\!\!&\!\!\!\approx\!\!\!&  \frac{1}{2\bar B}tr(\bar{\bf R}_g(0)\left(\frac{\bar N}{2 \bar B}\frac{\bar{\bf R}_g(0)}{1+m_g^o(\chi)} +\alpha{\bf I}_{\bar B} \right)^{-1}),\!\!
\end{eqnarray}
which implies that
\begin{eqnarray}\label{Asym_17_1}
m_g^o(\chi) = m_g^o(0).
\end{eqnarray}
Furthermore, by substituting (\ref{Asym_17_1}) into (\ref{Asym_16}),
\begin{eqnarray}\label{Asym_18}
{\bf T}_g(\chi) &\!=\!&  \frac{1}{1+\chi } \left(\frac{\bar N}{2 \bar B}\frac{\bar{\bf R}_g(0)}{1+m_g^o(0)} +\frac{\alpha}{1+\chi}{\bf I}_{\bar B} \right)^{-1}.\!
\end{eqnarray}
By using (\ref{Asym_16_1}), (\ref{Asym_17_1}) and (\ref{Asym_18}), we can also derive
\begin{eqnarray}\label{Asym_19}
\!\!m'_g (\chi ) \!\approx\! \frac{1}{1\!+\! \chi}  m'_g (0),~m'_{gg}(\chi)\! \approx \!m'_{gg}(0),~m'_{gl}(\chi)\! \approx\! m'_{gl}(0),\!
\end{eqnarray}
which implies that
\begin{eqnarray}\label{Asym_20}
\!\!\Psi_g^o(\chi) \!\approx\! \frac{1}{1\!+\!\chi}\Psi_g^o(0),~ \Upsilon_{gg}^o(\chi) \!\approx \!\Upsilon_{gg}^o(0),\nonumber\\\Upsilon_{gl}^o(\chi) \!\approx \!\Upsilon_{gl}^o(0),~{\xi}_g^o(\chi) \!\approx\!(1\!+\!\chi) {\xi}_g^o(0).\!\!
\end{eqnarray}
Then, based on (\ref{Asym_17_1}) and (\ref{Asym_20}) together with (\ref{Asym_8}), $
\gamma_{gpk}^{BD,o}(\chi) \approx \gamma_{gpk}^{BD,o}(0)$.

For low SNR regime (i.e., large $\alpha$), in (\ref{Asym_16}), ${\bf T}_g(\chi) \approx \frac{1}{\alpha}{\bf I}_{\bar B}$ and accordingly, $m_g^o(\chi) \approx (1+\chi)m_g^o(0) \ll 1$. Furthermore,
\begin{eqnarray}\label{Asym_22}
\!\!m'_g (\chi ) \!\approx\! (1\!+ \!\chi)  m'_g (0),~ m'_{gg}(\chi) \!\approx\! (1\!+ \!\chi)^2m'_{gg}(0),\nonumber\\ m'_{gl}(\chi) \!\approx\! (1\!+\! \chi)^2m'_{gl}(0),\!\!
\end{eqnarray}
and $ \Psi_g^o(\chi) \approx (1+\chi)\Psi_g^o(0)$, $\Upsilon_{gg}^o(\chi) \approx  (1+ \chi)^2\Upsilon_{gg}^o(0)$, $\Upsilon_{gl}^o(\chi) \approx  (1+ \chi)^2\Upsilon_{gl}^o(0)$, and ${\xi}_g^o(\chi) \approx \frac{1}{1+\chi} {\xi}_g^o(0)$.
Accordingly, in the low SNR regime, we can have that $\gamma_{gpk}^{BD,o}(\chi) \approx \gamma_{gpk}^{BD,o}(0)$.
Accordingly, in the low SNR regime, we can have that $\gamma_{gpk}^{BD,o}(\chi) \approx \gamma_{gpk}^{BD,o}(0)$.

\section{Proof of Proposition 2}\label{appndix3}
From (\ref{Sys_5_1}) and (\ref{dual_15}), we have $\bar{\bf R}_{gv}(\chi) = \bar{\bf R}_{gh}(\chi)= ({\bf B}_g^s)^H {\bf R}_g^s {\bf B}_g^s$.
That is, $\bar{\bf R}_{gv}(\chi)$ and $\bar{\bf R}_{gh}(\chi)$ are independent of $\chi$ and expressed as $\bar{\bf R}_{gp}(\chi)= \bar{\bf R}_{gp}(0)$. Accordingly, from (\ref{Asym_13_1}) and (\ref{Asym_14_1}),
\begin{eqnarray}\label{Asym_24}
m_{gp}^o(\chi) = m_{gp}^o(0),\quad {\bf T}_{gp}(\chi) = {\bf T}_{gp}(0), \nonumber \\ m'_{gp}(\chi)= m'_{gp}(0), \quad m'_{ggpp}(\chi)= m'_{ggpp}(0).
\end{eqnarray}
Because ${\bf B}_{lq}^H {\bf R}_{gp}(\chi){\bf B}_{lq} = \left\{\begin{array}{c} {\bf B}_{lq}^H {\bf R}_{gp}(0){\bf B}_{lq} {\text{ for }} q=p \\  \chi {\bf B}_{lq}^H {\bf R}_{gq}(0){\bf B}_{lq} {\text{ for }} q\neq p \end{array}    \right.$,
from (\ref{Asym_13_1}), $m'_{glpq}(\chi) = \left\{\begin{array}{c} m'_{glpq}(0) {\text{ for }} q=p \\  \chi m'_{glqq}(0) {\text{ for }} q\neq p \end{array}    \right. .$
Therefore, from (\ref{Asym_12}),
\begin{eqnarray}\label{Asym_27}
\gamma_{gpk}^{BDS,o}(\chi) =  \frac{\frac{P}{N}({\xi}_{gp}^o)^2 (1-\tau^2) (m_{gp}^o)^2}{IN_{gpk}^{o}(\chi)},
\end{eqnarray}
where $IN_{gpk}^{o}(\chi) $ is given as shown at the top of the page.
\begin{figure*}[!t]
\scriptsize
\begin{eqnarray}\label{Asym_28}
&IN_{gpk}^{o}(\chi) =  ({\xi}_{gp}^o(0))^2 \Upsilon_{ggpp}^o(0) (1-\tau^2(1-(1+ m_{gp}^o(0))^2))\nonumber &\\
&+(1+ \chi({\xi}_{gp}^o(0))^2\Upsilon_{ggpp}^o(0)  +\sum_{l \neq g}(1+\chi)({\xi}_{lp}^o(0))^2\Upsilon_{glpp}^o(0) )(1+ m_{gp}^o(0))^2,&\nonumber\\
&=IN_{gpk}^{o}(0)  +\chi( ({\xi}_{gp}^o(0))^2\Upsilon_{ggpp}^o(0)  +\sum_{l \neq g}({\xi}_{lp}^o(0))^2\Upsilon_{glpp}^o(0) )(1+ m_{gp}^o(0))^2.&
\end{eqnarray}
\hrulefill \vspace*{2pt}
\end{figure*}

Because of the preprocessing with BD, $\Upsilon_{ggpp}^o(0) \gg \Upsilon_{glpp}^o(0)$ for $g \neq l$ and $IN_{gpk}^{o}(\chi) \approx IN_{gpk}^{o}(0)(1+ c_{0,gpk} \chi)$,
where $ c_{0,gpk} = \frac{({\xi}_{gp}^o(0))^2\Upsilon_{ggpp}^o(0) (1+ m_{gp}^o(0))^2}{({\xi}_{gp}^o(0))^2\Upsilon_{ggpp}^o(0)(\tau^2((1+ m_{gp}^o(0))^2-1) +1)  +(1+ m_{gp}^o(0))^2}$. By averaging $c_{0,gpk}$ over $g$, $p$, we can have (\ref{Asym_22_1}).

%\renewcommand\baselinestretch{0.9}
%\section*{Acknowledgments}
%% optional entry into table of contents (if used)
%%\addcontentsline{toc}{section}{Acknowledgment}
%This research was supported by the KCC (Korea Communications
%Commission), Korea, under the R\&D program supervised by the KCA
%(Korea Communications Agency) (KCA-2011- 10921-01305).

% References should be produced using the bibtex program from suitable
% BiBTeX files (here: strings, refs, manuals). The IEEEbib.bst bibliography
% style file from IEEE produces unsorted bibliography list.
% -------------------------------------------------------------------------
%\renewcommand\baselinestretch{1.2}
\bibliographystyle{IEEEtran}
\bibliography{IEEEabrv,myref}

\end{document}